\newcommand{\rel}{\sqsubseteq}
\newcommand{\ler}{\sqsupseteq}
\newcommand{\iiff}{\leftrightarrow}
\newcommand{\ineg}{\neg}
\newcommand{\imp}{\to}
\newcommand{\lanfull}{{\mathcal L}_{\ps\nec}}
\newcommand{\seq}{\succcurlyeq}
\newcommand{\lgt}[1]{\|#1\|}
\newcommand{\nec}{\Box}
\newcommand{\ps}{\Diamond}
\newcommand{\peq}{\preccurlyeq}
\def\<{\left (}
\def\>{\right )}
\def\({\left (}
\def\){\right )}
\DeclareSymbolFont{AMSb}{U}{msb}{m}{n}
\DeclareMathSymbol{\N}{\mathbin}{AMSb}{"4E}
\DeclareMathSymbol{\Z}{\mathbin}{AMSb}{"5A}
\DeclareMathSymbol{\R}{\mathbin}{AMSb}{"52}
\DeclareMathSymbol{\Q}{\mathbin}{AMSb}{"51}
\DeclareMathSymbol{\I}{\mathbin}{AMSb}{"49}
\newcommand{\fallible}[1]{#1^\bot}
\newtheorem{thm}{Theorem}[section]
\newtheorem{defn}[thm]{Definition}
\newtheorem{lem}[thm]{Lemma}
\newtheorem{prop}[thm]{Proposition}
\newtheorem{remark}[thm]{Remark}
\newcommand{\mlabel}[2]{#2\def\@currentlabel{#2}\label{#1}}
\newcommand{\eqdef}{\ensuremath{\mathbin{\raisebox{-1pt}[-3pt][0pt]{$\stackrel{\mathit{def}}{=}$}}}}
\begin{document}

\title{Some constructive variants of S4 with the finite model property}

\author{\IEEEauthorblockN{Philippe Balbiani\IEEEauthorrefmark{1},
Mart\'in Di\'eguez\IEEEauthorrefmark{2} and
David Fern\'andez-Duque\IEEEauthorrefmark{3}}
\IEEEauthorblockA{\IEEEauthorrefmark{1}IRIT, Toulouse University, Toulouse, France\\ Email: philippe.balbiani@irit.fr}
\IEEEauthorblockA{\IEEEauthorrefmark{2}LERIA, University of Angers, Angers, France\\
Email: martin.dieguezlodeiro@univ-angers.fr}
\IEEEauthorblockA{\IEEEauthorrefmark{3}Department of Mathematics, Ghent University, Ghent, Belgium\\ Email: david.fernandezduque@UGent.be}}

\IEEEoverridecommandlockouts
\IEEEpubid{\makebox[\columnwidth]{978-1-6654-4895-6/21/\$31.00~\copyright2021 IEEE \hfill} \hspace{\columnsep}\makebox[\columnwidth]{ }}

\maketitle
\begin{abstract}
The logics $\sf CS4$ and $\sf IS4$ are intuitionistic variants of the modal logic $\sf S4$.
Whether the finite model property holds for each of these logics has been a long-standing open problem. In this paper we introduce two logics closely related to $\sf IS4$: $\sf GS4$, obtained by adding the G\"odel–Dummett axiom to $\sf IS4$, and $\sf S4I$, obtained by reversing the roles of the modal and intuitionistic relations. We then prove that $\sf CS4$, $\sf GS4$, and $\sf S4I$ all enjoy the finite model property.
\end{abstract}

\section{Introduction}

When extending intuitionistic logic with modal operators, the absence of the {\em excluded middle axiom} raises several variants of the distributivity axiom $\bf K$ leading to, mainly, two different approaches for intuitionistic-based modal logics: {\em intuitionistic} and {\em constructive} modal logics.
{\em Intuitionistic modal logics} have been studied by Plotkin and Stirling~\cite{PlotkinS86},  Fischer Servi~\cite{servi1977modal,servi1984axiomatizations} and Simpson~\cite{Simpson94}, whose aim is to define analogous of classical modalities but from an intuitionistic point of view. Within the intuitionistic settings, the $\nec$ operator (resp. $\ps$) imitates the behaviour of $\forall$ (resp. $\exists$) in first-order intuitionistic logic.

On the other hand, {\em constructive modal logics}~\cite{Bellin2001ExtendedCC} are motivated by their applications to computer science, such as the Curry–Howard correspondence~\cite{BiermanP00},  type systems for staged computation~\cite{10.1145/237721.237788} and distributed computation~\cite{moody2003:modalbasis} or even hardware verification~\cite{FAIRTLOUGH19971}. Inspired by the previous work of Fitch~\cite{fitch1948intuitionistic} and Wijesekera~\cite{wijesekera1990constructive}, those logics provide specific semantics for the $\ps$ operator. The main characteristic of this class of logics is that the addition of the Excluded Middle does not yield classical modal logic $\sf K$.

There are (at least) three prominent contstructive variants of $\sf S4$ in the literature.
This paper concerns two of them, known as $\sf IS4$ \cite{Simpson94} and $\sf CS4$ \cite{AlechinaMPR01}.
Both logics enjoy a natural axiomatization and are sound and complete for a class of Kripke structures based on two preorders, one preorder $\peq$ for the intuitionistic implication, and another preorder $\sqsubseteq$ for the modal $\nec$.
However, the finite model properties for both of these logics have remained open for twenty years or more, in the case of $\sf CS4$ since at least 2001 \cite{AlechinaMPR01}, and of $\sf IS4$ at least since 1994 \cite{Simpson94}.
In the case of $\sf IS4$, it is not even known if the validity problem is decidable.
The third is the logic ${\sf IntS4}$ studied by Wolter and Zakharyaschev, along with related variants of other modal logics~\cite{Wolter1997,Wolter1999}.
In contrast to $\sf CS4$ and $\sf IS4$, ${\sf IntS4}$ is known to enjoy the \emph{finite model property} (FMP), although the semantics are quite different, for example employing a different binary relation for each of $\ps,\nec$.

In this paper we settle the first question and prove that indeed $\sf CS4$ has the FMP. 
We also introduce two mild variants of $\sf IS4$ and show that they both enjoy the FMP.
The first logic, $\sf GS4$, is defined over a subclass of the class of $\sf IS4$ models where the intuitionistic relation is {\em locally linear,} so that it satisfies the G\"odel-Dummett axiom $(\varphi\to \psi)\vee(\psi \to \varphi)$.
The second logic, $\sf S4I$, is defined over the same class of models as $\sf IS4$, except that the roles of the intuitionistic and the modal preorders are interchanged.
The frame conditions for $\sf S4I$ are natural from a technical point of view, as they are the minimal conditions required so that both modalities $\ps$ and $\nec$ can be evaluated `classically'.

The key insight of our proof technique is that all of these logics enjoy the {\em shallow model property,} meaning that any non-valid formula $\varphi$ may be falsified in a model where the length of any $\prec$-chain is bounded (as usual, $w\prec v$ means that $w\peq v$ but $v\not\peq w$).%
\footnote{The term {\em shallow model} has been used in a similar way in the context of classical modal logic \cite{SP08}.}
While shallow models may in principle be infinite, their quotients modulo bisimulation with respect to $\peq$ are always finite.

Thus the problem of showing the finite model property is reduced to that of proving the shallow model property.
First, for each $\Lambda\in \{{\sf CS4}, {\sf GS4}, {\sf S4I}\}$, we construct a canonical model $\mathcal M^\Lambda_c = (W_c,\peq_c,\sqsubseteq_c,V_c)$ using fairly standard techniques as found in e.g.~\cite{Simpson94,AlechinaMPR01}.
Based on this canonical model, we fix a finite set of formulas $\Sigma$ and construct a shallow model $\mathcal M^\Lambda_\Sigma = (W_\Sigma,\peq_\Sigma,\sqsubseteq_\Sigma,V_\Sigma)$.
The details of the construction vary for each of the three logics we consider, but with the general theme that $w\prec_\Sigma v$ may only hold if there is some $\varphi\in \Sigma$ which holds on $v$ but not on $w$.
Having placed this restriction, it is readily seen that any chain
\[w_0 \prec_\Sigma w_1 \prec_\Sigma  \ldots \prec_\Sigma w_n\]
is witnessed by distinct formulas $\varphi_0,\ldots,\varphi_{n-1}$ of $\Sigma$ with the property that $\varphi_i$ holds on $w_{i+1}$ but not on $w_n$.
It follows that the length of the chain is bounded by $|\Sigma|+ 1$.

\paragraph*{Layout}
The layout of the paper is as follows.
In Section~\ref{SecBasic} we present the syntax, semantics and deductive calculi of the logics studied in this paper. 
In Section~\ref{secSound} we consider the soundness of the axiomatic systems we will study, while the corresponding completeness proofs are presented along sections~\ref{secCompCS4},\ref{secCompS4I} and~\ref{secCompGS4}.

The second part of this paper is devoted to the proof of the FMP of the constructive ${\sf S4}$ variants studied in this paper. To do so, we introduce $\Sigma$-bisimulations in Section~\ref{secSBisim} and Shallow models in Section~\ref{sShallow}. Those concepts are a central ingredient of the FMP proofs presented along sections~\ref{secFMPCS4},\ref{secGS4fin} and~\ref{secFMPS4I}.

We finish this paper with the conclusions and potential future lines of research.

\section{Syntax and Semantics}\label{SecBasic}

In this section we will introduce the various intuitionistic or intermediate semantics for modal logic we will be interested in.
Fix a countably infinite set $\mathbb P$ of propositional variables. Then, the {\em full (intuitionistic modal) language} $\mathcal L = \lanfull$ is defined by the grammar (in Backus-Naur form)
\[\varphi,\psi := \   p \  | \   \bot  \ |  \ \left(\varphi\wedge\psi\right) \  |  \ \left(\varphi\vee\psi\right)  \ |  \ \left(\varphi\imp \psi\right)    \  | \  \ps\varphi \  |  \ \nec\varphi   \]
where $p\in \mathbb P$.
We also use $\ineg\varphi$ as a shorthand for $\varphi\imp \bot$ and $\varphi\iiff \psi$ as a shorthand for $(\varphi\imp \psi) \wedge (\psi\imp\varphi)$.

Denote the set of subformulas of $\varphi\in \mathcal L$ by ${\mathrm{sub}}(\varphi)$, and its size by $\# {\mathrm{sub}}(\varphi)$ or $\lgt\varphi$.

\subsection{Deductive Calculi}

Next we define the deductive calculi we are interested in.

\begin{defn}%\comment{M:I enumerated the axioms with (manual) labels to use them in the proofs.}
We define the logic $\sf CS4$ by adding to the set of all intuitionistic tautologies the following axioms and rules

\begin{enumerate}[itemsep=0pt]
%\item All intuitionistic tautologies.
\item[\mlabel{ax:k:box}{\ensuremath{\bm{ \mathrm K_{\nec}}}}] $\nec (\varphi \to \psi) \to(\nec \varphi \to \nec \psi)$
\item[\mlabel{ax:k:dia}{\ensuremath{\bm{\mathrm K_{\ps}}}}] $\nec (\varphi \to \psi) \to( \ps \varphi \to \ps \psi)$
\end{enumerate}
%\begin{multicols}{2}
%\begin{enumerate}[itemsep=0pt]
%\item[\mlabel{rl:mp}{\ensuremath{\bm{\mathrm{MP}}}}] $\dfrac{\varphi \to \psi \hspace{10pt} \varphi}{\psi}$
%\item[\mlabel{rl:nec}{\ensuremath{\bm{\mathrm{Nec}}}}] $\dfrac{\varphi}{\nec\varphi}$
%\end{enumerate}
%\end{multicols}
\begin{multicols}{3}
\begin{enumerate}[itemsep=2pt]
\item[\mlabel{rl:mp}{\ensuremath{\bm{\mathrm{MP}}}}] $\dfrac{\varphi \to \psi \hspace{10pt} \varphi}{\psi}$
\item[\mlabel{rl:nec}{\ensuremath{\bm{\mathrm{Nec}}}}] $\dfrac{\varphi}{\nec\varphi}$
\item[\mlabel{ax:ref:box}{\ensuremath{\bm{\mathrm T_{\nec}}}}] $\nec\varphi \to \varphi$
\item[\mlabel{ax:ref:dia}{\ensuremath{\bm{\mathrm T_{\ps}}}}] $\varphi \to \ps \varphi$
\item[\mlabel{ax:trans:box}{\ensuremath{\bm{\mathrm 4_{\nec}}}}] $\nec\varphi \to \nec \nec \varphi$
\item[\mlabel{ax:trans:dia}{\ensuremath{\bm{\mathrm 4_{\ps}}}}] $\ps \ps \varphi \to \ps \varphi$
\end{enumerate}
\end{multicols}

\noindent We define the additional axioms

%\begin{multicols}{2}

\begin{multicols}{2}
\begin{itemize}
\item[\mlabel{ax:dp}{\ensuremath{\bm{\mathrm{DP}}}}] $\ps(\varphi \vee \psi) \to \ps \varphi \vee \ps \psi$;
\item[\mlabel{ax:g}{\ensuremath{\bm{\mathrm{GD}}}}] $(\varphi \to \psi) \vee (\psi \to \varphi)$;
\item[\mlabel{ax:cd}{\ensuremath{\bm{\mathrm{CD}}}}] $\nec(\varphi \vee \psi) \to \nec \varphi \vee \ps \psi$;
\item[\mlabel{ax:null}{\ensuremath{\bm{\mathrm{N}}}}] $\neg \ps \bot$.
\end{itemize}
\end{multicols}
\vspace{-12pt}
\begin{itemize}
\item[\mlabel{ax:fs}{\ensuremath{\bm{\mathrm{FS}}}}] \mbox{$\left(\ps \varphi \to \nec \psi \right) \rightarrow \nec \left(\varphi \to \psi \right)$;}
\end{itemize}
Here, \ref{ax:dp} stands for `disjunctive possibility', \ref{ax:g} for `G\"odel-Dummett', \ref{ax:cd} for `constant domain,'~\ref{ax:null} for `nullary' and~\ref{ax:fs} for `Fischer Servi'~\cite{servi1984axiomatizations}.
With this, we define the logics
\begin{align*}
{\sf IS4}& = {\sf CS4} + \ref{ax:fs} + \ref{ax:dp} + \ref{ax:null},\\[0pt]
{\sf S4I}& = {\sf CS4} + \ref{ax:dp} + \ref{ax:null} + \ref{ax:cd},\\[0pt]
{\sf GS4} & = {\sf IS4} + \ref{ax:g}.
\end{align*}
\end{defn}

Thus, $\sf CS4$ will serve as the `minimalist' logic for the purpose of this paper, and the rest of the logics we consider are extensions.
As such, it is convenient to observe that the following are already derivable in $\sf CS4$.
We leave the proofs to the reader.

\begin{prop}
The formulas

%\begin{multicols}{2}
\noindent\begin{enumerate*}[itemsep=0pt,label=(\arabic*)]
	\item\label{der:ax1} $\ps \left( \varphi \rightarrow \psi\right) \rightarrow \left(\nec \varphi \rightarrow \ps \psi\right)$;	
	\item \label{der:ax3} $\ps \left( \varphi \wedge \psi\right) \rightarrow \ps \varphi \wedge \ps \psi$ and
	\item\label{der:ax2} \mbox{$\left(\nec \varphi \vee \nec \psi \right) \rightarrow \nec \left(\varphi \vee \psi\right)$}	
\end{enumerate*}
\noindent are derivable in ${\sf CS4}$.

%\end{multicols}
%\vspace{-13pt}
%\begin{enumerate}[start=3,itemsep=0pt,label=(\arabic*)]
%
%\end{enumerate}

\end{prop}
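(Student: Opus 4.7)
The plan is to derive each of the three formulas by starting from a suitable intuitionistic tautology, applying the necessitation rule \ref{rl:nec} to wrap it in $\nec$, and then using one of the distributivity axioms \ref{ax:k:box} or \ref{ax:k:dia} to push $\nec$ inside the implication, finally recombining with pure propositional (intuitionistic) reasoning. No interaction between $\nec$ and $\ps$ beyond the two $\bf K$-axioms is needed, so \ref{ax:fs}, \ref{ax:dp}, \ref{ax:cd} and \ref{ax:null} play no role here.

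For \ref{der:ax1}, I start from the intuitionistic tautology $\varphi \to ((\varphi \to \psi) \to \psi)$. Applying \ref{rl:nec} and then \ref{ax:k:box} gives $\nec \varphi \to \nec((\varphi \to \psi) \to \psi)$. An instance of \ref{ax:k:dia} is $\nec((\varphi\to\psi)\to\psi) \to (\ps(\varphi\to\psi)\to\ps\psi)$. Chaining these two yields $\nec\varphi \to (\ps(\varphi\to\psi)\to\ps\psi)$, and swapping the two antecedents intuitionistically gives $\ps(\varphi\to\psi)\to(\nec\varphi\to\ps\psi)$.

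For \ref{der:ax3}, I apply \ref{rl:nec} to the tautologies $\varphi\wedge\psi\to\varphi$ and $\varphi\wedge\psi\to\psi$, then use \ref{ax:k:dia} to obtain $\ps(\varphi\wedge\psi)\to\ps\varphi$ and $\ps(\varphi\wedge\psi)\to\ps\psi$, which combine by $\wedge$-introduction into $\ps(\varphi\wedge\psi)\to\ps\varphi\wedge\ps\psi$. Dually for \ref{der:ax2}, apply \ref{rl:nec} to $\varphi\to\varphi\vee\psi$ and $\psi\to\varphi\vee\psi$, then use \ref{ax:k:box} to get $\nec\varphi\to\nec(\varphi\vee\psi)$ and $\nec\psi\to\nec(\varphi\vee\psi)$, and close by $\vee$-elimination.

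There is no real obstacle; the only point worth flagging is that the converses of these implications are not all derivable in $\sf CS4$ (the converse of \ref{der:ax3} is essentially \ref{ax:dp}, and the converse of \ref{der:ax2} the dual), which is precisely why they are singled out here rather than stated as equivalences.
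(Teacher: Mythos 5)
Your derivations are correct: each of the three formulas follows exactly as you describe from an intuitionistic tautology, the rule \ref{rl:nec}, one of \ref{ax:k:box} or \ref{ax:k:dia}, and purely intuitionistic recombination, which is precisely the routine argument the paper omits ("we leave the proofs to the reader"). Your closing remark about the converses failing (and their relation to \ref{ax:dp}) is also accurate and consistent with the paper's treatment of $\sf CS4$ as the minimal system.
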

%\begin{proof}
%\comment{M: To do.}	
%\end{proof}	

\subsection{Semantics}\label{sec:semantics}

We will consider several semantics leading to intuitionistic variants of $\sf S4$.
It will be convenient to introduce a general class of structures which includes all of these semantics as special cases.

\begin{defn}\label{DefSem}
An {\em intuitionistic frame} is a triple $\mathcal F=(W,\fallible W ,\peq )$, where $W$ is a set, $\peq $ is a preorder (i.e., reflexive and transitive binary relation) on $W$, and $\fallible{ W} \subseteq W$ is closed under $\peq $, i.e.~whenever $w \in \fallible{W}$ and $w \peq v$, we also have that $v \in \fallible{W}$~\cite{ArisakaDS15}.
We say that $\mathcal F$ is {\em locally linear} if $w\peq u$ and $w\peq v$ implies that $u\peq v$ or $v\peq u$.

A {\em bi-intuitionistic frame} is a quadruple $\mathcal F=(W,\fallible W ,\peq ,\rel)$, where both $ (W,\fallible W ,\peq )$ and $ (W,\fallible W ,\rel)$ are intuitionistic frames.
The bi-intuitionistic frame $\mathcal F$ is {\em locally linear} if $(W,\fallible W ,\peq)$ is locally linear.
\end{defn}

The set $\fallible W$ is called the set of {\em fallible worlds,} as in~\cite{AlechinaMPR01,ArisakaDS15}.
Note that in a bi-intuitionistic frame, $\fallible W$ is closed under both $\peq$ and $\rel$.
When $\fallible W=\varnothing$ we omit it, and view $\mathcal F$ as a triple $(W,\peq ,\rel)$.
In this case, we say that $\mathcal F$ is {\em infallible.}

Given a bi-intuitionistic frame $\mathcal F = (W,\fallible W ,\peq ,\rel)$, a {\em valuation on $\mathcal F$} is a function $V\colon \mathbb P \to 2^{W}$ which is {\em monotone} in the sense that $v\seq w \in V(p)$ implies that $v\in V(p)$ and such that $\fallible W\subseteq V(p)$ for all variables $p$ (to ensure the validity of $\bot\to \varphi$).
%We will further extend the notation $V$ to define $V(\bot) = \fallible W$.
A {\em bi-intuitionistic model} is a structure $\mathcal M=(W, \fallible W,\peq ,\rel, V )$, consisting of a bi-frame equipped with a valuation.

We define the satisfaction relation $\models$ recursively by
\begin{itemize}	
\item $(\mathcal M,w) \models p \in\mathbb P$ if $w\in V(p)$;
\item $(\mathcal M,w) \models \bot $ if $w \in \fallible{W}$ %\comment{Added for the case of fallible}
\item $(\mathcal M,w) \models \varphi \wedge \psi$ if $(\mathcal M,w) \models \varphi $ and $(\mathcal M,w) \models \psi $;
\item $(\mathcal M,w) \models \varphi \vee \psi$ if $(\mathcal M,w) \models \varphi $ or $(\mathcal M,w) \models \psi $;
\item $(\mathcal M,w) \models \varphi \to \psi$ if for all $v\seq w$, $(\mathcal M,v) \models \varphi $ implies $(\mathcal M,v) \models \psi $;
\item $(\mathcal M,w) \models \ps \varphi  $ if for all $u\seq w$ there exists $v \ler u$ such that $(\mathcal M,v) \models \varphi $, and
\item $(\mathcal M,w) \models \nec \varphi $ if for all $u,v$ such that $w\peq u \rel v$, $(\mathcal M,v) \models \varphi $.
%\item $(\mathcal M,w) \models \bot $ if $w \in \fallible{W}$ %\comment{Added for the case of fallible}
%\item $(\mathcal M,w) \models \varphi $ if $w \in \fallible{W}$ \comment{Added for the case of fallible}
\end{itemize}

It can be easily proved that by induction on $\varphi$ that for all $w,v \in W$, if $w \peq v$ and $(\mathcal M,w)\models \varphi$ then $(\mathcal M,v)\models \varphi$.
If $\mathcal M=(W,\fallible W,\peq,\rel,V)$ is any model and $\varphi$ any formula, $\mathcal M\models\varphi$ if $(\mathcal M,w)\models \varphi$ for every $w\in W\setminus \fallible W$.
Given $\Lambda \in \lbrace {\sf CS4}, {\sf S4I }, {\sf S4I } \rbrace$, validity (in symbols $\Lambda \models \varphi$) on a frame or on a class of structures (frames or models) is then defined in the usual way.

%\color{red}
%\begin{defn}\david{revise}
%Given a model $\mathcal X$ and a formula $\varphi\in \lanfull$, we say that $\varphi$ is {\em valid} on $\mathcal X$, written $\mathcal X\models\varphi$, if $\val\varphi_\mathcal X =|\mathcal X|$. If $\mathcal X$ is a dynamical system, we write $\mathcal X\models\varphi$ if $(\mathcal X,\val\cdot)\models \varphi$ for every valuation $\val\cdot$ on $\mathcal X$. If $\Omega$ is a class of dynamical systems or models, we say that $\varphi\in\lanfull$ is {\em valid on $\Omega$} if, for every $\mathcal X\in \Omega$, $\mathcal X\models\varphi$. If $\varphi$ is not valid on $\Omega$, it is {\em falsifiable on $\Omega$.} 
%
%We define the logics $\icltl$ and $\itlcplus$ to be the set of formulas of $\landi$ and $\lanfull$, respectively, that are valid over the class of all dynamical systems.
%\end{defn}

Note that bi-intuitionistic models do not satisfy some of the $\sf S4 $ axioms. As an example, let us consider the bi-intuitionistic model $\mathcal M=(W,\peq ,\rel , V )$, whose corresponding frame if displayed on Figure~\ref{fig:gen-frame} and $V (p) = \lbrace x, y, z, t \rbrace$.
Note that we omit $\fallible W$: by convention, this means that $\fallible W =\varnothing$.

\begin{figure}[h!]\centering
	
	\begin{tikzpicture}[->,auto,font=\small,node distance=1.5cm]
	\node[] (x) {$x$};
	\node[] (y) [above right of = x] {$y$};
	\node[] (z) [right of = y] {$z$};
	\node[] (t) [above right of = z] {$t$};
	\node[] (w) [right of = t] {$w$};
		
	\path[->] 
	(x) edge[loop left] node{$\rel$, $\peq$}(x)
	(y) edge[loop above] node{$\rel$, $\peq$}(y)
	(z) edge[loop below] node{$\rel$, $\peq$}(z)
	(t) edge[loop below] node{$\rel$, $\peq$}(t)
	(w) edge[loop below] node{$\rel$, $\peq$}(w)
	(x) edge[] node{$\peq$}(y)
	(z) edge[] node{$\peq$}(t)
	(y) edge[] node{$\rel$}(z)
	(t) edge[] node{$\rel$}(w);
	\end{tikzpicture}
	\caption{A bi-intuitionistic frame. Transitive arrows are not displayed}
	\label{fig:gen-frame}
\end{figure}
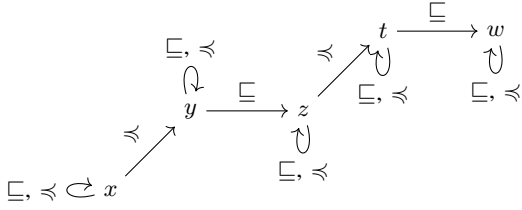		

In Figure \ref{fig:gen-frame}, it can be checked that $\mathcal M, x \not \models \nec p \to \nec \nec p$ and, therefore, $\ref{ax:trans:box}$ is not valid on the class of all bi-intuitionistic frames.
In order to make~$\ref{ax:trans:box}$ valid, we need to enforce additional constraints governing the interaction between $\peq$ and $\rel$.
There are various properties that have been used for this end.

\begin{defn}
Let $\mathcal F=(W,\fallible W,\peq)$ be an intuitionistic frame and $R\subseteq W\times W$. We say that $R$ is:
\begin{enumerate}

\item Forward confluent (for $\mathcal F$) if, whenever $w\peq w'$ and $w \mathrel R v$, there is $v'$ such that $v\peq v'$ and $w' \mathrel R v'$.

\item Backward confluent (for $\mathcal F$) if, whenever $w \mathrel R v \peq v'$, there is $w'$ such that $w \peq w' \mathrel R v'$.

\item Downward confluent (for $\mathcal F$) if, whenever $w \peq v \mathrel R v'$, there is $w'$ such that $w \mathrel R w' \peq v'$.

\end{enumerate}
A bi-intuitionistic frame $\mathcal F=(W,\fallible W,\peq,\rel)$ is forward confluent (respectively, backward confluent, downward confluent) if $\rel$ is forward confluent (respectively, backward confluent, downward confluent) for $(W,\fallible W,\peq)$.
\end{defn}

Relations satisfying the above notions of confluence have nice closure properties.
Below, if $R,S$ are binary relations then $R;S$ denotes their composition in order of application: $x \mathrel{R;S} y$ if there is $z$ such that $x\mathrel R z$ and $z\mathrel S y$.

\begin{lem}\label{lemmClosure}
Let $\mathcal F =(W,\fallible W,\peq)$ be an intuitionistic frame.
%\comment{P: This proof is very nice, but it is relatively easy. I believe there is no need to include it in the paper}\comment{D: I now prove only one item for illustration.}
\begin{enumerate}

\item If $R,S\subseteq W\times W$ are forward (resp.~backward, downward) confluent, then so is $R;S$.

\item If $(R_i)_{i\in I} \subseteq W\times W$ are forward (resp.~backward, downward) confluent, then so is $\bigcup_{i\in I} R_i$.

\item If $R$ is forward (resp.~backward, downward) confluent, then so is its transitive closure $R^+$.

\end{enumerate}
\end{lem}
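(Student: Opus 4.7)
The plan is to prove the three claims in sequence, observing that (3) is an immediate corollary of (1) and (2). The arguments are all routine diagram chases, so the work is really in setting up (1) cleanly.

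For part (1), I would do the three confluence flavors separately, each by a two-step chase. Take the forward case as representative: assume $R$ and $S$ are forward confluent, $w\peq w'$, and $w\mathrel{R;S}v$, so pick $u$ with $w\mathrel R u\mathrel S v$. Forward confluence of $R$ applied to $w\peq w'$ and $w\mathrel R u$ yields $u'$ with $u\peq u'$ and $w'\mathrel R u'$; forward confluence of $S$ applied to $u\peq u'$ and $u\mathrel S v$ yields $v'$ with $v\peq v'$ and $u'\mathrel S v'$, so $w'\mathrel{R;S}v'$. The backward case runs in the opposite order: given $w\mathrel R u\mathrel S v$ and $v\peq v'$, apply backward confluence of $S$ to get $u\peq u'\mathrel S v'$, then backward confluence of $R$ to get $w\peq w'\mathrel R u'$. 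For the downward case, given $w\peq v$ with $v\mathrel R u\mathrel S v'$, apply downward confluence of $R$ to $w\peq v$ and $v\mathrel R u$ to obtain $w'$ with $w\mathrel R w'\peq u$, and then downward confluence of $S$ applied to $w'\peq u$ and $u\mathrel S v'$ to obtain $w''$ with $w'\mathrel S w''\peq v'$; so $w\mathrel{R;S}w''\peq v'$.

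For part (2), the argument is uniform across the three notions. Any instance $w\mathrel{\bigcup_{i\in I}R_i}v$ is witnessed by some $j\in I$ with $w\mathrel{R_j}v$, and applying the confluence hypothesis for $R_j$ produces the required witness, which of course also lies in $\bigcup_{i\in I}R_i$. No interaction between the different relations is needed.

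Part (3) then follows without further diagram chasing. Writing $R^n$ for the $n$-fold composition $R;R;\cdots;R$, an easy induction on $n\geq 1$ using part (1) shows each $R^n$ is confluent (in whichever of the three senses $R$ is). Since $R^+=\bigcup_{n\geq 1}R^n$, part (2) finishes the job. The only real obstacle is being careful in part (1) about the order in which the two confluence hypotheses are invoked, which differs between the forward, backward, and downward cases; everything else is automatic.
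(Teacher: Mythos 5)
Your proof is correct and follows essentially the same route as the paper: the paper carries out the same two-step diagram chase for forward confluence under composition and explicitly leaves the remaining cases (backward, downward, unions, and transitive closure via $R^+=\bigcup_{n\geq 1}R^n$) to the reader, which is exactly what you fill in. Your care about the order in which the two confluence hypotheses are applied in each case is the only subtle point, and you handle it correctly.
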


\begin{proof}
We prove only that forward confluence is closed under composition and leave other items to the reader.
Suppose that $R,S\subseteq W\times W$ are forward confluent and $v\seq w \mathrel{R;S} w'$.
Then, there is $w''$ such that $w\mathrel R w'' \mathrel S w'$.
By forward confluence of $R$, there is $v''$ such that $v\mathrel R v''\seq w $.
By forward confluence of $S$, there is $v'$ such that $v''\mathrel S v'\seq w' $.
But then, $v\mathrel{R;S} v'\seq w'$, as needed.
%
%If $(R_i)_{i\in I}$ is a family of forward confluent relations on $W$, $R=\bigcup_{i\in I}R_i$, and $v\seq w \mathrel R w'$, then for some $i\in I$, $v\seq w \mathrel R_i w'$, so there is $v'$ with $v\mathrel{R_i}v'\seq w'$, from which it follows that $v\mathrel{R}v'\seq w'$.
%
%Finally, if $R$ is forward confluent, we note that $R^+=\bigcup_{i=1}^\infty R^n$.
%By induction on $n$ each $R^n$ is forward confluent, hence so is $R^+$.
\end{proof}

The notions of forward and downward confluence allow us to simplify the semantic clauses for $\ps$ and $\nec$, respectively.

\begin{lem}\label{lemClassical}
Let $\mathcal M = (W,\fallible W,\prec,\sqsubseteq,V)$ be any bi-intuitionistic model, $w\in W$ and $\varphi\in\mathcal L$.
\begin{enumerate}

\item If $\mathcal M$ is forward-confluent, then $(\mathcal M,w) \models \ps\varphi$ iff $\exists v\sqsupseteq w$ such that $(\mathcal M,v) \models \varphi$.

\item If $\mathcal M$ is downward-confluent, then $(\mathcal M,w) \models \nec \varphi$ iff $\forall v\sqsupseteq w$, $(\mathcal M,v) \models \varphi$.

\end{enumerate}
\end{lem}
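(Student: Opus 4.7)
The plan is to establish each biconditional by proving both directions separately, using only reflexivity of $\peq$, persistence of truth under $\peq$, and the relevant confluence hypothesis. In each part one direction is essentially free, while the other is a direct translation of the confluence diagram into a forcing argument.

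For part (1), the right-to-left direction of the simplified clause requires showing that from a single witness $v$ with $w\rel v$ and $(\mathcal M,v)\models\varphi$ we can produce, for every $u$ with $w\peq u$, some $v'$ with $u\rel v'$ and $(\mathcal M,v')\models\varphi$. The plan is to apply forward confluence to the pair $w\peq u$, $w\rel v$ to obtain $v'$ with $v\peq v'$ and $u\rel v'$, and then invoke persistence of $\varphi$ along $\peq$ (which is proved by induction on formulas earlier in the paper) to conclude $(\mathcal M,v')\models\varphi$. The converse is immediate: instantiate the universal quantifier over $u\seq w$ with $u=w$, using reflexivity of $\peq$, and extract the asserted $v$.

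For part (2), the left-to-right direction of the simplified clause (``only $\rel$-successors matter'') again uses reflexivity: taking $u=w$ in the clause $\forall u,v\ (w\peq u\rel v\Rightarrow (\mathcal M,v)\models\varphi)$ reduces to the simplified statement. For the converse, suppose every $v'$ with $w\rel v'$ satisfies $\varphi$, and take arbitrary $u,v$ with $w\peq u\rel v$. The plan is to apply downward confluence to the chain $w\peq u\rel v$ to produce $w'$ with $w\rel w'\peq v$; then by hypothesis $(\mathcal M,w')\models\varphi$, and by $\peq$-persistence $(\mathcal M,v)\models\varphi$ as required.

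There is no real obstacle here: the entire content of both statements is that the confluence diagrams are exactly what is needed to collapse the quantification over $\peq$-successors in the semantic clauses of $\ps$ and $\nec$. The only mild care required is to keep the order of quantifiers and the direction of $\peq$ versus $\rel$ straight, since $\rel$ is not assumed to be included in $\peq$ nor vice versa, and to apply persistence in the correct direction (upward along $\peq$) at the last step of each nontrivial implication.
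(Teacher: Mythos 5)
Your proposal is correct and matches the paper's argument: the easy directions follow from reflexivity of $\peq$, and the nontrivial directions translate the forward (resp.\ downward) confluence diagram into the semantic clause, finishing with persistence of truth along $\peq$. The only difference is presentational: the paper writes out part (2) and dismisses part (1) as dual, whereas you spell out both.
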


\begin{proof}
We prove the second claim; the first is proven similarly by dualizing.
It readily follows from the semantic clauses and the reflexivity of $\peq$ that if $(\mathcal M,w) \models \nec \varphi$ and $ v\sqsupseteq w$, then $(\mathcal M,v) \models \varphi$.
Conversely, suppose that $\forall v\sqsupseteq w, (\mathcal M,v) \models \varphi$.
Let $v\seq w$ and $v'\sqsupseteq v$.
By downward confluence, there is $w'\sqsupseteq w$ such that $w'\peq v'$.
By our assumption, $(\mathcal M,w' ) \models\varphi$.
By monotonicity of the satisfaction relation, $(\mathcal M,v' ) \models\varphi$.
Since $v,v'$ were arbitrary, we conclude that $(\mathcal M,w ) \models \nec \varphi$.
\end{proof}

\begin{defn}
We define:
\begin{enumerate}

\item The class of $\sf CS4$ frames to be the class of backward confluent bi-intuitionistic frames.

\item The class of $\sf IS4$ frames to be the class of forward confluent, infallible $\sf CS4$ frames.

\item The class of $\sf S4I$ frames to be the class of forward and downward confluent, infallible bi-intuitionistic frames.

\item The class of $\sf GS4$ frames to be the class of locally linear $\sf IS4$ frames.

\end{enumerate}
\end{defn}

The logics $\sf S4I$ and $\sf GS4$ seem to be new, but there is sound motivation for both of them.
Modal logics based on G\"odel logic, including $\sf GS5$, have already been studied \cite{CaicedoMRR13}, so it is natural to consider $\sf GS4$.
Moreover, we are basing this logic on a sub-class of Simpson's models for $\sf IS4$ \cite{Simpson94}.

Regarding $\sf S4I$, it is easy to check that $(W,\peq,\rel)$ is an $\sf S4I$ frame iff $(W,\rel,\peq)$ is an $\sf IS4$ frame, so it is natural to consider that we are ``commuting'' the roles of $\sf S4$ and intuitionistic logic, which we may denote $\sf I$.
In fact, in the context of expanding products of modal logics, $\sf IS4$ frames are similar to ${\sf I} \times^e {\sf S4}$ frames, where $\times^e$ is the `expanding product' as defined in \cite{pml}, and similarly, $\sf S4I$ frames can be regarded as ${\sf S4} \times^e {\sf I} $ frames.
Finally, in view of Lemma \ref{lemClassical}, the $\sf S4I$ conditions allow us to evaluate $\ps$ and $\nec$ classically.

\section{Soundness}\label{secSound}

In this section we establish the soundness of the axiom schemes we consider for various classes of bi-intuitionistic frames.
We begin with those axioms that are valid on the class of {\em all} bi-intuitionistic frames.

\begin{prop}\label{prop:validity0}
The axioms~\ref{ax:k:box},~\ref{ax:k:dia},~\ref{ax:ref:box},~\ref{ax:ref:dia}, and~\ref{ax:trans:dia} are valid and the inference rules~\ref{rl:mp} and~\ref{rl:nec} preserve validity over any class of bi-intuitionistic frames.
\end{prop}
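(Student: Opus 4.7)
The plan is to check each axiom and inference rule directly against the semantic clauses of Section~\ref{sec:semantics}, making essential use only of the reflexivity and transitivity of the preorders $\peq$ and $\rel$ together with the monotonicity of the satisfaction relation along $\peq$. No confluence assumption is required for these formulas, which is precisely why they are sound on \emph{every} bi-intuitionistic frame; the more delicate principles such as \ref{ax:trans:box}, \ref{ax:cd}, and \ref{ax:fs} have to wait for subsequent propositions where the appropriate confluence hypothesis is imposed.

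For the two \textbf{K}-axioms I would fix a bi-intuitionistic model $\mathcal M$, a world $w$ with $(\mathcal M,w) \models \nec(\varphi \to \psi)$, and an arbitrary $\peq$-successor $v$ satisfying the second antecedent. In the case of \ref{ax:k:box} I take an arbitrary pair $u, u'$ with $v \peq u \rel u'$ and must show $u' \models \psi$; in the case of \ref{ax:k:dia} I take an arbitrary $u \seq v$ and use the clause for $\ps \varphi$ at $v$ to obtain $u' \sqsupseteq u$ with $u' \models \varphi$. In either case, transitivity of $\peq$ gives $w \peq u$, so $(\mathcal M, u') \models \varphi \to \psi$ follows from the initial hypothesis. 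Combined with $u' \models \varphi$ (automatic in the second case, and obtained from $v \models \nec\varphi$ via the triple $v \peq u \rel u'$ in the first) and reflexivity of $\peq$ at $u'$, this yields $u' \models \psi$, as required.

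The \textbf{T}-axioms fall out from reflexivity of $\rel$: for \ref{ax:ref:box}, the triple $v\peq v\rel v$ converts $v \models \nec\varphi$ into $v \models \varphi$ at every $v \seq w$; for \ref{ax:ref:dia}, given $v \seq w$ with $v \models \varphi$ and any $u \seq v$, monotonicity gives $u \models \varphi$, and then $u \rel u$ exhibits $u$ itself as the required witness for $\ps\varphi$ at $v$. For \ref{ax:trans:dia}, I would unfold the $\ps$ clause twice: given $u \seq v$ and $(\mathcal M,v) \models \ps\ps\varphi$, extract $u'$ with $u \rel u'$ at which $\ps\varphi$ holds; then apply the clause at $u'$ (using reflexivity of $\peq$) to produce $u''$ with $u' \rel u''$ satisfying $\varphi$; and close by transitivity of $\rel$ to obtain $u \rel u''$, witnessing $\ps\varphi$ at $v$.

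For the rules, \ref{rl:mp} is immediate: the implication clause applied at $v=w$, via reflexivity of $\peq$, turns global satisfaction of $\varphi$ and $\varphi \to \psi$ into $w \models \psi$. The rule \ref{rl:nec} is slightly more delicate because $\mathcal M \models \varphi$ only asserts satisfaction at the \emph{infallible} worlds, whereas the semantic clause for $\nec\varphi$ quantifies over all $\rel$-successors of $\peq$-successors. The fix is a short preliminary induction on formulas showing that every world in $\fallible W$ satisfies every formula, using closure of $\fallible W$ under both $\peq$ and $\rel$ together with the requirement $\fallible W \subseteq V(p)$ on valuations. Once this is available, global satisfaction of $\varphi$ in fact yields $u' \models \varphi$ for every world $u' \in W$, and validity of $\nec\varphi$ then follows directly from the semantic clause. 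I do not foresee any substantive obstacle; the only subtlety worth flagging is precisely this treatment of fallible worlds under \ref{rl:nec}, which would otherwise be mishandled by reading $\mathcal M \models \varphi$ too narrowly.
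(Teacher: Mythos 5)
Your proposal is correct and proceeds exactly as the paper does, namely by direct verification against the semantic clauses using only reflexivity, transitivity, and monotonicity along $\peq$ (the paper writes out only~\ref{ax:trans:dia} and declares the rest standard, and your unfolding of that case matches its argument). Your extra observation that \ref{rl:nec} needs the auxiliary fact that every world of $\fallible W$ satisfies every formula (by induction, using closure of $\fallible W$ under $\peq$ and $\rel$ and $\fallible W\subseteq V(p)$) is a correct and welcome elaboration of a point the paper leaves implicit.
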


\begin{proof}
The proofs are standard (and see e.g.~\cite{Simpson94}).
We check only~\ref{ax:trans:dia}. Let $\mathcal M = (W,\fallible W,\peq,\sqsubseteq,V)$ be any bi-intuitionistic model.
Suppose that $(\mathcal M,w) \models \ps\ps p$ and let $v\seq w$.
Then, there is $u\sqsupseteq v$ so that $(\mathcal M,u) \models \ps p $.
Since $u\seq u$, there is $u'\sqsupseteq u$ so that $(\mathcal M,u') \models p $.
By transitivity, $u'\sqsupseteq v$, and since $v$ was arbitrary, $(\mathcal M,w) \models \ps p$.
\end{proof}

\begin{prop}\label{prop:validity1}
The axiom~\ref{ax:trans:box} is valid over any frame that is either backward confluent or downward confluent.
\end{prop}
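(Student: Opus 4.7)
The plan is to unfold the semantic clause of $\nec\nec\varphi$ and reduce it to a single application of $\nec\varphi$ by invoking the appropriate confluence property once. Fix a model $\mathcal M = (W,\fallible W,\peq,\rel,V)$ and $w\in W$. Assume $(\mathcal M, w) \models \nec\varphi$; to verify $(\mathcal M, w) \models \nec\nec\varphi$, pick arbitrary $u, v, u', v'$ with $w \peq u \rel v$ and $v \peq u' \rel v'$, and show $(\mathcal M, v') \models \varphi$. The task reduces to producing some $u^\star$ with $w \peq u^\star \rel v'$ (or something from which $\varphi$ at $v'$ follows), so that $\nec\varphi$ at $w$ is applicable.

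In the backward-confluent case, I would apply backward confluence to the instance $u \rel v \peq u'$, producing $u^\star$ with $u \peq u^\star \rel u'$. Transitivity of $\peq$ gives $w \peq u^\star$, and transitivity of $\rel$ combined with $u' \rel v'$ gives $u^\star \rel v'$. Thus $w \peq u^\star \rel v'$, and the hypothesis $(\mathcal M,w)\models\nec\varphi$ yields $(\mathcal M,v')\models\varphi$.

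In the downward-confluent case, I would instead apply downward confluence to $v \peq u' \rel v'$, obtaining $w'$ with $v \rel w' \peq v'$. Transitivity of $\rel$ gives $u \rel w'$, and combined with $w\peq u$ we have $w \peq u \rel w'$; hence $(\mathcal M,w')\models\varphi$ from the assumption. Since $w' \peq v'$, monotonicity of the satisfaction relation (noted just after the semantic clauses) upgrades this to $(\mathcal M,v')\models\varphi$.

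The proof is really a one-diagram chase, and the only subtlety is recognising that the two confluence conditions must be instantiated at different edges of the zig-zag $w \peq u \rel v \peq u' \rel v'$: backward confluence fires at the middle $\rel;\peq$ step, whereas downward confluence fires at the rightmost $\peq;\rel$ step and, since it re-routes through a world $\peq$-below $v'$, additionally needs monotonicity to transport satisfaction of $\varphi$ upward to $v'$. No analogous move is needed in the backward-confluent case, and this asymmetry is the most likely pitfall.
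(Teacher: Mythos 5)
Your proof is correct: both diagram chases are instantiated at the right edges of the zig-zag $w \peq u \rel v \peq u' \rel v'$, the transitivity of $\peq$ and $\rel$ (both preorders in a bi-intuitionistic frame) is what glues the pieces together, and the extra appeal to monotonicity of satisfaction in the downward-confluent case is exactly the point where the two cases differ. However, your route is not the one the paper takes. For backward confluence the paper simply cites the known fact that such frames are $\sf CS4$ frames and does not redo the verification, whereas you supply the argument in full; for downward confluence the paper first invokes Lemma~\ref{lemClassical}, which says that on downward-confluent models $\nec\varphi$ holds at $w$ iff $\varphi$ holds at every $v \sqsupseteq w$, and then observes that with this ``classical'' clause the validity of~\ref{ax:trans:box} follows from transitivity of $\rel$ alone. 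Your downward-confluent case effectively inlines the proof of that lemma: the single application of downward confluence followed by monotonicity is precisely the content of Lemma~\ref{lemClassical} specialized to this axiom. What the paper's route buys is brevity and reuse (the same lemma also drives the soundness of~\ref{ax:cd} later); what yours buys is a self-contained, uniform treatment of both hypotheses that makes the asymmetry between them explicit rather than hiding it inside the lemma.
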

\begin{proof}
The case for backward confluence is known, as such a frame is a $\sf CS4$ frame~\cite{AlechinaMPR01}.
If instead $\mathcal M = (W,\fallible W,\peq,\sqsubseteq,V)$ is downward-confluent, we have by Lemma \ref{lemClassical} that for any $w\in W$, $(\mathcal M,w) \models\nec \varphi$ iff $\forall v\ler w, (\mathcal M,v) \models \varphi$.
Using this characterization, we may reason as in the classical case to conclude that $\mathcal M\models\nec \varphi \to\nec\nec \varphi$.
\end{proof}	
	
\begin{prop}\label{prop:validity2}
	\begin{enumerate*}[itemsep=0pt,label=(\arabic*)]
		\item\label{itOne} ${\sf IS4} \models \ref{ax:fs}$; 
		\item\label{itTwo} ${\sf IS4} \models \ref{ax:dp}$ and ${\sf S4I} \models \ref{ax:dp}$;
		\item\label{itThree} ${\sf IS4} \models \ref{ax:null}$ and ${\sf S4I} \models \ref{ax:null}$; 
		\item\label{itFour} ${\sf S4I} \models \ref{ax:cd}$;
		\item\label{itFive} ${\sf GS4} \models \ref{ax:g}$.
	\end{enumerate*}	
\end{prop}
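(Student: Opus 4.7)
My plan is to verify each item by unfolding the semantic clauses and exploiting the frame conditions specific to each class, leveraging Lemma~\ref{lemClassical} wherever possible. For item~(3), the axiom $\neg\ps\bot$ is immediate from infallibility: both ${\sf IS4}$ and ${\sf S4I}$ frames have $\fallible W=\varnothing$, so no world satisfies $\bot$, and in particular $\ps\bot$ cannot hold at any world. Item~(5) follows the standard G\"odel--Dummett argument: if $(\mathcal M,w)\not\models\varphi\to\psi$, witnessed by some $u\seq w$ with $(\mathcal M,u)\models\varphi$ and $(\mathcal M,u)\not\models\psi$, I show $(\mathcal M,w)\models\psi\to\varphi$ by taking an arbitrary $v\seq w$ with $(\mathcal M,v)\models\psi$ and applying local linearity to the pair $w\peq u$, $w\peq v$. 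The case $v\peq u$ is ruled out since it would force $(\mathcal M,u)\models\psi$ by monotonicity, and the case $u\peq v$ gives $(\mathcal M,v)\models\varphi$ by monotonicity, as required.

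For items~(2) and~(4), forward confluence collapses $\ps$ to its classical clause via Lemma~\ref{lemClassical}. In item~(2), if $(\mathcal M,w)\models\ps(\varphi\vee\psi)$ then there is $v\sqsupseteq w$ with $(\mathcal M,v)\models\varphi\vee\psi$, and splitting on the disjunction yields $\ps\varphi$ or $\ps\psi$ at $w$; this argument applies uniformly in both ${\sf IS4}$ and ${\sf S4I}$ frames. In item~(4), downward confluence additionally collapses $\nec$ to its classical clause, so at any $u\seq w$ with $(\mathcal M,u)\models\nec(\varphi\vee\psi)$ I perform a meta-level case split: either every $v\sqsupseteq u$ satisfies $\varphi$, giving $(\mathcal M,u)\models\nec\varphi$, or some $v_0\sqsupseteq u$ fails $\varphi$ and hence satisfies $\psi$, giving $(\mathcal M,u)\models\ps\psi$.

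The main obstacle is item~(1), the Fischer--Servi axiom, which requires threading together both confluence conditions available in ${\sf IS4}$ frames. Assuming $(\mathcal M,w)\models\ps\varphi\to\nec\psi$, I take arbitrary $w\peq u_0\sqsubseteq u_1$ and then arbitrary $u_2\seq u_1$ with $(\mathcal M,u_2)\models\varphi$, aiming to conclude $(\mathcal M,u_2)\models\psi$. The key step is to apply backward confluence to the chain $u_0\sqsubseteq u_1\peq u_2$ to obtain $u_0'$ with $u_0\peq u_0'\sqsubseteq u_2$. Then $(\mathcal M,u_0')\models\ps\varphi$ by Lemma~\ref{lemClassical} (taking $u_2$ as the $\sqsubseteq$-witness), and transitivity of $\peq$ gives $w\peq u_0'$, so the hypothesis yields $(\mathcal M,u_0')\models\nec\psi$. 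Applying the $\nec$-clause with the trivial $u_0'\peq u_0'$ step followed by $u_0'\sqsubseteq u_2$ finally produces $(\mathcal M,u_2)\models\psi$, which completes the verification of Fischer--Servi and hence the proposition.
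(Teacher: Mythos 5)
Your proposal is correct, and for the two items the paper actually proves in detail it is essentially the paper's argument: item \ref{itFour} is handled exactly as in the paper (downward confluence collapses $\nec$ to its classical clause via Lemma~\ref{lemClassical}, then a meta-level case split, with forward confluence giving $\ps\psi$), and your item \ref{itFive} is the same local-linearity argument, only phrased contrapositively rather than as the paper's proof by contradiction. Where you genuinely diverge is on items \ref{itOne}--\ref{itThree}: the paper simply cites Simpson's thesis for these, whereas you give self-contained semantic proofs. Your treatment of \ref{ax:null} from infallibility and of \ref{ax:dp} from the forward-confluent (classical) $\ps$-clause is fine, and your Fischer--Servi argument is sound: from $w\peq u_0\rel u_1\peq u_2$ with $u_2\models\varphi$, backward confluence yields $u_0\peq u_0'\rel u_2$, forward confluence (Lemma~\ref{lemClassical}) upgrades the single witness $u_2$ to $u_0'\models\ps\varphi$, and the hypothesis at $w$ plus reflexivity of $\peq$ gives $u_2\models\psi$ --- exactly the interplay of the two confluence conditions that $\sf IS4$ frames provide. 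The only cosmetic point is that in each item one should, strictly speaking, run the argument at an arbitrary $\peq$-successor of $w$ to discharge the outer implication clause; since your arguments are uniform in the world considered (and the relevant antecedents are $\peq$-persistent), this is immediate, and the paper glosses it in the same way. The net effect is that your version makes Proposition~\ref{prop:validity2} independent of the external reference, at the cost of a somewhat longer proof.
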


\begin{proof}%We consider all the cases below:
Items \ref{itOne}-\ref{itThree} are proven in \cite{Simpson94}.
For the remaining items, fix a model $\mathcal M=(W,\fallible W,\peq,\sqsubseteq,V)$.
For \ref{itFour}, assume that $\mathcal M$ is forward- and downward-confluent, and that $(\mathcal M,w) \models \nec \left(\varphi \vee \psi\right)$.
If $(\mathcal M,w)  \models \nec \varphi$, there is nothing to prove, so assume that  $(\mathcal M,w) \not \models \nec \varphi$.
		From Lemma \ref{lemClassical} and downward confluence it follows that there is $v \ler w$ such that $(\mathcal M,v) \not \models \varphi$.
		But $(\mathcal M,v)  \models \varphi \vee\psi$, hence $(\mathcal M,v)  \models  \psi$, and from Lemma \ref{lemClassical} but now using forward confluence,  $(\mathcal M,w)  \models \ps \psi$.
	
Item \ref{itFive} is also well known, but we provide a proof.
Assume by contradiction that ${\sf GS4} \not \models \ref{ax:g}$. 
		This means that  $(\mathcal M,w) \not \models \varphi \to \psi$ and  $(\mathcal M,w)\not  \models  \psi \to \varphi$ for some $w\in W$.
		From the former assumption it follows that there is $v\seq w$ such that $(\mathcal M,v)  \models \varphi$ and  $(\mathcal M,v) \not \models \psi$.
		From the latter assumption it follows that there is $v'\seq w$ such that $(\mathcal M,v')  \models \psi$ and  $(\mathcal M,v') \not \models \varphi$.
		Since ${\mathcal M}$ is locally linear, we need to consider two cases: if $v \peq v'$ we get that  $(\mathcal M,v')  \models \varphi$; if $v'\peq v$ we conclude that  $(\mathcal M,v)  \models \psi$. In any case we reach a contradiction.
%	\end{itemize}
\end{proof}

\section{Completeness of ${\sf CS4}$}\label{secCompCS4}

In this section we prove that $\sf CS4$ is complete for its class of models.
This claim is already made in \cite{AlechinaMPR01} and the main elements of the proof are sketched, but there does not seem to be a detailed proof in the literature.
Moreover, our proof of the finite model property relies on specific properties of the canonical model, and establishing these properties will amount to the bulk of the proof of completeness.
For these two reasons, we provide a full completeness proof here.

Fix a logic $\Lambda$.
We use the standard Gentzen-style interpretation of defining $\Gamma \vdash \Delta$ if $\vdash \bigwedge \Gamma' \to \bigvee \Delta '$ for some finite $\Gamma'\subseteq \Gamma$, $\Delta'\subseteq\Delta$.
The logic $\Lambda$ will always be clear from context, which is why we do not reflect it in the notation.
When working within a turnstyle, we will follow the usual proof-theoretic conventions of writing $\Gamma,\Delta$ instead of $\Gamma \cup \Delta$ and $\varphi$ instead of $\{\varphi\}$. By $\Lambda \vdash \varphi$ we mean that the formula $\varphi$ is derivable in the logic $\Lambda$.

A set $X$ of formulas of $\lanfull$ is called {\em prime} if it is closed
under derivation in $\Lambda$ ($X \vdash \varphi$ implies $\varphi \in  X$) and such that $(\varphi \vee \psi) \in  X$ implies that either $\varphi \in  X$ or $ \psi \in  X$.
The prime set $X$ is {\em proper} if $ \bot \not \in X $.

A {\em pre-theory} $\Phi$ consists of two sets of formulas denoted by  $(\Phi^+ ; \Phi^{\ps})$.
We say that $\Phi$ is a {\em $\Lambda$-theory} (or {\em theory} if $\Lambda$ is clear from context) if $\Phi^+$ is a prime set and for any nonempty finite set $\Psi \subseteq \Phi^{\ps}$, $\ps \bigvee \Psi \not\in \Phi^+$.
The intuition behind this definition is the following: formulas in $\Phi^+$ are the ones validated by the theory and the ones in $\Phi^{\ps}$ are those formulas $\psi$ such that $\ps\psi$ is falsified directly via $\rel$ (as opposed to being falsified via a $(\peq;\rel)$-accessible world).

\begin{defn}[]\label{def:deduction} 
	Let $\Lambda$ be a logic over $\lanfull$.
	Given a set of formulas $\Xi$, we say that a pre-theory $\Phi$ is {\em $\Xi$-consistent} if for any finite set $\Delta \subseteq \Phi^{\ps}$,
		\begin{displaymath}
		 \Phi^+ \not \vdash   \Xi,  \ps\bigvee \Delta .
		\end{displaymath}
		We adopt the convention that $\ps \bigvee \varnothing := \bot$.
		We say that $\Phi$ is {\em consistent} if it is $\varnothing$-consistent.
		If $\Xi$ is a singleton $\{\psi\}$, we write $\psi$-consistent instead of $\{\psi\}$-consistent.
%A theory is {\em maximally consistent} if for every formula $\varphi$ either $\varphi \in \Phi^+$ or $\varphi \not \in \Phi^+$.	
\end{defn}

Note that if $\Phi$ is a $\Xi$-consistent theory, then $\Phi^+$ is forcibly proper.
	Below, we say that a pre-theory $\Psi$ {\em extends} $\Phi$ if $\Phi^+\subseteq\Psi^+$ and $\Phi^\ps\subseteq\Psi^\ps$.
	
\begin{lem}[Adapted from~\cite{AlechinaMPR01}]\label{lem:lindembaum} Any $\Xi$-consistent pre-theory $(\Phi^+; \Phi^{\ps})$ can be extended to a $\Xi$-consistent theory $(\Phi^+_*; \Phi^{\ps})$ such that $\Phi^+ \subseteq \Phi^+_*$.
\end{lem}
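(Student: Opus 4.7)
The plan is to carry out a Lindenbaum-style saturation argument. Fix an enumeration $\varphi_0,\varphi_1,\ldots$ of all formulas of $\lanfull$, set $\Phi^+_0 = \Phi^+$, and recursively define
\[
\Phi^+_{n+1} =
\begin{cases}
\Phi^+_n\cup\{\varphi_n\} & \text{if }(\Phi^+_n\cup\{\varphi_n\};\Phi^\ps)\text{ is }\Xi\text{-consistent,}\\
\Phi^+_n & \text{otherwise.}
\end{cases}
\]
Let $\Phi^+_* = \bigcup_{n\in\N}\Phi^+_n$. There are three properties to verify in order to conclude that $(\Phi^+_*;\Phi^\ps)$ is a $\Xi$-consistent theory extending $\Phi$: that $(\Phi^+_*;\Phi^\ps)$ is $\Xi$-consistent, that $\Phi^+_*$ is closed under derivation in $\Lambda$, and that $\Phi^+_*$ enjoys the disjunction property. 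The inclusion $\Phi^+\subseteq\Phi^+_*$ holds by construction.

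$\Xi$-consistency at the limit follows from the finitary nature of $\vdash$: any putative derivation $\Phi^+_*\vdash\Xi,\ps\bigvee\Delta$ with $\Delta\subseteq\Phi^\ps$ finite involves only finitely many premises, hence already witnesses inconsistency of some $\Phi^+_n$, contradicting the construction. Closure under derivation is proven by contradiction: if $\Phi^+_*\vdash\varphi_n$ but $\varphi_n\notin\Phi^+_*$, then at stage $n$ there exists a finite $\Delta\subseteq\Phi^\ps$ with $\Phi^+_n,\varphi_n\vdash\Xi,\ps\bigvee\Delta$; choosing $m\geq n$ large enough that $\Phi^+_m\vdash\varphi_n$ then yields $\Phi^+_m\vdash\Xi,\ps\bigvee\Delta$, contradicting the $\Xi$-consistency of $\Phi^+_m$.

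The main technical step is the disjunction property. Suppose $\psi_1\vee\psi_2\in\Phi^+_*$ but neither disjunct lies in $\Phi^+_*$. Let $n_i$ be the stage at which $\psi_i$ was considered; since the extension was rejected, there exist finite $\Delta_i\subseteq\Phi^\ps$ with $\Phi^+_{n_i},\psi_i\vdash\Xi,\ps\bigvee\Delta_i$. Setting $\Delta=\Delta_1\cup\Delta_2$ and $N=\max(n_1,n_2)$, the derivability of $\ps\bigvee\Delta_i\to\ps\bigvee\Delta$ (a consequence of \ref{ax:k:dia} and \ref{rl:nec}) allows one to weaken both derivations to $\Phi^+_N,\psi_i\vdash\Xi,\ps\bigvee\Delta$. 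Standard intuitionistic case analysis then gives $\Phi^+_N,\psi_1\vee\psi_2\vdash\Xi,\ps\bigvee\Delta$, which, since $\psi_1\vee\psi_2\in\Phi^+_*$, contradicts the $\Xi$-consistency established in the previous paragraph. I expect this combination step --- merging the two obstructions $\Delta_1,\Delta_2$ into a single $\Delta\subseteq\Phi^\ps$ that is still a legitimate witness against $\Xi$-consistency --- to be the main subtlety, and it is precisely what forces the enriched notion of pre-theory to behave like an ordinary consistent set under Lindenbaum closure.
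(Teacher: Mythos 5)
Your proof is correct, and it establishes the statement by the same underlying idea as the paper --- extend $\Phi^+$ while holding $\Phi^\ps$ fixed, and use intuitionistic reasoning about disjunction against $\Xi$-consistency --- but the implementation differs. The paper applies Zorn's lemma to get a maximal $\Xi$-consistent $\Phi^+_*\supseteq\Phi^+$ and reads primality off maximality directly: if $\varphi\vee\chi\in\Phi^+_*$ and both $(\Phi^+_*,\varphi;\Phi^\ps)$ and $(\Phi^+_*,\chi;\Phi^\ps)$ were $\Xi$-inconsistent, left disjunction introduction would make $(\Phi^+_*;\Phi^\ps)$ itself $\Xi$-inconsistent; deductive closure likewise follows from maximality by the cut argument you give. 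Your stagewise enumeration replaces Zorn's lemma; it is perfectly adequate here since $\lanfull$ is countable, though marginally less general. In exchange, you make explicit a point the paper glosses over: merging the two rejection witnesses $\Delta_1,\Delta_2\subseteq\Phi^\ps$ into a single finite $\Delta$ requires $\ps\bigvee\Delta_i\to\ps\bigvee\Delta$, which indeed follows from \ref{rl:nec} and \ref{ax:k:dia} (with the convention $\ps\bigvee\varnothing=\bot$ covering the degenerate case). The same merging is implicitly needed in the paper's disjunction step, so your explicit treatment is a genuine clarification rather than an artifact of your construction.
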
	
	
	\begin{proof}
	By a standard application of Zorn's lemma, there exists a maximal (with respect to set inclusion) set of formulas $ \Phi^+ _* \supseteq \Phi^+$ such that $ (\Phi^+_* ; \Phi^{\ps})$ is $\Xi$-consistent.
	We only need to check that $\Phi^+_*$ is prime.
	Suppose that $\varphi \vee \chi \in \Phi^+_*$.
	We cannot have that $ (  \Phi^+_*,\varphi ; \Phi^{\ps})$ and $ (  \Phi^+_*,\chi; \Phi^{\ps})$ are both $\Xi$-inconsistent, since by left disjunction introduction (admissible in intuitionistic logic) we would obtain that $ (  \Phi^+_*, \varphi \vee \chi ; \Phi^{\ps})$ is $\Xi$-inconsistent.
	However, the latter is just $ (  \Phi^+_* ; \Phi^{\ps})$, contrary to assumption.
	Thus one of the two is $\Xi$-consistent; say, $ (  \Phi^+_*, \varphi ; \Phi^{\ps})$.
	But, by maximality of $\Phi^+_*$, we must already have $\varphi\in \Phi^+_*$, as required.
		\end{proof}

The proof of the following saturation lemma is standard~\cite{AlechinaMPR01}.

\begin{lem}[]
Every theory $\Phi = (\Phi^+; \Phi^\ps)$ satisfies the following properties:

\begin{enumerate}
	\item $\Phi^+$ is deductively closed, i.e. if $\Phi^+ \vdash \varphi$ then $\varphi \in \Phi^+$;
%	\item $\Phi^- \cap \Phi ^ +  = \varnothing$, \label{cond:tbl:intersection}
%    \item $\Phi^- \cup \Phi ^ +  = \lanfull$, \label{cond:tbl:union}
%    \item $\bot\not\in \Phi ^ +$, \label{cond:tbl:bot}
	\item if $\varphi\wedge\psi\in \Phi ^ +$, then $\varphi,\psi\in \Phi^+$, \label{cond:tbl:posconj}
	\item if $\varphi\wedge\psi \not \in \Phi ^ +$, then	$\varphi \not \in \Phi ^+$ or $\psi \not \in \Phi^+$, \label{cond:tbl:negconj}
	\item if $\varphi\vee\psi\in \Phi ^ +$, then	$\varphi \in \Phi^+$ or $\psi\in \Phi^+$, \label{cond:tbl:posdisj}
	\item if $\varphi\vee\psi\not\in \Phi ^+$, then  $\varphi , \psi\not \in \Phi^+$, \label{cond:tbl:negdisj}
	\item if $\varphi\to\psi\in \Phi^+$, then $\varphi \not \in \Phi^+$ or $\psi \in\Phi^+$, \label{cond:tbl:implication}
	\item if $\nec\varphi\in \Phi^+$, then $\varphi \in \Phi^+$,
	\item if $\varphi \in \Phi^\ps$, then $\ps \varphi \not \in \Phi^+$, \label{cond:subset:diam}

\end{enumerate}
\end{lem}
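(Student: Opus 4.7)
The plan is to verify each of the eight clauses directly against the definition of a theory. Recall that $\Phi = (\Phi^+; \Phi^\ps)$ is a theory precisely when $\Phi^+$ is prime (closed under $\Lambda$-derivation and satisfying the disjunction property) and $\ps \bigvee \Psi \notin \Phi^+$ for every nonempty finite $\Psi \subseteq \Phi^\ps$. Each clause below reduces to a one- or two-step consequence of these conditions and of derivations already available in $\sf CS4$, so I would simply treat each one in turn.

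Item 1 is verbatim the ``closed under derivation'' half of primality. Items 2, 3, and 5 follow from item 1 applied to the intuitionistic theorems $\varphi\wedge\psi \to \varphi$, $\varphi\wedge\psi \to \psi$, $\varphi \to \varphi\vee\psi$, and $\psi \to \varphi\vee\psi$, combined with contrapositive reasoning at the meta-level for 3 and 5. Item 4 is simply the disjunction property, i.e.\ the second half of primality. For item 6 I would argue by contraposition: if both $\varphi\to\psi$ and $\varphi$ lay in $\Phi^+$, then \ref{rl:mp} would give $\Phi^+\vdash\psi$, and item 1 would place $\psi\in\Phi^+$, contradicting the assumption of the contrapositive. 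Item 7 is entirely analogous, applied to the instance $\nec\varphi\to\varphi$ of \ref{ax:ref:box}, combined with \ref{rl:mp} and item 1.

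Item 8 is the one clause that uses the interaction condition between $\Phi^+$ and $\Phi^\ps$ rather than primality: if $\varphi\in\Phi^\ps$, then $\{\varphi\}$ is a nonempty finite subset of $\Phi^\ps$, so the defining condition of a theory immediately gives $\ps\varphi = \ps\bigvee\{\varphi\}\notin\Phi^+$. I do not anticipate any real obstacle, because the entire lemma is essentially a bookkeeping consequence of the two defining properties. The only point to be careful about is that all derivations invoked live within intuitionistic propositional logic extended by \ref{ax:ref:box} and \ref{rl:mp}, so the same proof applies uniformly to every extension of $\sf CS4$ considered in the paper.
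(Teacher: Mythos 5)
Your proposal is correct and matches the intended argument: the paper omits the proof, noting only that this saturation lemma is standard, and the expected verification is exactly the clause-by-clause reduction you give to primality (deductive closure plus the disjunction property), \ref{rl:mp}, \ref{ax:ref:box}, and the condition $\ps\bigvee\Psi\notin\Phi^+$ for nonempty finite $\Psi\subseteq\Phi^\ps$. One nitpick: for item 3 the contrapositive requires that $\varphi,\psi\vdash\varphi\wedge\psi$ (conjunction introduction) rather than the projection theorems you list, but this is an equally trivial intuitionistic fact and does not affect the argument.
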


If $\Phi$ is a theory then $\Phi^\nec \eqdef \{ \varphi \in \lanfull:\nec\varphi\in \Phi^+ \}$.

\begin{defn}[Adapted from~\cite{AlechinaMPR01}]
We define the canonical model for ${\sf CS4}$ as $\mathcal M_c^{{\sf CS4}}=(W_c,\fallible W_c,\peq_c,\rel_c, V_c)$, where~
\begin{itemize}

	\item $W_c$ is the set of all $\sf CS4$-theories;
	\item $\fallible W_c = \{ (\lanfull;\varnothing) \}$;
	\item ${\peq_c} \subseteq W_c\times W_c$ is defined by $\Phi \peq_c \Psi$ iff $\Phi^+ \subseteq \Psi^+$;
	\item ${\rel_c} \subseteq W_c\times W_c$ defined as $\Phi \rel_c \Psi$ iff $\Phi^\nec\subseteq \Psi^+$ and $\Phi^\ps \subseteq \Psi^{\ps}$;
	\item $V_c$ is defined by $V_c(p)  = \lbrace \Phi \mid p \in \Phi^+  \rbrace$.

\end{itemize}
\end{defn}

\begin{remark}
The notation ${W_c}$, $\peq_c$, etc.~will also be used for the canonical models of logics distinct from $\sf CS4$.
The meaning of the notation will be made clear at the beginning of each section and will remain constant throughout.
In this section, this notation will always refer to the components of the structure $\mathcal M_c^{{\sf CS4}}$.
\end{remark}

\begin{lem}\label{lemCS4IsModel}
$\mathcal M_c^{{\sf CS4}}$ is a bi-intuitionistic model.
\end{lem}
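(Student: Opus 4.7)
The plan is to unpack the four conditions required to be a bi-intuitionistic model, namely that $\peq_c$ and $\rel_c$ are both preorders, that $\fallible W_c$ is closed under both relations, and that $V_c$ is monotone and satisfies $\fallible W_c \subseteq V_c(p)$. Each condition unfolds into a short calculation using the definitions of $\peq_c$, $\rel_c$, and of $\sf CS4$-theory, together with one of the axioms of $\sf CS4$.

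First, reflexivity and transitivity of $\peq_c$ are immediate since $\peq_c$ is just the restriction of set inclusion to $\sf CS4$-theories. For $\rel_c$, reflexivity reduces to showing $\Phi^\nec \subseteq \Phi^+$, which follows from axiom $\ref{ax:ref:box}$ and the deductive closure of $\Phi^+$. Transitivity is the only step that uses a modal axiom in a non-trivial way: given $\Phi \rel_c \Psi \rel_c \Theta$, the inclusion $\Phi^\ps \subseteq \Theta^\ps$ is obvious by composition, while for $\Phi^\nec \subseteq \Theta^+$ I would take $\varphi \in \Phi^\nec$, apply axiom $\ref{ax:trans:box}$ to deduce $\nec\nec\varphi \in \Phi^+$, hence $\nec\varphi \in \Phi^\nec \subseteq \Psi^+$, and then push $\varphi$ into $\Theta^+$ via $\Psi^\nec \subseteq \Theta^+$. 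This is the step I expect to be the main (albeit mild) obstacle; all the others are bookkeeping.

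Next, for closure of $\fallible W_c = \{(\lanfull;\varnothing)\}$ under $\peq_c$ and $\rel_c$, the key observation is that both relations, when applied with $(\lanfull;\varnothing)$ on the left, force the target theory $\Psi$ to satisfy $\Psi^+ = \lanfull$: for $\peq_c$ this is direct, and for $\rel_c$ one uses that $(\lanfull;\varnothing)^\nec = \lanfull$. Once $\Psi^+ = \lanfull$, the theory condition (which forbids $\ps\bigvee\Psi'\in \Psi^+$ for any nonempty finite $\Psi'\subseteq \Psi^\ps$) immediately forces $\Psi^\ps = \varnothing$, so $\Psi = (\lanfull;\varnothing)\in \fallible W_c$.

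Finally, monotonicity of $V_c$ with respect to $\peq_c$ is immediate: $\Phi \peq_c \Psi$ means $\Phi^+\subseteq \Psi^+$, so if $p\in\Phi^+$ then $p\in\Psi^+$. The condition $\fallible W_c \subseteq V_c(p)$ also holds trivially since the sole fallible world has $\Phi^+ = \lanfull\ni p$. Collecting these observations completes the verification that $\mathcal M_c^{\sf CS4}$ satisfies every clause of Definition~\ref{DefSem}.
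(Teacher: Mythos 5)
Your proof is correct and follows essentially the same route as the paper's: reflexivity and transitivity of $\rel_c$ via \ref{ax:ref:box} and \ref{ax:trans:box} respectively, with $\peq_c$ handled by set inclusion. The only difference is that you explicitly verify the closure of $\fallible W_c$ and the valuation conditions, which the paper dismisses as easily verified; your verification of these is also correct.
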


\begin{proof}
It is easy to see that $\peq_c$ is a preorder given that $\subseteq$ is itself a preorder, and the monotonicity conditions for $V_c$ and $\fallible W_c$ are easily verified. We focus on showing that $\rel_c$ is a preorder.

Let us take $\Phi \in W_c$ and $\nec \varphi \in \Phi^+$. 
	Due to the Axiom~\ref{ax:ref:box} and~\ref{rl:mp}, $\varphi \in \Phi^+$. 
	Therefore, $ \Phi^\nec \subseteq \Phi^+$. 
Trivially, $\Phi^{\ps} \subseteq \Phi^{\ps}$, so in conclusion, $\rel_c$ is reflexive. 	

%	Let us assume by contradiction $\varphi \in \Phi^\ps$ but $\varphi \not \in \Phi^-$.
%	Since $\Phi$ is maximal and consistent, $\varphi \in \Phi^+$. 
%	By Axiom~\ref{ax:ref:dia} and~\ref{rl:mp} it follows $\ps \varphi \in \Phi^+$, which contradicts the consistency of $\Phi$.
%	In conclusion, $\nec \Phi \subseteq \Phi$ and $\Phi^\ps \subseteq \Phi^-$ so $\rel_c$ is {\em reflexive}.
	
To see that $\rel_c$ is transitive, let us consider $\Phi, \Psi, \Omega \in W_c$ such that $\Phi \rel_c \Psi$ and $\Psi \rel_c \Omega$. 
	Let us take $\nec \varphi \in \Phi^+$. 
	From Axiom~\ref{ax:trans:box} and~\ref{rl:mp}, $\nec \nec \varphi \in \Phi^+$.
	From $\Phi \rel_c \Psi$ and $\Psi \rel_c \Omega$ we get $\nec \varphi\in \Psi^+$ and $\varphi\in \Omega^+$.
	Therefore $ \Phi^\nec \subseteq \Omega^+$.

	Let us take now $\varphi \in \Phi^{\ps}$. 
	Since $\Phi \rel_c \Psi$ then $\varphi \in \Psi^{\ps}$.
	Since $\Psi \rel_c \Omega$ then $\varphi \in \Omega^{\ps}$.
	This means that $\Phi^{\ps} \subseteq \Omega^{\ps}$.
	Consequently, $\Phi \rel_c\Omega$, so $\rel_c$ is transitive.
\end{proof}

%\begin{proof}
%Since $\Phi^+ \subseteq \Phi^+$, $\peq_c$ is  reflexive.	
%	If $\Phi \peq_c \Psi \peq_c \Omega$ then $\Phi^+ \subseteq \Psi^+ \subseteq \Omega^+$, so $\peq_c$ is {\em %transitive}. 
%\end{proof}

\begin{prop}\label{prop:backward}
$\mathcal M_c^{{\sf CS4}}$ is backward confluent.
In particular, if $\Phi \rel_c \Psi \peq_c \Omega$, then $\Upsilon := (\Phi^+;\varnothing) $ satisfies $\Phi \peq_c \Upsilon \rel_c \Omega$.
\end{prop}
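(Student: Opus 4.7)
The proof is essentially a direct verification, since the explicit witness $\Upsilon := (\Phi^+;\varnothing)$ is given in the statement. The plan is to check three things in order: (i) $\Upsilon$ really is a ${\sf CS4}$-theory; (ii) $\Phi \peq_c \Upsilon$; and (iii) $\Upsilon \rel_c \Omega$. Once these are established, backward confluence of $\mathcal M_c^{{\sf CS4}}$ follows immediately, because given any $\Phi \rel_c \Psi \peq_c \Omega$ in $W_c$, the element $\Upsilon$ is the required intermediate world.

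For (i), I would observe that $\Upsilon^+ = \Phi^+$ is prime (since $\Phi$ is a theory, its positive component is prime by definition), and that the consistency condition ``$\ps \bigvee \Psi \notin \Upsilon^+$ for every nonempty finite $\Psi \subseteq \Upsilon^\ps$'' is vacuously satisfied because $\Upsilon^\ps = \varnothing$. Hence $\Upsilon \in W_c$.

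For (ii), the inclusion $\Phi^+ \subseteq \Upsilon^+$ holds trivially as $\Upsilon^+ = \Phi^+$, so $\Phi \peq_c \Upsilon$ by the definition of $\peq_c$. For (iii), I need $\Upsilon^\nec \subseteq \Omega^+$ and $\Upsilon^\ps \subseteq \Omega^\ps$. The second inclusion is immediate from $\Upsilon^\ps = \varnothing$. The first inclusion chains the assumptions together: $\Upsilon^\nec = \Phi^\nec \subseteq \Psi^+$ by $\Phi \rel_c \Psi$, and $\Psi^+ \subseteq \Omega^+$ by $\Psi \peq_c \Omega$, so $\Upsilon^\nec \subseteq \Omega^+$ as required, giving $\Upsilon \rel_c \Omega$.

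There is no real obstacle here: the key insight is just that by stripping the $\ps$-component of $\Phi$ down to $\varnothing$ we obtain a theory that is still $\peq_c$-above $\Phi$ (indeed, equal to it in the ``$+$'' component) but which trivially satisfies the $\rel_c$ constraint on $\ps$-components. This is why the formulation of bi-intuitionistic theories with a separate $\Phi^\ps$ component is convenient: it makes backward confluence essentially a one-line observation, in contrast to what is needed for forward or downward confluence in later sections.
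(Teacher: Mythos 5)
Your proof is correct and takes essentially the same route as the paper: the same witness $\Upsilon=(\Phi^+;\varnothing)$, the same observation that primeness of $\Phi^+$ (plus the vacuous consistency condition for an empty $\ps$-component) puts $\Upsilon$ in $W_c$, and the same chain $\Upsilon^\nec=\Phi^\nec\subseteq\Psi^+\subseteq\Omega^+$ together with $\Upsilon^\ps=\varnothing\subseteq\Omega^\ps$. If anything, your explicit check of the consistency condition is slightly more complete than the paper's one-line justification.
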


\begin{proof}
Take $\Phi, \Psi$ and $\Omega$ in $W_c$ such that $\Phi \rel_c \Psi \peq_c \Omega$ and let us define $\Upsilon = \left( \Phi^+ , \varnothing \right)$.
$\Phi^+$ is already prime, so $\Upsilon \in W_c$. By definition $\Phi^+ \subseteq \Upsilon^+$, thus $\Phi \peq_c \Upsilon$, and $\Upsilon^\ps \subseteq \Omega^{\ps}$. 
Take $\nec \varphi \in \Upsilon^+$. By definition $\nec \varphi \in \Phi^+$. Since $\Phi \rel_c \Psi \peq_c \Omega$ then $\varphi \in \Omega^+$.
Since $\nec \varphi$ was chosen arbitrarly, $ \Upsilon^\nec \subseteq \Omega^+$, so $\Phi \peq_c \Upsilon \rel_c \Omega$.
\end{proof}

\begin{lem}\label{lemDiamCS4}for all $\Gamma \in W_c$ and for all $\nec \varphi$ and $\ps \varphi$ in $\lanfull$, the following items hold.
	\begin{itemize}
		\item $\ps\varphi\in \Gamma^+$ if and only if for all $\Psi\seq_c \Gamma$ there is $\Delta\in W_c$ such that $\Psi \rel_c \Delta$ and $ \varphi\in \Delta^+$.
		\item $\nec\varphi\in \Gamma^+$ if and only if for all $\Psi$ and $\Delta$ such that $\Gamma \peq_c \Psi\rel_c \Delta$, $ \varphi\in \Delta^+$.
	\end{itemize}	
\end{lem}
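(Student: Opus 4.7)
The plan is to handle the four implications that make up the two biconditionals separately: two that follow directly from the definition of $\rel_c$, and two that require building witness theories via Lindenbaum's lemma (Lemma~\ref{lem:lindembaum}). The forward direction of the $\nec$ clause is trivial---if $\nec \varphi \in \Gamma^+$ and $\Gamma \peq_c \Psi \rel_c \Delta$, then $\nec \varphi \in \Psi^+$, hence $\varphi \in \Psi^\nec \subseteq \Delta^+$---and the backward direction of the $\ps$ clause is almost as easy: assuming $\ps \varphi \notin \Gamma^+$, I set $\Psi := (\Gamma^+; \{\varphi\})$, which is a theory because both $\ps \varphi$ and $\bot$ lie outside $\Gamma^+$. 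Any $\Delta$ with $\Psi \rel_c \Delta$ then has $\varphi \in \Delta^\ps$, forcing $\ps \varphi \notin \Delta^+$, and deductive closure together with \ref{ax:ref:dia} excludes $\varphi$ from $\Delta^+$ as well.

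For the converse of the $\nec$ clause I argue by contraposition: assuming $\nec \varphi \notin \Gamma^+$, take $\Psi := (\Gamma^+; \varnothing)$ (trivially a theory) and consider the pre-theory $\Theta := (\Gamma^\nec; \varnothing)$. The essential step is to verify that $\Theta$ is $\varphi$-consistent, which here reduces to $\Gamma^\nec \not\vdash \varphi$. If it were not, then $\bigwedge \Gamma_0 \vdash \varphi$ for some finite $\Gamma_0 \subseteq \Gamma^\nec$; applying \ref{rl:nec} and \ref{ax:k:box} would give $\nec \bigwedge \Gamma_0 \to \nec \varphi$, and combined with $\nec \bigwedge \Gamma_0 \in \Gamma^+$ (derived from each $\nec \alpha \in \Gamma^+$ by repeated use of \ref{ax:k:box}) this would force $\nec \varphi \in \Gamma^+$, contradicting the assumption. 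Lindenbaum then extends $\Theta$ to the desired $\rel_c$-successor $\Delta$ of $\Psi$ with $\varphi \notin \Delta^+$.

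The forward direction of the $\ps$ clause is the principal technical point. Given $\ps \varphi \in \Gamma^+$ and $\Psi \seq_c \Gamma$, I will define $\Theta := (\Psi^\nec \cup \{\varphi\}; \Psi^\ps)$; it suffices by Lindenbaum to show that $\Theta$ is consistent, after which any extending theory $\Delta$ witnesses the claim. Suppose for contradiction that $\Psi^\nec, \varphi \vdash \ps \bigvee \Delta'$ for some finite $\Delta' \subseteq \Psi^\ps$, so that $\vdash \bigwedge \Gamma_0 \to (\varphi \to \ps \bigvee \Delta')$ for some finite $\Gamma_0 \subseteq \Psi^\nec$. Applying \ref{rl:nec} and \ref{ax:k:box} yields $\nec(\varphi \to \ps \bigvee \Delta') \in \Psi^+$; \ref{ax:k:dia} then produces $\ps \varphi \to \ps \ps \bigvee \Delta' \in \Psi^+$; and \ref{ax:trans:dia} finally collapses this to $\ps \bigvee \Delta' \in \Psi^+$, contradicting that $\Psi$ is a theory (the edge case $\Delta' = \varnothing$ uses the convention $\ps \bigvee \varnothing := \bot$ together with properness of $\Psi^+$). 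The main obstacle is locating this specific chain of axioms \ref{ax:k:box}, \ref{ax:k:dia}, \ref{ax:trans:dia}; once it is in hand, the remainder is a routine invocation of Lindenbaum's lemma.
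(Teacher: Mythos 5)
Your proposal is correct and follows essentially the same route as the paper's own proof: the same trivial directions, the same witnesses $(\Gamma^+;\{\varphi\})$ and $(\Gamma^\nec;\varnothing)$ extended via Lemma~\ref{lem:lindembaum}, and the same \ref{rl:nec}/\ref{ax:k:box}/\ref{ax:k:dia}/\ref{ax:trans:dia} chain (together with $\ps\varphi\in\Gamma^+\subseteq\Psi^+$) showing that $(\Psi^\nec,\varphi;\Psi^\ps)$ is consistent. The only soft spot is your parenthetical for $\Delta'=\varnothing$, since properness of $\Psi^+$ is not guaranteed in $W_c$ (the fallible theory is a member) and the chain yields $\ps\bot\in\Psi^+$ rather than $\bot\in\Psi^+$; however, the paper's proof glosses exactly the same edge case, so this does not separate the two arguments.
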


\begin{proof}
	We start with the first item. From left to right, assume that $\ps\varphi \in \Gamma^+$, and let $\Psi\seq_c\Gamma$. We claim that there is $\Delta\ler_c \Psi$ such that $\varphi \in \Delta^+$. 
	Let us take $\Upsilon = (  \Psi^\nec, \lbrace\varphi \rbrace;  \Psi^\ps)$. 
	We show that $\Upsilon$ is consistent.
	If not, let $\chi_1,\ldots,\chi_n \in  \Psi^\nec$ and $\psi_1,\ldots,\psi_n \subseteq \Psi^\ps$ be such that, for $\chi :=\bigwedge_i \chi_i$ and $\psi :=\bigvee_i \psi_i$, ${\sf CS4} \vdash \chi \wedge \varphi \to \ps \psi$, so that ${\sf CS4} \vdash \chi \to\left( \varphi \to \ps\psi\right)$.
	By~\ref{rl:nec} and~\ref{ax:k:box} it follows that ${\sf CS4} \vdash \nec \chi \to \nec \left(\varphi \to \ps\psi\right)$.
	By~\ref{rl:mp}, $\nec \left(\varphi \to \ps\psi\right) \in \Psi^+$ and, by Axiom~\ref{ax:k:dia}, $\ps\varphi \to \ps\ps\psi \in \Psi^+$.
	Since $\Gamma \peq_c \Psi$ and $\ps \varphi \in \Gamma^+$ then $\ps \varphi \in \Psi^+$.
	By~\ref{rl:mp}, $\ps \ps \psi \in \Psi^+$.
	By Axiom~\ref{ax:trans:dia}, $\ps \psi \in \Psi^+$, which contradicts the consistency of $\Psi$.
	We conclude that $\Upsilon$ is consistent.
	By Lemma~\ref{lem:lindembaum}, $\Upsilon$ can be extended to a theory $\Delta= (\Delta^+;\Delta^\ps)$, such that $\Delta \in W_c$, $\nec \Psi^+ \subseteq \Delta^+$ and $\Psi^\ps \subseteq \Delta^{\ps}$, therefore $\Psi \rel_c \Delta$. 
	Moreover $\varphi \in \Delta^+$, as needed.
	
	Conversely, let us assume that $\ps \varphi \not\in \Gamma^+$ and let us define $\Psi = (\Gamma^+;\lbrace \varphi \rbrace)$.
 	It is easy to see that $\Psi$ is consistent, and since $\Gamma^+$ is prime, $\Psi \in W_c$. 
%	By Lemma~\ref{lem:lindembaum}, $\Upsilon$ can be extended to a maximal consistent theory $\Psi \in W_c$ such that $\Gamma^+ \subseteq \Psi^+$. 
	Moreover, $\Gamma \peq_c \Psi$.
	We claim that for all $\Delta \in W_c$, if $\Psi \rel_c \Delta$ then 
	$\varphi \not \in \Delta^+$.
	To prove it, let us take any $\Delta \in W_c$ satisfying $\Psi \rel_c \Delta$. 
	By definition, $\Psi^\ps \subseteq \Delta^{\ps}$, so $\varphi \in \Delta^{\ps}$.
	By definition, $\ps \varphi \not \in \Delta^+$, so $\varphi \not \in \Delta^+$ because of Axiom~\ref{ax:ref:dia}, as needed.

	Let us consider now the case of $\nec \varphi$. From left to right, let $\Psi$ and $\Delta$ be such that $\nec\varphi \in \Gamma^+$ and $\Gamma \peq_c \Psi \rel_c \Delta$. We claim that $\varphi \in \Delta^+$.
	Since $\Gamma \peq_c \Psi \rel_c \Delta$, $\Gamma^+ \subseteq \Psi^+$ and $ \Psi^\nec \subseteq \Delta^+$. Since $\nec \varphi \in \Gamma^+$ then $\varphi \in \Delta^+$, as needed.

	Conversely, let us assume that $\nec \varphi \not \in \Gamma^+$ and let us define $\Psi  = (\Gamma^+;\varnothing)$. Obviously, $\Psi$ is a theory and it satisfies $\Gamma \peq_c \Psi$.
	Let us take $\Upsilon = ( \Psi^\nec , \varnothing )$. $\Upsilon$ is $\varphi$-consistent, since otherwise \ref{rl:nec},  \ref{ax:k:box}, and \ref{rl:mp} would yield that $\nec\Psi^\nec \vdash \nec\varphi$, contradicting $\nec \varphi \not \in \Gamma^+$.
	In view of Lemma~\ref{lem:lindembaum}, $\Upsilon$ can be extended to a $\varphi$-consistent theory $\Delta = (\Delta^+;\varnothing) \in W_c$ such that $\Upsilon^+ \subseteq \Delta^+$. By definition of $\Upsilon$, $\Psi \rel_c \Delta$ and $\varphi \not \in \Delta^+$, as needed.  
\end{proof}

\begin{lem}[Truth Lemma]\label{lem:truth-lemma}
	For any theory $\Phi \in W_c$ and $\varphi \in \lanfull$, 
\[\varphi \in \Phi^+ \Leftrightarrow ( \mathcal M_c^{{\sf CS4}}, \Phi) \models \varphi.\]
\end{lem}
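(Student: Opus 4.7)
The proof will be by induction on the complexity of $\varphi$. The base cases are essentially definitional: for $p\in\mathbb P$, $(\mathcal M_c^{\sf CS4},\Phi)\models p$ iff $\Phi\in V_c(p)$ iff $p\in\Phi^+$, by the definition of $V_c$. For $\bot$, observe that if $\bot\in\Phi^+$ then deductive closure forces $\Phi^+=\lanfull$, and then the theory condition on $\Phi^\ps$ (which demands $\ps\bigvee\Psi\notin\Phi^+$ for every nonempty finite $\Psi\subseteq\Phi^\ps$) forces $\Phi^\ps=\varnothing$, so $\Phi=(\lanfull;\varnothing)\in\fallible W_c$; the converse is immediate.

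The Boolean cases $\wedge$ and $\vee$ follow from the saturation properties of theories listed just after the Lindenbaum lemma, together with the trivial directions from intuitionistic introduction/elimination. The implication case is the one place where fresh work is needed: suppose $\varphi\to\psi\notin\Phi^+$; then I form the pre-theory $(\Phi^+,\varphi;\varnothing)$ and argue it is $\psi$-consistent, since otherwise $\Phi^+,\varphi\vdash\psi$ would give $\varphi\to\psi\in\Phi^+$ by deductive closure. Lemma~\ref{lem:lindembaum} then extends it to a theory $\Psi$ with $\Phi\peq_c\Psi$, $\varphi\in\Psi^+$ and $\psi\notin\Psi^+$, and the induction hypothesis yields $(\mathcal M_c^{\sf CS4},\Psi)\models\varphi$ but $(\mathcal M_c^{\sf CS4},\Psi)\not\models\psi$, falsifying $\varphi\to\psi$ at $\Phi$. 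The converse is monotonicity of $\subseteq$ applied via the induction hypothesis.

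The modal cases $\ps\varphi$ and $\nec\varphi$ require no new combinatorics: Lemma~\ref{lemDiamCS4} characterizes $\ps\varphi,\nec\varphi\in\Gamma^+$ in exactly the shape demanded by the semantic clauses, so combining it with the induction hypothesis applied to $\varphi$ closes these cases immediately. In my view the only genuinely subtle point in the whole argument is ensuring, in the implication case and implicitly inside Lemma~\ref{lemDiamCS4}, that the theories produced by Lindenbaum are proper (i.e., lie outside $\fallible W_c$) so that atomic satisfaction is governed by membership in $V_c$ rather than by the $\fallible W$ clause; this is automatic from $\psi$- or $\varphi$-consistency whenever the witnessing formula is not $\bot$, and the $\bot$ subcase is degenerate.
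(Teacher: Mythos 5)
Your proof is correct and follows essentially the same route as the paper's: induction on $\varphi$, with the implication case handled by extending the $\psi$-consistent pre-theory $(\Phi^+,\varphi;\varnothing)$ via Lemma~\ref{lem:lindembaum}, and the modal cases delegated entirely to Lemma~\ref{lemDiamCS4}. Your explicit treatment of $\bot$ (showing that $\bot\in\Phi^+$ forces $\Phi=(\lanfull;\varnothing)\in\fallible W_c$) is a detail the paper leaves implicit, and it is argued correctly.
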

\begin{proof}
	By induction on the complexity of $\varphi$.
	The case of propositional variables is proven by the definition of $V_c$.
	The case of $\wedge$ and $\vee$ are proved by induction.
	We consider the $\to$ connective next.

Assume that $\varphi \to \psi \in \Phi^+$ and let $\Psi \seq_c \Phi$ be such that $(\mathcal M_c^{{\sf CS4}}, \Psi )\models \varphi$.
By the induction hypothesis, $\varphi\in \Psi^+$.
		Since $\Phi \peq_c \Psi$ then $\varphi \to \psi \in \Psi^+$.
		By~\ref{rl:mp}, $\psi \in \Psi^+$, and by the induction hypothesis, $(\mathcal M_c^{{\sf CS4}}, \Psi ) \models  \psi$.
Since $\Psi$ was arbitrary, $(\mathcal M_c^{{\sf CS4}}, \Phi ) \models \varphi\to \psi$.

If $\varphi \to \psi \not \in \Phi^+$, let us take $\Upsilon = (\Phi^+, \lbrace \varphi \rbrace; \varnothing)$.
		We claim that $\Upsilon$ is $\psi$-consistent. 
		If not, by definition of $\vdash$, there exists $\chi \in \Phi^+$ such that ${\sf CS4} \vdash \chi \wedge  \varphi \to \psi$. 
		It follows that  ${\sf CS4} \vdash \chi \to\left(  \varphi \to \psi\right)$.
		Since $\chi \in \Phi^+$ then $\varphi \to \psi \in \Phi^+$: a contradiction.
		Hence $\Upsilon$ is $\psi$-consistent, so that by Lemma~\ref{lem:lindembaum}, $\Upsilon$ can be extended to a maximal $\psi$-consistent $\Psi=(\Psi^+;\varnothing)$ such that $\Upsilon^+ \subseteq \Phi^+$.
		By induction on $\varphi$ and $\psi$, $(\mathcal M_c^{{\sf CS4}}, \Psi) \models \varphi$ and  $(\mathcal M_c^{{\sf CS4}}, \Psi) \not \models \psi$.
		By how $\Upsilon$ is defined, $\Phi^+ \subseteq \Psi^+$, so $\Phi \peq_c \Psi$.
		Therefore, $(\mathcal M_c^{{\sf CS4}}, \Phi) \not \models \varphi \to \psi$
	
	The case of the $\nec \varphi$ formulas is proved next.
If $\nec \varphi \in \Phi^+$, by Lemma~\ref{lemDiamCS4}, for all $\Gamma \peq_c \Psi \rel_c \Omega$, $\varphi \in \Omega^+$.
		By induction $(\mathcal M_c^{{\sf CS4}}, \Omega) \models \varphi$.
		As a consequence, $(\mathcal M_c^{{\sf CS4}}, \Phi) \models \nec \varphi$.
%		\item Assume by contradiction that $\mathcal M_c^{{\sf CS4}}, \Phi \not \models \nec \varphi$. Therefore there exists $\Psi$ and $\Omega$ such that $\Phi \peq_c \Psi \rel_c \Omega$ and $\mathcal M_c^{{\sf CS4}}, \Omega \not \models \varphi$.
%		By induction $\varphi \not \in \Omega^+$. 
%		However, since $\Phi \peq_c \Psi \rel_c \Omega$ and $\nec \varphi \in \Phi^+$ it follows $\varphi \in \Omega^+$: a contradiction.

If $\nec \varphi \not \in \Phi^+$, thanks to Lemma~\ref{lemDiamCS4}, there exist $\Phi \peq_c \Psi \rel_c \Omega$ such that $\varphi \not \in \Omega^+$. By induction $(\mathcal M_c^{{\sf CS4}}, \Omega) \not \models \varphi$. As a consequence, $(\mathcal M_c^{{\sf CS4}}, \Phi) \not \models \nec \varphi$.		
%		Let us define $\Psi  = (\Phi^+,\varnothing)$. Obviously, $\Psi$ is maximally consistent and it satisfies $\Phi \peq_c \Psi$.
%		Let us take $u = (\nec \Psi^+ , \varnothing )$. $u$ is trivially consistent and, moreover, $u \not \vdash \varphi$. 
%		Thanks to Lemma~\ref{lem:lindembaum}, $u$ can be extended to a maximal consistent theory $\Omega = (\Omega^+,\varnothing) \in W_c$ such that $u \subseteq \Omega^+$ and $\varphi \not \in \Omega^+$.
%		From the definition of $u$, it follows that satisfies $\Psi \rel_c \Omega$. 		
%		By induction hypothesis, $\mathcal M_c^{{\sf CS4}}, \Omega \not \models \varphi$.
%		Since $\Phi \peq_c \Psi \rel_c \Omega$, $\mathcal M_c^{{\sf CS4}}, \Phi \not \models \nec \varphi$.

%\comment{M: Check case of diamond}	
	We finish by considering the case of $\ps \varphi$. If $\ps\varphi \in \Phi^+$, by Lemma~\ref{lemDiamCS4} for all $\Phi \peq_c \Psi$ there exists $\Psi \rel_c \Omega$ such that $\varphi \in \Omega^+$.
		By induction $( \mathcal M_c^{{\sf CS4}}, \Omega ) \models \varphi$.
		Consequently, $  ( \mathcal M_c^{{\sf CS4}}, \Phi ) \models \ps\varphi$.
		
%		but $\mathcal M_c^{{\sf CS4}}, \Phi \not \models \ps \varphi$.
%		This means that there exists $\Phi\peq_c\Psi$ such that for all $\Omega$ if $\Psi \rel_c \Omega$ then $\mathcal M_c^{{\sf CS4}}, \Omega \not \models \varphi$.
%		Let us take $u = (\nec \Psi^+ \cup \lbrace\varphi \rbrace,  \Psi^\ps)$. 
%		Let us assume, again, by contradiction that $u$ is not consistent.
%		Therefore, there exists $\nec \chi \in  \Psi^+$ and $\psi \in \Psi^\ps$ such that ${\sf CS4} \vdash \chi \wedge \varphi \to \ps \psi$, so ${\sf CS4} \vdash \chi \to\left( \varphi \to \ps\psi\right)$. 
%		By~\ref{rl:nec} and~\ref{ax:k:box} it follows that ${\sf CS4} \vdash \nec \chi \to \nec \left(\varphi \to \ps\psi\right)$.
%		By~\ref{rl:mp}, $\nec \left(\varphi \to \ps\psi\right) \in \Psi^+$ and, by Axiom~\ref{ax:k:dia}, $\ps\varphi \to \ps\ps\psi \in \Psi^+$.
%		Since $\Phi \peq_c \Psi$ and $\ps \varphi \in \Phi^+$ then $\ps \varphi \in \Psi^+$.
%		By~\ref{rl:mp}, $\ps \ps \psi \in \Psi^+$.
%		By Axiom~\ref{ax:trans:dia}, $\ps \psi \in \Psi^+$, which contradicts the consistency of $\Psi$.
%		By Lemma~\ref{lem:lindembaum}, $u$ can be extended to a maximal consistent theory $\Omega= (\Omega^+,\Psi^\ps)$, such that $\Omega \in W_c$,
%		$\nec \Psi^+ \subseteq \Omega^+$ and $\Psi^\ps \subseteq \Omega^{\ps}$, therefore $\Phi \rel_c \Omega$.
%		Since $\varphi \in \Omega^+$ then, by induction, $\mathcal M_c^{{\sf CS4}}, \Omega \models \varphi$.
%		Since $\Phi \peq_c \Psi \rel_c \Omega$ then $\mathcal M_c^{{\sf CS4}}, \Omega \not \models \varphi$: a contradiction.

If $\ps \varphi \not\in \Phi^+$, by Lemma~\ref{lemDiamCS4}, there exists $\Phi \peq_c \Psi$ such that for all $\Psi \rel_c \Omega$, $\varphi \not \in \Omega^+$. By induction, if $\Psi \rel_c \Omega$ then $ ( \mathcal M_c^{{\sf CS4}}, \Omega ) \not \models \varphi$. From the semantics, $( \mathcal M_c^{{\sf CS4}}, \Phi ) \not \models \ps \varphi$.
%		 and let us define $u = (\Phi^+,\lbrace \varphi \rbrace)$. 
%		It is easy to see that $u$ is consistent. 
%		Therefore, by Lemma~\ref{lem:lindembaum}, $u$ can be extended to a maximal consistent theory $\Psi \in W_c$ such that $\Phi^+ \subseteq u^+ \subseteq \Psi^+$. 
%		Therefore $\Phi \peq_c \Psi$.
%		We claim that for all $\Omega \in W_c$ if $\Psi \rel_c \Omega$ then  $\mathcal M_c^{{\sf CS4}}, \Omega \not \models \varphi$. 
%		To prove it, let us take any $\Omega \in W_c$ satisfying $\Psi \rel_c \Omega$. 
%		By definition, $\Psi^\ps \subseteq \Omega^{\ps}$, so $\varphi \in \Omega^{\ps}$.
%		By induction hypothesis, we reach $\mathcal M_c^{{\sf CS4}}, \Omega \not \models \varphi$.
%		Since $\Omega$ was arbitrary, $\mathcal M_c^{{\sf CS4}}, \Phi \not \models \ps \varphi$.
\end{proof}

It follows from the above considerations that $\sf CS4$ is complete for its class of models, and in particular, any formula that is not derivable is falsifiable on $\mathcal M^{\sf CS4}$.
We defer the formal statement to Section \ref{secCompS4I}, after we have constructed the canonical models for the other logics we consider.

\section{Completeness of ${\sf GS4}$}\label{secCompGS4}

In this section we show that $\sf GS4$ is complete by techniques analogous to those used for the completeness of $\sf CS4$.
In particular, we continue working with theories.
Say that a theory $\Phi$ is {\em precise} if for each formula $\psi$, $\ps \psi \in \Phi^+ \cup \Phi^{\ps}$.
%\comment{M: replace $\cup$ by ``,''?}\comment{D: I think $\cup$ is fine in this instance.}
In the case of a precise theory, any false instance of $\ps\psi$ is already falsified via $\rel$, according to the semantics for $\ps$ on $\sf GS4$ frames, as given by Lemma \ref{lemClassical}.\footnote{In fact, the component $\Phi^{\ps}$ is not required for treating $\sf GS4$, but it is convenient for the sake of keeping some later proofs uniform.}

We need the following strengthening of Lemma \ref{lem:lindembaum} which yields {\em precise} theories.

\begin{lem}\label{lem:lindembaumPrecise} Any $\Xi$-consistent pre-theory $(\Phi^+; \Phi^{\ps})$ can be extended to a precise, $\Xi$-consistent theory $(\Phi^+_*; \Phi^{\ps})$ such that $\Phi^+ \subseteq \Phi^+_*$.
\end{lem}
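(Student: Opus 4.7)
\medskip\noindent\emph{Proof proposal.}
The strategy is a Lindenbaum-style chain construction that, for each formula $\psi$, decides whether to insert $\ps\psi$ into the positive side or $\psi$ into the $\Phi^\ps$ side. The essential algebraic input is axiom~\ref{ax:dp}, which is available since ${\sf GS4}$ extends ${\sf IS4}$. I read the statement as permitting the $\Phi^\ps$ component to grow along with $\Phi^+$, so that the output is really a pair $(\Phi^+_*;\Phi^\ps_*)$ with $\Phi^\ps\subseteq\Phi^\ps_*$; literally keeping $\Phi^\ps$ unchanged would fail on $\psi=\bot$ because axiom~\ref{ax:null} forbids $\ps\bot\in\Phi^+_*$.

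Enumerate the formulas of $\lanfull$ as $\psi_0,\psi_1,\ldots$ and start with $(\Phi^+_0;\Phi^\ps_0):=(\Phi^+;\Phi^\ps)$. At stage $n$, if $(\Phi^+_n\cup\{\ps\psi_n\};\Phi^\ps_n)$ is $\Xi$-consistent, put $\Phi^+_{n+1}:=\Phi^+_n\cup\{\ps\psi_n\}$ and $\Phi^\ps_{n+1}:=\Phi^\ps_n$; otherwise put $\Phi^+_{n+1}:=\Phi^+_n$ and $\Phi^\ps_{n+1}:=\Phi^\ps_n\cup\{\psi_n\}$. Taking unions yields a $\Xi$-consistent pair $(\Phi^+_\omega;\Phi^\ps_*)$ satisfying, for every $\psi$, either $\ps\psi\in\Phi^+_\omega$ or $\psi\in\Phi^\ps_*$; this is the precision condition. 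A final application of Lemma~\ref{lem:lindembaum} to the positive component extends $\Phi^+_\omega$ to a prime $\Xi$-consistent $\Phi^+_*$ while leaving $\Phi^\ps_*$ fixed. Primeness together with $\Xi$-consistency ensures the output is a theory, and precision is preserved since only enlarging $\Phi^+$ cannot break the disjunction ``$\ps\psi\in\Phi^+_*$ or $\psi\in\Phi^\ps_*$''.

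The main step---and the one I expect to be the technical obstacle---is verifying that the ``otherwise'' branch of the iteration really does preserve $\Xi$-consistency. Suppose on the contrary that $(\Phi^+_n;\Phi^\ps_n\cup\{\psi_n\})$ is $\Xi$-inconsistent: then some finite $\Delta'\subseteq\Phi^\ps_n\cup\{\psi_n\}$ satisfies $\Phi^+_n\vdash\Xi,\ps\bigvee\Delta'$, and by $\Xi$-consistency of $(\Phi^+_n;\Phi^\ps_n)$ we must have $\psi_n\in\Delta'$; write $\Delta'=\Delta''\cup\{\psi_n\}$ with $\Delta''\subseteq\Phi^\ps_n$. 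Axiom~\ref{ax:dp} gives $\vdash\ps\bigvee\Delta'\to\ps\psi_n\vee\ps\bigvee\Delta''$, so $\Phi^+_n\vdash\Xi,\ps\psi_n,\ps\bigvee\Delta''$. Meanwhile, the failure of the positive branch at stage $n$ provides a finite $\Delta\subseteq\Phi^\ps_n$ with $\Phi^+_n,\ps\psi_n\vdash\Xi,\ps\bigvee\Delta$; the intuitionistic deduction theorem recasts this as $\Phi^+_n\vdash\ps\psi_n\to(\Xi\vee\ps\bigvee\Delta)$. Combining the two derivations to eliminate the disjunct $\ps\psi_n$, and then applying the ${\sf CS4}$-derivable schema $\ps\alpha\vee\ps\beta\to\ps(\alpha\vee\beta)$, yields $\Phi^+_n\vdash\Xi,\ps\bigvee(\Delta\cup\Delta'')$ with $\Delta\cup\Delta''\subseteq\Phi^\ps_n$, directly contradicting $\Xi$-consistency at stage $n$. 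With this dichotomy established, the rest of the argument is routine bookkeeping.
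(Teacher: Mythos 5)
Your proposal is correct, and it reaches the lemma by a slightly different route than the paper. The paper's (sketched) proof is a one-shot construction: first apply Lemma~\ref{lem:lindembaum} to get a maximal $\Xi$-consistent $\Phi^+_*$ over the \emph{fixed} $\Phi^\ps$, and then simply \emph{define} the new diamond component as $\Phi^\ps_* = \{\varphi : \ps\varphi\notin\Phi^+_*\}$, so that precision holds by definition and all the work is hidden in the unstated check that $(\Phi^+_*;\Phi^\ps_*)$ is still $\Xi$-consistent and a theory. That check is exactly the computation you carry out stage by stage: by maximality, each $\ps\varphi_i\notin\Phi^+_*$ yields a defect $\Delta_i\subseteq\Phi^\ps$ with $\Phi^+_*,\ps\varphi_i\vdash\Xi,\ps\bigvee\Delta_i$, and one merges these using~\ref{ax:dp} together with the $\sf CS4$-derivable $\ps\alpha\vee\ps\beta\to\ps(\alpha\vee\beta)$ (monotonicity of $\ps$ via~\ref{rl:nec} and~\ref{ax:k:dia}), just as in your dichotomy argument. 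So your enumeration-plus-final-Lindenbaum organization buys an explicit proof of the step the paper waves through, at the cost of an extra chain construction; the paper's version is shorter but leaves the \ref{ax:dp}-dependent consistency verification implicit. Two further remarks: your reading of the statement is the right one and matches what the paper's proof actually does --- the $\ps$-component must be allowed to grow (your $\bot$ example is a clean witness, given~\ref{ax:null}); and in your merging step you should say a word about the degenerate cases $\Delta''=\varnothing$ or $\Delta=\varnothing$, which are covered by the convention $\ps\bigvee\varnothing:=\bot$ and cause no difficulty. Note also that~\ref{ax:dp} is available in both places the lemma is used ($\sf GS4$ and $\sf S4I$), so your reliance on it is harmless.
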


\begin{proof}[Proof sketch]
First, extend $\Phi$ to $(\Phi^+_*,\Phi^\ps)$ as in the proof of Lemma \ref{lem:lindembaum}.
Then, define $\Phi^{\ps}_* = \{\varphi: \ps\varphi\not\in \Phi^+_*\} $.
It is not hard to check that $\Phi_*$ is a precise theory extending $\Phi$.
\end{proof}

In fact Lemma \ref{lem:lindembaumPrecise} already holds over $\sf CS4$, but precise theories were not needed then.
We define the canonical model for ${\sf GS4}$ as $\mathcal M_c^{{\sf GS4}}=(W_c,\peq_c,\rel_c, V_c)$, where
\begin{enumerate*}[label=\alph*)]
	\item $W_c$ is the set of precise, consistent\footnote{Following the literature on $\sf IS4$, $\sf GS4$ accepts the axiom $\ref{ax:null}:=\neg\ps\bot$, which is valid on the class of infallible models, i.e., $\bot$ should be false on all worlds; this is why we only accept consistent theories.
	Note, however, that our proofs do not rely on this axiom: a `fallible' version of $\sf GS4$ could be considered, as well as an `infallible' version of $\sf CS4$, and our proofs and results would go through mostly unchanged.} $\sf GS4$-theories $\Psi = (\Psi^+;\Psi^{\ps})$;
	\item $\fallible W_c =\varnothing$;
	\item $\peq_c$, $\rel_c$ and $V_c$ are defined as for $ M_c^{{\sf CS4}}$.	
	%	 \subseteq W_c\times W_c$ defined as $\Phi \peq_c \Psi$ iff $\Phi^+ \subseteq \Psi^+$;
	%	\item $\rel_c \subseteq W_c\times W_c$ defined as $\Phi \rel_c \Psi$ iff $\nec \Phi^+ \subseteq \Psi^+$ and $\Phi^{\ps} \subseteq \Psi^{\ps}$
	%	\item $V_c$ is a valuation function defined as $V(p)  = \lbrace \Phi \mid p \in \Phi^+  \rbrace$.
\end{enumerate*}
We leave to the reader to verify that $\mathcal M_c^{{\sf GS4}} $ is an infallible bi-intuitionistic model, and instead focus on showing that it satisfies the $\sf GS4$ frame conditions of local linearity and forward and backward confluence.

%We consider the same definition of the canonical model for ${\sf GS4}$ as $\mathcal M_c^{{\sf GS4}}= \mathcal M_c^{{\sf S4I}}$.
%\begin{itemize}
%	\item $W_c$ is the set of maximal consistent theories $\Psi = (\Psi^+,\Psi^{\ps})$ satisfying 
%	\begin{enumerate}
%		\item $\bot \not \in \Psi^+$;
%		\item $\Psi^{\ps} = \lbrace \psi \mid \ps \psi \not \in \Psi^+ \rbrace $; \comment{M: Lindembaun Lemma should be different now.}
%	\end{enumerate}
%	\item $\peq_c \subseteq W_c\times W_c$ defined as $\Phi \peq_c \Psi$ iff $\Phi^+ \subseteq \Psi^+$;
%	\item $\rel_c \subseteq W_c\times W_c$ defined as $\Phi \rel_c \Psi$ iff $\nec \Phi^+ \subseteq \Psi^+$ and $\ps \Phi^{\ps} \subseteq \Psi^{\ps}$
%	\item $V_c$ is a valuation function defined as $V(p)  = \lbrace \Phi \mid p \in \Phi^+  \rbrace$.
%\end{itemize}

\begin{lem} $\mathcal M_c^{\sf GS4}$ is locally linear.
% \comment{P: It does not matter whether theories are precise or not, right?}\comment{D: It is important to work with precise theories for $\sf GS4$ because of the semantics of $\ps$ due to Lemma \ref{lemClassical}.}
\end{lem}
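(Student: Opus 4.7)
The plan is to unfold the definition of local linearity and reduce it directly to an instance of the Gödel-Dummett axiom \ref{ax:g}, which is available in ${\sf GS4}$. Concretely, local linearity asks that whenever $\Phi \peq_c \Psi$ and $\Phi \peq_c \Omega$, we have $\Psi \peq_c \Omega$ or $\Omega \peq_c \Psi$; by the definition of $\peq_c$ on $\mathcal M_c^{\sf GS4}$, this amounts to showing $\Psi^+\subseteq \Omega^+$ or $\Omega^+\subseteq\Psi^+$ whenever $\Phi^+\subseteq \Psi^+\cap \Omega^+$.

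First, I would argue by contradiction. Assume neither inclusion holds and pick witnesses $\varphi\in \Psi^+\setminus \Omega^+$ and $\psi\in \Omega^+\setminus \Psi^+$. The key observation is that $(\varphi\to\psi)\vee(\psi\to\varphi)$ is a theorem of ${\sf GS4}$ by axiom \ref{ax:g}, and hence lies in $\Phi^+$ since theories are deductively closed. Primeness of $\Phi^+$ then gives a case split: either $\varphi\to\psi\in \Phi^+$ or $\psi\to\varphi\in \Phi^+$.

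Next, I would use the assumption $\Phi^+\subseteq \Psi^+\cap \Omega^+$ to transport the chosen disjunct upward. In the first case, $\varphi\to\psi\in \Psi^+$, and combining with $\varphi\in \Psi^+$ via \ref{rl:mp} and deductive closure yields $\psi\in \Psi^+$, contradicting the choice of $\psi$. The second case is symmetric, yielding $\varphi\in\Omega^+$ and contradicting the choice of $\varphi$. Either branch refutes the assumption that both inclusions fail, which establishes local linearity.

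There is no real obstacle here: the argument is essentially the standard Gödel-Dummett trick for linearity of prime theories. The only things to double-check are that \ref{ax:g} is indeed an axiom of ${\sf GS4}$ (it is, via the definition ${\sf GS4} = {\sf IS4} + \ref{ax:g}$) and that membership is preserved under $\peq_c$, which is immediate from its definition as $\subseteq$ on the positive components. The preciseness and consistency conditions built into the definition of $W_c$ for $\mathcal M_c^{\sf GS4}$ play no role in this particular lemma; primeness of $\Phi^+$ is all that is used.
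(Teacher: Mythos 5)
Your proof is correct and follows essentially the same route as the paper: a proof by contradiction using witnesses $\varphi\in\Psi^+\setminus\Omega^+$, $\psi\in\Omega^+\setminus\Psi^+$, the G\"odel--Dummett axiom \ref{ax:g} together with deductive closure and primeness of $\Phi^+$, and the inclusion $\Phi^+\subseteq\Psi^+\cap\Omega^+$. The only cosmetic difference is that you case-split on which disjunct lies in $\Phi^+$ and apply \ref{rl:mp} upstairs, while the paper shows neither implication can belong to $\Phi^+$ and contradicts primeness directly; these are the same argument.
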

\begin{proof} 
	Assume toward a contradiction that $\peq_c$ is not locally linear. 
	Let the theories $\Phi$, $\Psi$ and $\Omega$ be such that $\Phi \peq_c \Psi$, $\Phi \peq_c \Omega$ but $\Psi \not \peq_c \Omega$ and $\Omega \not \peq_c \Psi$. 
	From the definition of $\peq_c$ we get that $\Psi^+ \not \subseteq \Omega^+$ and $\Omega^+ \not \subseteq \Psi^+$.
	Therefore, there exists two formulas $\varphi$ and $\psi$ such that $\varphi \in \Psi^+$, $\varphi \not\in \Omega^{+}$, $\psi \in \Omega^+$ and $\psi \not\in \Psi^{+}$.
	Therefore $\varphi \to \psi \not \in \Psi^+ \supseteq \Phi^+$ and $\psi \to \varphi \not \in \Omega^+ \supseteq \Phi^+$. 
	Consequently, $(\varphi \to \psi) \vee (\psi \to \varphi) \not \in \Phi^{+}$: a contradiction.
\end{proof}

\begin{lem}
$\mathcal M_c^{\sf GS4}$ is forward and backward confluent.
\end{lem}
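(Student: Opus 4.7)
The plan is to treat both directions with the same recipe: for each direction, I would identify a pre-theory $\Upsilon_0$ built from the available data, show that it is $\Xi$-consistent for a suitable auxiliary $\Xi$, and then extend it via Lemma~\ref{lem:lindembaumPrecise} to a precise theory $\Upsilon \in W_c$ that witnesses the required confluence.

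For forward confluence, given $\Phi \peq_c \Phi'$ and $\Phi \rel_c \Psi$, I would take $\Upsilon_0 = (\Phi'^\nec \cup \Psi^+;\,\Phi'^\ps)$ and $\Xi = \varnothing$. Suppose toward a contradiction that $\Upsilon_0$ fails $\varnothing$-consistency, so there are $\sigma \in \Psi^+$, $\tau_1,\dots,\tau_n \in \Phi'^\nec$ and a finite $\Delta \subseteq \Phi'^\ps$ with $\vdash \sigma \wedge \bigwedge_i \tau_i \to \ps\bigvee\Delta$. Writing $\tau = \bigwedge_i \tau_i$ and applying \ref{rl:nec} and \ref{ax:k:box} to $\vdash \tau \to (\sigma \to \ps\bigvee\Delta)$ yields $\nec(\sigma \to \ps\bigvee\Delta) \in \Phi'^+$, and then \ref{ax:k:dia} together with \ref{ax:trans:dia} produces $\ps\sigma \to \ps\bigvee\Delta \in \Phi'^+$. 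The key step is that $\ps\sigma \in \Phi'^+$: since $\sigma \in \Psi^+$, \ref{ax:ref:dia} forces $\ps\sigma \in \Psi^+$, so $\sigma \notin \Psi^\ps$ by the theory condition; since $\Phi^\ps \subseteq \Psi^\ps$ this gives $\sigma \notin \Phi^\ps$, so precision of $\Phi$ yields $\ps\sigma \in \Phi^+$, which is lifted to $\Phi'^+$ by $\Phi \peq_c \Phi'$. Modus ponens gives $\ps\bigvee\Delta \in \Phi'^+$, and then \ref{ax:dp} with primality forces some $\ps\delta \in \Phi'^+$ with $\delta \in \Delta \subseteq \Phi'^\ps$, contradicting the theory condition of $\Phi'$. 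Applying Lemma~\ref{lem:lindembaumPrecise} to $\Upsilon_0$ then yields a precise $\Upsilon$ with $\Psi \peq_c \Upsilon$ and $\Phi' \rel_c \Upsilon$.

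For backward confluence, given $\Phi \rel_c \Psi \peq_c \Omega$, I would take $\Upsilon_0 = (\Phi^+ \cup \{\ps\psi : \ps\psi \in \Omega^+\};\,\Omega^\ps)$ and $\Xi = \{\nec\varphi : \varphi \notin \Omega^+\}$. Once $\Upsilon_0$ is $\Xi$-consistent, Lemma~\ref{lem:lindembaumPrecise} produces a precise $\Upsilon$ which (i) extends $\Phi^+$, giving $\Phi \peq_c \Upsilon$; (ii) excludes $\Xi$ by primality, yielding $\Upsilon^\nec \subseteq \Omega^+$; and (iii) contains every $\ps\psi$ with $\ps\psi \in \Omega^+$, so precision of $\Upsilon$ forces $\Upsilon^\ps \subseteq \Omega^\ps$. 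Hence $\Phi \peq_c \Upsilon \rel_c \Omega$.

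The main obstacle is establishing $\Xi$-consistency of $\Upsilon_0$ in the backward case. A failure would give $\vdash \chi \wedge \bigwedge_i \ps\psi_i \to \bigvee_k \nec\varphi_k$ with $\chi \in \Phi^+$, $\ps\psi_i \in \Omega^+$ and $\varphi_k \notin \Omega^+$. Collapsing the right-hand disjunction via item~\ref{der:ax2} of Proposition~2.1 to $\nec\bigvee_k \varphi_k$ (whose argument is not in $\Omega^+$ by primality), and then iteratively applying \ref{ax:fs} to peel the $\ps\psi_i$'s off the antecedent, yields $\nec(\bigwedge_i\psi_i \to \bigvee_k\varphi_k) \in \Phi^+$. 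Axiom \ref{ax:trans:box} together with $\Phi^\nec \subseteq \Psi^+ \subseteq \Omega^+$ then places this formula in $\Omega^\nec$, after which local linearity of $\peq_c$ (from \ref{ax:g}) selects an $\Omega^+$-strongest representative $\psi_m$ of $\{\psi_1,\dots,\psi_n\}$, so that $\bigwedge_i \psi_i \leftrightarrow \psi_m$ inside $\Omega^+$; combined with \ref{ax:k:dia}, \ref{ax:ref:dia}, \ref{ax:dp} and \ref{ax:null} this is intended to close the argument by contradicting consistency of $\Omega$. The delicate work — turning the FS-iteration into a contradiction inside $\Omega$ via the G\"odel--Dummett linear order on the conjunctive witnesses — is where the use of \ref{ax:g} is essential and is where I expect the chief technical effort to lie.
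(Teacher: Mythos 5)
Your forward-confluence half is correct and is essentially the paper's argument: the same pre-theory $(\Phi'^\nec,\Psi^+;\Phi'^\ps)$, with the same use of precision of $\Phi$ and $\Phi^\ps\subseteq\Psi^\ps$ to pull a $\ps$-formula back into $\Phi^+$ and then up into $\Phi'^+$; whether one applies precision to $\sigma$ itself and uses \ref{ax:k:dia}, or to the implication and uses the derived formula \ref{der:ax1} as the paper does, is immaterial.

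The backward-confluence half has a genuine gap, precisely at the point you flag as the ``chief technical effort.'' By iterating \ref{ax:fs} so as to peel the $\ps\psi_i$'s off completely, you arrive at $\nec\bigl(\bigwedge_i\psi_i\to\bigvee_k\varphi_k\bigr)\in\Omega^+$ together with $\ps\psi_i\in\Omega^+$ and $\varphi_k\notin\Omega^+$ --- but this data is simply not contradictory, so no amount of work at $\Omega$ can close the argument. To get a contradiction you would need $\ps\bigwedge_i\psi_i\in\Omega^+$, and your \ref{ax:g}-based device does not deliver it: local linearity selects a $\psi_m$ with $\psi_m\to\psi_i\in\Omega^+$, but to pass from $\ps\psi_m$ to $\ps\bigwedge_i\psi_i$ you need the \emph{boxed} implication $\nec(\psi_m\to\bigwedge_i\psi_i)$ (this is what \ref{ax:k:dia} requires), and \ref{ax:g} only gives $\nec\bigl((\psi_i\to\psi_j)\vee(\psi_j\to\psi_i)\bigr)$, which cannot be split without \ref{ax:cd}. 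Indeed $\bigwedge_i\ps\psi_i\to\ps\bigwedge_i\psi_i$ is not $\sf GS4$-valid: local linearity constrains $\peq$, not $\rel$, and two $\rel$-successors of a point witnessing different $\psi_i$'s give a countermodel even with $\peq$ the identity. The repair is to apply \ref{ax:trans:dia} and \ref{ax:trans:box} \emph{before} iterating \ref{ax:fs}, i.e.\ start from $\chi\vdash\bigwedge_i\ps\ps\psi_i\to\nec\nec\bigvee_k\varphi_k$, so that FS yields $\nec\bigl(\bigwedge_i\ps\psi_i\to\nec\bigvee_k\varphi_k\bigr)\in\Phi^+$; transferring this through $\Phi\rel_c\Psi\peq_c\Omega$ the inner antecedent $\bigwedge_i\ps\psi_i$ lies literally in $\Omega^+$, giving $\nec\bigvee_k\varphi_k\in\Omega^+$, hence by \ref{ax:ref:box} and primality some $\varphi_k\in\Omega^+$, the desired contradiction --- this is the paper's route, and it uses no instance of \ref{ax:g} (the same proof serves $\sf IS4$). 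Two smaller points: since you seed $\Upsilon_0$ with $\ps$-part $\Omega^\ps$, a failure of $\Xi$-consistency also allows an extra disjunct $\ps\bigvee\Delta'$ with $\Delta'\subseteq\Omega^\ps$, which your sketch silently drops; it is cleaner to take the $\ps$-part empty, since your own step (iii) already derives $\Upsilon^\ps\subseteq\Omega^\ps$ from $\{\ps\psi:\ps\psi\in\Omega^+\}\subseteq\Upsilon^+$ via precision of $\Omega$ (not of $\Upsilon$, as you wrote).
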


\begin{proof}
We first check that it satisfies forward confluence.
Let $\Phi$, $\Psi$ and $\Theta$ in $W_c$ be such that $\Phi \peq_c \Psi$ and $\Phi \rel_c \Theta$. 
We claim that $( \Psi^\nec, \Theta^+;\Psi^\ps)$ is consistent.
If not, there exist $  \varphi \in \Psi^\nec$, $\chi \in \Theta^+$ and $  \psi  \in \Psi^\ps$ such that
${\sf GS4} \vdash \varphi \wedge \chi \rightarrow \ps\psi$ (note that we can take single formulas since $ \Psi^\nec,\Theta^+$ are closed under conjunction and, in view of \ref{ax:dp}, $\Psi^\ps$ is closed under disjunction).
Since $\chi \in \Theta^+$ then $\varphi \rightarrow \ps\psi \in \Theta^+$. 
Thanks to Axiom~\ref{ax:ref:dia}, $\ps \left(\varphi \rightarrow \ps\psi\right) \in \Theta^+$.
Therefore, $\varphi \rightarrow \ps\psi \not \in  \Theta^{\ps}$, so $\varphi \rightarrow \ps\psi \not \in  \Phi^{\ps}$.
This means that $\ps \left(\varphi \rightarrow \ps\psi\right) \in  \Phi^+$.
Using the derivable formula~\ref{der:ax1} we get $\nec \varphi \rightarrow \ps\ps\psi \in \Phi^+$. 
Since $\Phi^+ \subseteq \Psi^+$ it follows that $\nec \varphi \rightarrow \ps\ps\psi \in \Psi^+$, so $\ps\ps \psi \in \Psi^+$. By Axiom~\ref{ax:trans:dia}: a contradiction. 

In view of Lemma~\ref{lem:lindembaumPrecise}, $( \Psi^\nec, \Theta^+;\Psi^\ps)$ can be extended to a precise, consistent theory $\Upsilon$.
%To prove that $\Upsilon^+$ is prime, we need to show that it satisfies the disjunction property.
%Let us assume by contradiction that $\Upsilon^+$ does not satisfy such property. 
%Therefore, there exist $\alpha$ and $\beta$ in $\lanfull$ such that
%$\alpha \vee \beta \in \Upsilon^+$ but $\alpha \not \in \Upsilon^+$ and $\beta \not \in \Upsilon^+$.
%Since $\Upsilon^+$ is a maximal consistent set,  
%there exist $\chi,\chi' \in \Upsilon^+$ and $\ps \gamma, \ps \gamma' \not \in \Psi^+$ such that ${\sf S4I} \vdash \chi \wedge \alpha \rightarrow \gamma$ and ${\sf S4I} \vdash \chi' \wedge \beta \rightarrow \gamma'$.
%By using propositional reasoning, we arrive to ${\sf S4I} \vdash \chi \wedge \chi ' \wedge \left(\varphi \vee \psi\right) \rightarrow \gamma \vee \gamma'$. \comment{M: Is this derivable?}
%By~\ref{rl:mp} $\gamma \vee \gamma' \in \Upsilon^+$.
%
%Since $\ps \gamma \not \in \Psi^+$ and $\ps \gamma' \not \in \Psi^+$, $\ps \gamma \vee \ps \gamma' \not \in \Psi^+$. 
%By Axiom~\ref{ax:dp}, $\ps \left( \gamma' \vee \gamma\right) \not \in \Psi^+$, so $\gamma' \vee \gamma \not \in \Upsilon^+$: a contradiction.
%Therefore, $\Upsilon^+$ is a prime set.
%
%
%Let us define now $\Upsilon^\ps = \lbrace \varphi \mid \ps \varphi \not \in \Upsilon^+\rbrace$ and
It is easy to check by our choice of $\Upsilon$ that $\Psi \rel_c \Upsilon$ and $\Theta \peq_c \Upsilon$.
%By definition, $ \Psi^\nec \subseteq \Upsilon^+$ and $\Theta^+ \subseteq \Upsilon^+$.
%Take $\varphi \in \Psi^\ps$. By definition, $\ps \varphi \not \in \Psi^+$.
%By Axiom~\ref{ax:trans:box}, $\ps \ps \varphi \not \in \Psi^+$. 
%Therefore, $\Upsilon^+ \not \vdash \lbrace \ps \varphi\rbrace$ meaning that $\ps \varphi \not \in \Upsilon^+$.
%By definition, $\varphi \in \Upsilon^\ps$ meaning that $\Psi^\ps \subseteq \Upsilon^\ps$.
%
%
%
%
%
%Take any $\varphi \in \Psi^\ps$. By definition, $\ps \varphi \not \in \Psi^+$. This means that $\Upsilon^+$
%%Assume by contradiction that there exists  $ \chi \in \Psi^\ps$ but $\chi \not \in \Upsilon^{\ps}$.
%%By definition, $\ps \chi \not \in \Psi^+$.
%%By Axiom~\ref{ax:ref:dia}, $\ps \ps \chi \not \in \Psi^+$ 
%%Since $\Upsilon^+ \not \vdash \lbrace \varphi \mid \ps \psi \not \in \Psi^+ \rbrace$ then 
%%$\ps \chi \not \in \Upsilon^+$
%%By definition $\chi \in \Upsilon^{\ps}$: a contradiction.
%%Therefore $\Psi^{\ps} \subseteq \Upsilon^{\psi}$, so  $\Psi \rel_c \Upsilon$ and $\Theta \peq_c \Upsilon$.

Next we check that $\mathcal M^{\sf GS4}$ is backward confluent.
Suppose that $\Phi\rel_c\Psi\peq_c \Theta$, and let $\Xi := \{\nec\xi \in\lanfull : \xi\notin \Theta^+\}$ and $\Delta = \{\ps \delta \in\lanfull : \ps\delta \in \Theta^+\}$.
We claim that $(\Phi^+,\Delta; \varnothing )$ is $\Xi$-consistent.
If not, there are $\varphi \in\Phi^+$, $\ps\delta_1,\ldots,\ps\delta_n \in \Delta$, and $\nec\xi\in \Xi$ such that $ \varphi ,\{\ps \delta_i\}_{i=1}^n \vdash \nec \xi $, which using \ref{ax:trans:dia} and \ref{ax:trans:box} yields $ \varphi \vdash \bigwedge_{i=1}^n \ps \ps \delta_i \to \nec\nec \xi $.
By repeated applications of \ref{ax:fs}, we obtain $\varphi \vdash \nec\left (\bigwedge_{i=1}^n \ps \delta_i \to \nec \xi \right )$.
Since $\varphi\in \Phi^+$, $\nec\left (\bigwedge_{i=1}^n \ps \delta_i \to \nec \xi \right ) \in\Phi^+$.
Since $\Phi\rel_c\Psi$, $\bigwedge_{i=1}^n \ps \delta_i \to \nec \xi \in \Psi^+$, and since $\Psi\peq_c\Theta$ and each $\ps\delta_i\in \Theta^+$, $\nec \xi\in \Theta^+$, which by \ref{ax:ref:box} implies that $\xi\in \Theta^+$, contradicting our choice of $\Xi$.

Hence $(\Phi^+,\Delta; \varnothing )$ is $\Xi$-consistent.
Then, any prime, $\Xi$-consistent extension $\Upsilon$ satisfies $\Phi\peq_c\Upsilon\rel_c\Theta$: that $\Phi\peq_c\Upsilon$ follows from $\Phi^+\subseteq \Upsilon^+$.
We check that $\Upsilon\rel_c\Theta$.
We have that $\Upsilon^\nec\subseteq \Theta^+$, since if $\xi\notin \Theta^+$ it follows by the definition of $\Xi$ and the fact that $\Upsilon$ is $\Xi$-consistent that $\xi\notin\Upsilon^\nec$, while if $\delta\notin\Theta^\ps$ it follows that $\ps\delta\in \Theta^+$, hence $\ps\delta\in \Delta \subseteq \Upsilon^+$ and $\delta\notin\Upsilon^\ps$.
It follows that $\Upsilon^\ps\subseteq \Theta^\ps$, so $\Upsilon\rel_c\Theta$.
\end{proof}

\begin{lem}[Truth Lemma]\label{lem:truth-lemma:GS4}
	For any theory $\Phi \in W_c$ and $\varphi \in \lanfull$, 
\[\varphi \in \Phi^+ \Leftrightarrow ( \mathcal M_c^{{\sf GS4}}, \Phi ) \models \varphi.\]
\end{lem}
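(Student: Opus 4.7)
The plan is to prove both directions simultaneously by induction on the complexity of $\varphi$, following the blueprint of Lemma~\ref{lem:truth-lemma} for $\sf CS4$ but with two systematic changes: every Lindenbaum-style extension of a pre-theory must be produced via Lemma~\ref{lem:lindembaumPrecise} so that the resulting theory is precise (hence lies in $W_c$), and the semantic clause for $\ps$ will be simplified via forward confluence. The atomic, $\bot$, $\wedge$, and $\vee$ cases are verbatim copies of the $\sf CS4$ proof. For $\to$, the $\sf CS4$ argument goes through word-for-word: if $\varphi\to\psi\notin \Phi^+$, then $(\Phi^+,\varphi;\varnothing)$ is $\psi$-consistent, and Lemma~\ref{lem:lindembaumPrecise} yields a precise $\psi$-consistent extension $\Psi\in W_c$ with $\Phi\peq_c\Psi$, which by the induction hypothesis falsifies $\varphi\to\psi$ at $\Phi$.

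For $\nec\varphi$ I first establish a $\sf GS4$-analogue of the second clause of Lemma~\ref{lemDiamCS4}, namely that $\nec\varphi\in \Phi^+$ iff for all $\Psi,\Omega$ with $\Phi\peq_c\Psi\rel_c\Omega$ one has $\varphi\in\Omega^+$. The left-to-right direction is purely definitional. For the converse, given $\nec\varphi\notin\Phi^+$, the pre-theory $(\Phi^\nec;\varnothing)$ is $\varphi$-consistent via \ref{rl:nec}, \ref{ax:k:box}, \ref{rl:mp}, and Lemma~\ref{lem:lindembaumPrecise} then supplies a precise $\varphi$-consistent $\Omega\in W_c$ witnessing $\Phi\peq_c\Phi\rel_c\Omega$ and $\varphi\notin\Omega^+$; the induction hypothesis closes the case. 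For $\ps\varphi$, I exploit that $\mathcal M_c^{\sf GS4}$ is forward-confluent: by Lemma~\ref{lemClassical}, $(\mathcal M_c^{\sf GS4},\Phi)\models\ps\varphi$ iff there exists $\Omega$ with $\Phi\rel_c\Omega$ and (by the induction hypothesis) $\varphi\in\Omega^+$. The forward direction constructs such an $\Omega$ by extending $(\Phi^\nec,\varphi;\Phi^{\ps})$ to a precise consistent theory; its consistency is proved exactly as in Lemma~\ref{lemDiamCS4} using \ref{ax:k:dia} and \ref{ax:trans:dia}.

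The crux, and the place where the $\sf GS4$ proof genuinely diverges from $\sf CS4$, is the backward direction of the $\ps$ case. In $\sf CS4$ one could falsify $\ps\varphi$ by moving along $\peq_c$ to the theory $(\Phi^+;\{\varphi\})$, but forward confluence collapses the semantic clause so that one must already witness $\varphi\in \Phi^{\ps}$ at $\Phi$ itself. This is precisely what preciseness guarantees: if $\ps\varphi\notin\Phi^+$, preciseness forces $\varphi\in\Phi^{\ps}$, hence $\varphi\in\Omega^{\ps}$ for every $\Omega$ with $\Phi\rel_c\Omega$, so $\ps\varphi\notin\Omega^+$; by \ref{ax:ref:dia} and deductive closure, also $\varphi\notin\Omega^+$, contradicting the induction hypothesis. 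The main obstacle is thus not technical difficulty but conceptual alignment: I have to verify that preciseness has been preserved at every extension step, which is exactly why Lemma~\ref{lem:lindembaumPrecise} (rather than Lemma~\ref{lem:lindembaum}) is used uniformly throughout the construction.
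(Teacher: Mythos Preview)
Your treatment of the propositional connectives and of $\ps$ is fine and matches the paper's approach (the paper also uses preciseness to close the backward direction of $\ps$). The gap is in the $\nec$ case.

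You propose to extend $(\Phi^\nec;\varnothing)$ to a precise $\varphi$-consistent $\Omega$ and then claim $\Phi\rel_c\Omega$. But recall that $\Phi\rel_c\Omega$ requires \emph{both} $\Phi^\nec\subseteq\Omega^+$ and $\Phi^{\ps}\subseteq\Omega^{\ps}$. Starting the $\ps$-component at $\varnothing$ gives you no control over the second inclusion: when Lemma~\ref{lem:lindembaumPrecise} maximises $\Omega^+$ it only maintains $\varphi$-consistency against the empty $\ps$-set, so nothing stops some $\ps\psi$ with $\psi\in\Phi^{\ps}$ from entering $\Omega^+$, whence $\psi\notin\Omega^{\ps}$. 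In the $\sf CS4$ proof this problem was avoided by first passing to the intermediate world $\Psi=(\Phi^+;\varnothing)$, so that $\Psi^{\ps}=\varnothing$ and the second inclusion is vacuous. In $\mathcal M_c^{\sf GS4}$ that move is unavailable: every world must be precise, and the precise theory with positive part $\Phi^+$ is $\Phi$ itself, whose $\ps$-component is typically nonempty.

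The correct fix is to work with the pre-theory carrying the $\ps$-component, i.e.\ to show that $(\Psi^\nec;\Psi^{\ps})$ is $\varphi$-consistent for a suitable $\Psi\succcurlyeq_c\Phi$. This is exactly where the axiom \ref{ax:fs} enters, and it is the step your sketch omits. The paper does this by choosing $\Psi\succcurlyeq_c\Phi$ \emph{maximal} with $\nec\varphi\notin\Psi^+$; maximality gives $\Psi^+,\ps\psi\vdash\nec\varphi$ for every $\psi\in\Psi^{\ps}$, and \ref{ax:fs} then yields $\nec(\psi\to\varphi)\in\Psi^+$, which (together with \ref{ax:dp} and \ref{ax:trans:dia}) lets one rule out any purported derivation $\Psi^\nec\vdash\varphi\vee\ps\theta$ with $\theta\in\Psi^{\ps}$. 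Without invoking \ref{ax:fs} somewhere, the $\nec$ witness cannot be produced inside $W_c$.
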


\begin{proof} We consider the modalities, as other connectives are treated as in the case for $\sf CS4$.
The left-to-right implication for the case $\nec\varphi$ is also treated as in the $\sf CS4$ case.
%For the case of $\nec \varphi$, if $\nec \varphi \in \Phi^+$, assume that $\nec \varphi \in \Phi^+$, and let $\Psi$ and %$\Upsilon$ be such that $\Phi \peq_c \Psi_c\rel_c \Upsilon$.
%Then, $\varphi \in \Upsilon^+$, which by the induction hypothesis implies that $ ( \mathcal M_c^{{\sf GS4}}, \Upsilon ) \not \varphi$, as needed.

For the other direction, we work by contrapositive.
If $\nec \varphi \not \in \Phi^+$, using Zorn's lemma, let $\Psi\seq_c \Phi$ be $\peq_c$-maximal so that $\nec\varphi\not\in \Psi^+$.
It is readily checked using maximality that $\Psi^+$ is prime.

We claim that
\begin{equation}\label{eqDiamImp}
\psi\in \Psi^\ps \Rightarrow \nec(\psi\to \varphi) \in \Psi^+.
\end{equation}
By maximality of $\Psi$, we have that $\Psi^+,\ps\psi \vdash \nec \varphi $, so $\Psi^+\vdash \ps\psi\to\nec\varphi$.
By \ref{ax:fs}, $\Psi^+\vdash \nec(\psi\to\varphi)$, as needed.

We claim that $(\Psi^\nec;\Psi^\ps)$ is $\varphi$-consistent.
Noting by \ref{ax:dp} that $\Psi^\ps$ is closed under disjunction, if not, we would have $\chi\in \Psi^\nec$ and $\theta \in \Psi^\ps$ such that $ \chi \vdash  \varphi\vee \ps\theta$, and reasoning as before, $ \nec\chi  \vdash \nec( \varphi\vee \ps\theta)$.
It follows that  $ \nec\chi ,\nec(\ps\theta\to \varphi) \vdash \nec( \varphi\vee \varphi)$, i.e.~$ \nec\chi ,\nec(\ps\theta\to \varphi) \vdash \nec \varphi $.
From \ref{ax:trans:dia} we see that $\ps\ps\theta \in \Psi^+$ implies that $ \ps\theta\in \Psi^+$, which by contrapositive becomes $\theta\in \Psi^\ps $ implies $\ps\theta\in\Psi^\ps$.
Thus we may use \eqref{eqDiamImp} to conclude that $\nec(\ps\theta\to\varphi) \in \Psi^+$, so $\Psi^+ \vdash \nec\varphi$, a contradiction.

Hence there is a precise, $\varphi$-consistent theory $\Upsilon$ extending $(\Psi^\nec;\Psi^\ps)$.
The induction hypothesis yields $(\mathcal M^{\sf GS4},\Upsilon)\not\models \varphi$, and it is readily verified that $\Phi\peq_c\Psi\rel_c \Upsilon $, as needed.

For the case of $\ps \varphi$, if $\ps \varphi \in \Phi^+$, then let us define $\Theta =   \varphi , \Phi^\nec$, and let us assume toward a contradiction that $\Theta \vdash \lbrace \psi : \ps \psi \not \in \Phi^+ \rbrace$. This means that there exists $\nec \chi \in \Phi^+$, $\ps \psi \not \in \Phi^+$ such that ${\sf GS4} \vdash \chi \rightarrow\left( \varphi \rightarrow \psi\right)$. From~\ref{rl:nec},~\ref{ax:k:box},~\ref{ax:k:dia}, and~\ref{rl:mp} we get $\ps \psi \in \Phi^+$: a contradiction. Therefore $(\Theta,\varnothing)$ can be extended to a precise theory $\Upsilon $ such that $\Upsilon^+ \not \vdash \lbrace \psi : \ps \psi \not \in \Phi^+ \rbrace$.
		It follows that $\Phi\rel_c\Upsilon$. Moreover, $\varphi \in \Upsilon^+$. By induction hypothesis, $ ( \mathcal M_c^{{\sf GS4}}, \Upsilon )   \models \varphi$, so $( \mathcal M_c^{{\sf GS4}}, \Phi )  \models \ps\varphi$.

Now assume that $ ( \mathcal M_c^{{\sf GS4}}, \Phi ) \models \ps\varphi$, so $( \mathcal M_c^{{\sf GS4}}, \Upsilon) \models \varphi$ for some $\Phi \rel_c \Upsilon$. By induction, $\varphi \in \Upsilon^+$, hence by \ref{ax:ref:dia} $\ps\varphi \in \Upsilon^+$, and thus $\varphi\not\in \Upsilon^\ps$.
It follows that $ \varphi\not\in \Phi^\ps$, hence $\ps\varphi\in \Phi^+$, as required.
\end{proof}

As before, completeness of $\sf GS4$ readily follows, but we defer the general completeness statement to the end of the following section.

\section{Completeness of ${\sf S4I}$}\label{secCompS4I}

Next we prove that $\sf S4I$ is complete.
We follow the same general pattern as for $\sf CS4$ and $\sf GS4$.
We define the canonical model for ${\sf S4I}$ as $\mathcal M_c^{{\sf S4I}}=(W_c,\peq_c,\rel_c, V_c)$, defined analogously to $\mathcal M_c^{{\sf S4I}}$, except that $W_c$ is now the set of proper, consistent $\sf S4I$-theories.
As before, we leave to the reader to check that $\mathcal M_c^{{\sf S4I}}$ is a bi-intuitionistic model, and focus on those properties of the model that are particular to $\sf S4I$.
%\begin{enumerate*}[label=\alph*)]
%	\item $W_c$ is the set of precise, proper $\sf S4I$-theories $\Psi = (\Psi^+,\Psi^{\ps})$ satisfying
%		\begin{enumerate*}[label=\arabic*)]
%		\item $\bot \not \in \Psi^+$;
%		\item $\Psi^{\ps} = \lbrace \psi \mid \ps \psi \not \in \Psi^+ \rbrace $; 
%		\end{enumerate*}		
%	\item $\peq_c$, $\rel_c$ and $V_c$ are defined as for $ M_c^{{\sf CS4}}$.	
%	 \subseteq W_c\times W_c$ defined as $\Phi \peq_c \Psi$ iff $\Phi^+ \subseteq \Psi^+$;
%	\item $\rel_c \subseteq W_c\times W_c$ defined as $\Phi \rel_c \Psi$ iff $\nec \Phi^+ \subseteq \Psi^+$ and $\Phi^{\ps} \subseteq \Psi^{\ps}$
%	\item $V_c$ is a valuation function defined as $V(p)  = \lbrace \Phi \mid p \in \Phi^+  \rbrace$.
%\end{enumerate*}

\begin{lem} $\mathcal M_c^{\sf S4I}$ is forward and downward confluent.\end{lem}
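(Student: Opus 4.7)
The plan is to adapt the method used for $\mathcal M_c^{\sf GS4}$, extending pre-theories via Lemma~\ref{lem:lindembaumPrecise} and leveraging the $\sf S4I$-specific axiom~\ref{ax:cd} for downward confluence, together with the precision of the canonical theories for forward confluence. For downward confluence, suppose $\Phi \peq_c \Psi \rel_c \Theta$, and set $\Xi := \lanfull \setminus \Theta^+$. I claim that the pre-theory $(\Phi^\nec; \Phi^\ps)$ is $\Xi$-consistent, so that Lemma~\ref{lem:lindembaumPrecise} yields a precise extension $\Upsilon$ with $\Upsilon^+ \subseteq \Theta^+$. Toward a contradiction, suppose there exist $\varphi \in \Phi^\nec$, a finite $\Xi' \subseteq \Xi$, and a finite $\Delta \subseteq \Phi^\ps$ such that $\vdash \varphi \to \bigvee \Xi' \vee \ps \bigvee \Delta$. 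Set $\chi := \bigvee \Xi'$, which lies outside $\Theta^+$ by primality, and $\delta := \bigvee \Delta$. Then \ref{rl:nec} and \ref{ax:k:box} lift this derivation to $\nec(\chi \vee \ps \delta) \in \Phi^+$. Applying \ref{ax:cd} and collapsing the double diamond via \ref{ax:trans:dia} yields $\nec \chi \vee \ps \delta \in \Phi^+$, and primality forces one of two cases. If $\nec \chi \in \Phi^+$, then $\Phi \peq_c \Psi \rel_c \Theta$ propagates $\chi$ into $\Theta^+$, contradicting the choice of $\chi$. If $\ps \delta \in \Phi^+$, we contradict either the theory condition on $\Phi^\ps$ (when $\Delta \ne \varnothing$) or \ref{ax:null} (when $\Delta = \varnothing$, using the convention $\ps \bigvee \varnothing = \bot$).

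For forward confluence, suppose $\Phi \peq_c \Psi$ and $\Phi \rel_c \Theta$. I extend the pre-theory $(\Psi^\nec \cup \Theta^+; \Psi^\ps)$ to a precise theory $\Upsilon$ via Lemma~\ref{lem:lindembaumPrecise}, and the resulting $\Upsilon$ immediately satisfies $\Psi \rel_c \Upsilon$ and $\Theta \peq_c \Upsilon$. The delicate step is consistency: assume toward a contradiction that $\vdash \varphi \wedge \chi \to \ps \bigvee \Delta$ for some $\varphi \in \Psi^\nec$, $\chi \in \Theta^+$, and finite $\Delta \subseteq \Psi^\ps$. Then \ref{rl:nec}, \ref{ax:k:box}, and \ref{ax:k:dia} successively yield $\ps \chi \to \ps \ps \bigvee \Delta \in \Psi^+$. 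The crucial step is to make the antecedent available: $\chi \in \Theta^+$ gives $\ps \chi \in \Theta^+$ via \ref{ax:ref:dia}, so $\chi \notin \Theta^\ps$; since $\Phi \rel_c \Theta$ entails $\Phi^\ps \subseteq \Theta^\ps$, we have $\chi \notin \Phi^\ps$, and the precision of $\Phi$ then delivers $\ps \chi \in \Phi^+ \subseteq \Psi^+$. Hence $\ps \ps \bigvee \Delta \in \Psi^+$, and \ref{ax:trans:dia} produces $\ps \bigvee \Delta \in \Psi^+$, contradicting $\Delta \subseteq \Psi^\ps$.

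The main obstacle is the forward confluence step: it rests critically on recovering $\ps \chi \in \Phi^+$ from $\chi \in \Theta^+$ via the precision of $\Phi$, exactly as in the $\sf GS4$ case. Once precision is in place and axiom~\ref{ax:cd} is available for the downward argument, both claims reduce to routine applications of the extension lemma.
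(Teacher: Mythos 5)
Your proof is correct and takes essentially the same approach as the paper: the downward-confluence argument (showing $(\Phi^\nec;\Phi^\ps)$ is $\Xi$-consistent for $\Xi=\lanfull\setminus\Theta^+$ via \ref{rl:nec}, \ref{ax:k:box}, \ref{ax:cd} and \ref{ax:trans:dia}, then extending with Lemma~\ref{lem:lindembaumPrecise}) is the paper's argument almost verbatim, and your forward-confluence argument is the $\sf GS4$-style one the paper defers to, differing only in that you obtain $\ps\chi\to\ps\ps\bigvee\Delta\in\Psi^+$ via \ref{ax:k:dia} instead of routing through $\ps(\varphi\to\ps\bigvee\Delta)$ and the derivable formula~\ref{der:ax1}. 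Note that, exactly like the paper's $\sf GS4$ argument, your forward step uses precision of $\Phi$, so it presupposes (as the paper implicitly does) that the worlds of $\mathcal M_c^{\sf S4I}$ are precise theories.
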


\begin{proof}
Forward confluence is checked as in the case of $\sf GS4$, so we focus on downward confluence.
Let $\Phi$, $\Psi$ and $\Theta$ in $W_c$ be such that $\Phi \peq_c \Psi\rel_c \Theta$.
Let $\Xi = \lanfull\setminus \Theta^+$.
We claim that $(\Phi^\nec;\Phi^\ps)$ is $\Xi$-consistent.
If not, there exist $\nec \varphi \in \Phi^+$, $\ps \psi \not \in \Phi^+$ and $\chi \not \in \Theta^+$ such that ${\sf S4I} \vdash \varphi \rightarrow   \chi \vee \ps\psi$.
Due to~\ref{rl:nec}, Axiom~\ref{ax:k:box} and~\ref{rl:mp} we get that $\nec \left(\chi \vee \ps\psi\right) \in \Phi^+$.
By~\ref{ax:cd} it follows that $\nec \chi \vee \ps \psi \in \Phi^+$.
Since $\chi \not \in \Theta^{+}$ then $\nec \chi \not \in \Phi^{+}$.
Consequently $\ps \psi \in \Phi^{+}$: a contradiction. 
Thanks to Lemma~\ref{lem:lindembaumPrecise}, $(\Phi^\nec;\Phi^\ps)$ can be extended to a $\Xi$-consistent, precise theory $\Upsilon$.
It then readily follows that $\Phi\rel_c\Upsilon\peq_c\Theta$, as required.
\end{proof}

\begin{lem}[Truth Lemma]\label{lem:truth-lemma:S4I}
	For all theories $\Phi \in W_c$ and $\phi \in \lanfull$, 
	\[\varphi \in \Phi^+ \Leftrightarrow ( \mathcal M_c^{{\sf S4I}}, \Phi ) \models \varphi\]
\end{lem}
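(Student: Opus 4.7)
The plan is to proceed by induction on the complexity of $\varphi$, following the template of Lemmas~\ref{lem:truth-lemma} and~\ref{lem:truth-lemma:GS4}. The propositional and Boolean cases are handled exactly as in the $\sf CS4$ argument, so the interesting work is concentrated in the two modal cases. The crucial simplification over those earlier proofs is that $\mathcal M_c^{\sf S4I}$ is \emph{both} forward and downward confluent, so Lemma~\ref{lemClassical} lets us evaluate both modalities classically. The modal cases therefore reduce to proving that $\ps\varphi\in\Phi^+$ iff there exists $\Psi$ with $\Phi\rel_c\Psi$ and $\varphi\in\Psi^+$, and that $\nec\varphi\in\Phi^+$ iff $\varphi\in\Psi^+$ for every such $\Psi$.

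For the $\ps$ case I would handle the forward direction by a Lindenbaum-style construction: show that the pre-theory $(\{\varphi\}\cup\Phi^\nec;\Phi^\ps)$ is consistent, then extend it to a precise theory $\Psi$ via Lemma~\ref{lem:lindembaumPrecise}. Consistency mirrors the first half of the proof of Lemma~\ref{lemDiamCS4}: if some $\nec\chi_i\in\Phi^+$ and $\psi_j\in\Phi^\ps$ witnessed $\varphi\wedge\bigwedge_i\chi_i \vdash \ps\bigvee_j\psi_j$, then successive applications of \ref{rl:nec}, \ref{ax:k:box}, \ref{ax:k:dia} and \ref{ax:trans:dia} would force $\ps\bigvee_j\psi_j\in\Phi^+$, contradicting that $\Phi$ is a theory. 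The converse direction follows the template of the $\sf GS4$ truth lemma: from $\varphi\in\Psi^+$ and \ref{ax:ref:dia} we obtain $\ps\varphi\in\Psi^+$, so the theory condition on $\Psi$ forces $\varphi\notin\Psi^\ps$; the inclusion $\Phi^\ps\subseteq\Psi^\ps$ then gives $\varphi\notin\Phi^\ps$, and preciseness of $\Phi$ finally delivers $\ps\varphi\in\Phi^+$.

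The $\nec$ case carries the distinctive content and is where I expect the main obstacle. The forward direction is immediate from $\Phi^\nec\subseteq\Psi^+$ whenever $\Phi\rel_c\Psi$. For the converse I would argue by contrapositive: assuming $\nec\varphi\notin\Phi^+$, the goal is to produce a precise $\Psi$ with $\Phi\rel_c\Psi$ and $\varphi\notin\Psi^+$, which by induction and Lemma~\ref{lemClassical} will witness $(\mathcal M_c^{\sf S4I},\Phi)\not\models\nec\varphi$. The natural candidate is any precise $\varphi$-consistent extension of $(\Phi^\nec;\Phi^\ps)$, so the heart of the matter is checking $\varphi$-consistency; this is the step where the constant-domain axiom~\ref{ax:cd}---absent from both $\sf CS4$ and $\sf GS4$---plays its essential role. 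Concretely, suppose such an extension fails, so that for some $\nec\chi_i\in\Phi^+$ and finite $\Delta\subseteq\Phi^\ps$, writing $\chi=\bigwedge_i\chi_i$ and $\psi=\bigvee\Delta$, we have $\chi\vdash\varphi\vee\ps\psi$. Then \ref{rl:nec} and \ref{ax:k:box} yield $\nec\chi\vdash\nec(\varphi\vee\ps\psi)$, and \ref{ax:cd} together with \ref{ax:trans:dia} collapse this to $\nec\chi\vdash\nec\varphi\vee\ps\psi$. Since $\nec\chi\in\Phi^+$ and $\Phi^+$ is prime, this would force either $\nec\varphi\in\Phi^+$ or $\ps\psi\in\Phi^+$, both excluded by hypothesis. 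A final appeal to Lemma~\ref{lem:lindembaumPrecise} then produces the required $\Psi$.
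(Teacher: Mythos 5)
Your two consistency arguments are exactly the paper's: for $\nec\varphi$ the paper also argues by contrapositive that $(\Phi^\nec;\Phi^\ps)$ is $\varphi$-consistent, using \ref{rl:nec}, \ref{ax:k:box}, \ref{ax:cd} (and implicitly \ref{ax:trans:dia}) plus primeness, and then extends via Lemma~\ref{lem:lindembaumPrecise}; for the left-to-right direction of $\ps\varphi$ it likewise shows $(\Phi^\nec,\varphi;\Phi^\ps)$ consistent and extends. Your only real divergences are (i) the explicit up-front ``classicalization'' of both modal clauses via Lemma~\ref{lemClassical}, where the paper only needs forward confluence implicitly in the $\ps$ case and needs nothing for the $\nec$ case (reflexivity of $\peq_c$ already gives $\Phi\peq_c\Phi\rel_c\Upsilon$), and (ii) the right-to-left direction for $\ps\varphi$.

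Point (ii) is where there is a gap relative to the paper's definitions. The canonical model $\mathcal M_c^{\sf S4I}$ is built from \emph{proper, consistent} $\sf S4I$-theories; unlike the $\sf GS4$ case, its worlds are not required to be precise (preciseness is only demanded of the witnesses produced by Lemma~\ref{lem:lindembaumPrecise}). Your step ``$\varphi\notin\Phi^\ps$, hence by preciseness of $\Phi$, $\ps\varphi\in\Phi^+$'' is therefore not available for an arbitrary world $\Phi\in W_c$, and the implication you are trying to establish (if some $\Psi$ with $\Phi\rel_c\Psi$ has $\varphi\in\Psi^+$ then $\ps\varphi\in\Phi^+$) genuinely fails for non-precise theories: one can have $\ps\varphi\notin\Phi^+$ and $\varphi\notin\Phi^\ps$ simultaneously, with the falsity of $\ps\varphi$ witnessed only at a $\peq_c$-successor. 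The paper's ``standard'' treatment of this direction is the $\sf CS4$-style move of Lemma~\ref{lemDiamCS4}: if $\ps\varphi\notin\Phi^+$, pass to $\Psi:=(\Phi^+;\{\varphi\})\seq_c\Phi$, all of whose $\rel_c$-successors omit $\varphi$, and use the unreduced $\forall\exists$ clause for $\ps$; this needs no preciseness and no confluence. Your version can be salvaged by taking the $\sf S4I$ canonical worlds to be precise theories throughout (which is in any case what the forward-confluence proof borrowed from $\sf GS4$ tacitly uses, and hence what your blanket appeal to Lemma~\ref{lemClassical} presupposes), but as written against the paper's stated $W_c$ this one direction does not go through; everything else in your proposal matches the paper's proof.
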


\begin{proof}
We consider only the modalities.
% For the case of $\nec \varphi$: If $\nec \varphi \in \Phi^+$, let $\Psi$ and $\Upsilon$ be such that $ ( \mathcal M_c^{{\sf S4I}}, \Upsilon ) \not \models \varphi$. By induction $\varphi \not \in \Upsilon^+$. By definition, $\Phi \peq_c \Psi_c\rel_c \Upsilon$.
%		Therefore, $\varphi \in \Upsilon^+$: a contradiction.
For the case of a formula $\nec\varphi$, the left-to-right direction proceeds as usual.		
For the other direction, reason by contrapositive.
If $\nec \varphi \not \in \Phi^+$,  we claim that $ (\Phi^\nec;\Phi^\ps)$ is $\varphi$-consistent. 
		If not, there exists $\nec \chi \in \Phi^+$, $\ps \psi \not \in \Phi^+$ such that ${\sf S4I} \vdash \chi \rightarrow \varphi \vee \ps \psi$.
		By~\ref{rl:nec},~\ref{ax:k:box} and~\ref{rl:mp} we get $\nec ( \varphi \vee \ps \chi)$. By Axiom~\ref{ax:cd}, $\nec \varphi \vee \ps \psi \in \Phi^+$, a contradiction since $\Phi^+$ is prime and $\nec\varphi,\ps\psi\not\in \Phi^+$.
		Therefore, $( \Phi^\nec;\Phi^\ps) $ can be extended to a precise, $\varphi$-consistent theory $\Upsilon $.
		Clearly, $\Phi\rel_c\Upsilon$.
%				By definition, $\ps \psi \not \in \Phi^+$, so $ \psi\in \Phi^\ps \subseteq\Upsilon^\ps$, and by~\ref{ax:ref:dia}, $\Upsilon^+$.
Since $\Upsilon$ is $\varphi$-consistent, $\varphi\not\in \Upsilon^+$.
		By the induction hypothesis, $ ( \mathcal M_c^{{\sf S4I}}, \Upsilon ) \not \models \varphi$. 
		Therefore, $ ( \mathcal M_c^{{\sf S4I}}, \Phi ) \not \models \nec \varphi$.

	 For the case of $\ps \varphi$, the right-to-left direction is standard, so we focus on the other.
	 If $\ps \varphi \in \Phi^+$, we claim that $(\Phi^\nec,\varphi; \Phi^\ps)$ is consistent.
If not, there exists $\nec \chi \in \Phi^+$, $\ps \psi \not \in \Phi^+$ such that ${\sf S4I} \vdash \chi \rightarrow\left( \varphi \rightarrow \ps \psi\right)$.
From~\ref{rl:nec},~\ref{ax:k:box}, and~\ref{rl:mp}, we get $\ps \psi \in \Phi^+$: a contradiction. Therefore $(\Phi^\nec,\varphi; \Phi^\ps)$ can be extended to a precise, consistent theory $\Upsilon$.
		It follows from our construction that $\Phi\rel_c\Upsilon$.
		Moreover, $\varphi \in \Upsilon^+$. By the induction hypothesis, $ ( \mathcal M_c^{{\sf S4I}}, \Upsilon )  \models \varphi$, so $( \mathcal M_c^{{\sf S4I}}, \Phi )  \models \ps\varphi$.
%		\item Assume by contradiction that $ ( \mathcal M_c^{{\sf S4I}}, \Phi ) \not \models \ps\varphi$, so $( \mathcal M_c^{{\sf S4I}}, \Upsilon ) \not \models \varphi$ for some $\Phi \rel_c \Upsilon$. By induction, $\varphi \not \in \Upsilon^+$. 
%		By definition $\varphi \in \Phi^\ps \subseteq \Upsilon^\ps$. 
%		By definition $\ps \varphi \not \in \Upsilon^+$: a contradiction.
%	\end{itemize}
\end{proof}

Let us summarize and state our completeness results.

\begin{thm}\label{thmComp}
For $\Lambda\in \{{\sf CS4},{\sf GS4},{\sf S4I}\}$, $\Lambda$ is sound and complete for the class of $\Lambda$-models.
In particular, $\mathcal M^{\Lambda}_c$ is a $\Lambda$ model, and if ${\Lambda} \not\vdash\varphi$, then $\mathcal M^{\Lambda}_c \not\models \varphi$.
\end{thm}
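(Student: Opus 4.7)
The plan is to consolidate what has already been done. Soundness of each $\Lambda\in\{{\sf CS4},{\sf GS4},{\sf S4I}\}$ for its class of models follows at once from Propositions~\ref{prop:validity0}--\ref{prop:validity2}: each axiom of $\Lambda$ is validated on the appropriate class of bi-intuitionistic frames and the rules \ref{rl:mp} and \ref{rl:nec} preserve validity, so an induction on derivation length yields $\Lambda\vdash\varphi\Rightarrow\Lambda\models\varphi$.

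To verify that the canonical model $\mathcal M^\Lambda_c$ is a $\Lambda$-model, I would invoke the frame-condition lemmas established in the three completeness sections. For $\sf CS4$, Lemma~\ref{lemCS4IsModel} shows that $\mathcal M^{\sf CS4}_c$ is a bi-intuitionistic model and Proposition~\ref{prop:backward} gives backward confluence. For $\sf GS4$, the lemmas in Section~\ref{secCompGS4} establish local linearity together with forward and backward confluence, and $\fallible W_c$ is set to $\varnothing$ by construction. For $\sf S4I$, the analogous lemmas in Section~\ref{secCompS4I} handle forward and downward confluence, again on an infallible carrier. Together these observations yield the first half of the theorem.

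For the falsifiability claim, suppose $\Lambda\not\vdash\varphi$. Then the empty pre-theory $(\varnothing;\varnothing)$ is $\varphi$-consistent: with the convention $\ps\bigvee\varnothing:=\bot$, $\varphi$-consistency reduces to $\not\vdash\varphi\vee\bot$, equivalently $\not\vdash\varphi$. I would then extend via Lemma~\ref{lem:lindembaum} when $\Lambda={\sf CS4}$ and via Lemma~\ref{lem:lindembaumPrecise} when $\Lambda\in\{{\sf GS4},{\sf S4I}\}$ (precise extensions being required to land in $W_c$ there), obtaining a $\varphi$-consistent theory $\Phi\in W_c$. Since $\Phi^+$ is deductively closed and $\Phi^+\not\vdash\varphi$, we have $\varphi\notin\Phi^+$, and the relevant Truth Lemma (\ref{lem:truth-lemma}, \ref{lem:truth-lemma:GS4}, or \ref{lem:truth-lemma:S4I}) delivers $(\mathcal M^\Lambda_c,\Phi)\not\models\varphi$. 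It remains to check that $\Phi\notin\fallible W_c$: this is automatic for $\sf GS4$ and $\sf S4I$ since $\fallible W_c=\varnothing$, while for $\sf CS4$ it follows from properness of $\Phi^+$, which is itself a consequence of $\varphi$-consistency (via $\vdash\bot\to\varphi$). There is no substantive obstacle at this stage; the only care required is pairing the correct Lindenbaum-style lemma with the correct specification of $W_c$ for each logic and confirming that the resulting theory avoids $\fallible W_c$ in the $\sf CS4$ case.
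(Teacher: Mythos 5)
Your proposal is correct and follows essentially the same route as the paper: soundness from the validity propositions, the frame-condition lemmas for each canonical model, and then extending the $\varphi$-consistent empty pre-theory to a theory in $W_c$ and applying the relevant Truth Lemma. If anything, you are slightly more careful than the paper's own proof, which glosses over the choice of Lindenbaum lemma (plain versus precise) and the check that the resulting $\sf CS4$-theory avoids $\fallible W_c$; the only tiny inaccuracy is that for $\sf S4I$ the worlds need only be proper and consistent, so preciseness is convenient but not required to land in $W_c$.
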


\begin{proof}
For each logic $\Lambda$, $\mathcal M^{\Lambda}_c$ is a bi-intuitionistic model: this is stated for $\sf CS4$ in Lemma \ref{lemCS4IsModel}, and checked analogously for the other logics.
In each case, we have shown in addition that $\mathcal M^{\Lambda}_c$ satisfies the required frame conditions.
If ${\Lambda} \not\vdash\varphi$, then the pre-theory $\Upsilon=(\varnothing;\varnothing)$ is $\varphi$-consistent, hence can be extended to a $\varphi$-consistent theory $\Phi\in W_c$.
By Lemma \ref{lem:truth-lemma}, $(\mathcal M^{\Lambda}_c,\Phi)\not\models\varphi$.
\end{proof}

\begin{remark}
Theorem \ref{thmComp} also holds for $\sf IS4$; this is proven in \cite{servi1984axiomatizations,Simpson94}, but also follows from the development in Section \ref{secCompGS4}, omitting the proof of local linearity.
\end{remark}

\section{$\Sigma$-bisimulations}\label{secSBisim}

The remainder of the article will be devoted to establishing the finite model property for $\sf CS4$, $\sf GS4$, and $\sf S4I$.
In this section we develop the theory of {\em $\Sigma$-bisimulations,} one of the key components in our proof.

\begin{defn}
	Given a set of formulas $\Sigma$ and a model $\mathcal M =(W,\fallible W,\peq,\rel,V) $, we define the {\em $\Sigma$-label} of $w\in W$ to be a pair $\ell(w) = (\ell^+(w);\ell^\ps (w))$, where
	\begin{align*}
	\ell^+(w) &:= \{ \varphi\in \Sigma : (\mathcal M,w)\models \varphi \},\\
	\ell^\ps(w) &:= \{ \varphi\in \Sigma : \forall v\ler w \ (\mathcal M,v) \not \models \varphi \}.
	\end{align*}
	A {\em $\Sigma$-bisimulation} on $\mathcal M$ is a forward and backward confluent relation $Z \subseteq W  \times W $ so that if $w\mathrel Z v$ then $\ell(w) = \ell(v)$.
	We denote the greatest $\Sigma$-bisimulation by $\sim_\Sigma$.
	%If there is a $\Sigma$-bisimulation with $w\mathrel Z v$, we write $w\sim_\Sigma v$.
\end{defn}

	Notice that an arbitrary union of $\Sigma$-bisimulations is a $\Sigma$-bisimulation. Hence, $\sim_{\Sigma}$ is well-defined.
	Obviously, $\sim_{\Sigma}$ is an equivalence relation.
For each canonical model $\mathcal M^\Lambda_c$, we note that $\ell(\Phi)= \Phi\upharpoonright \Sigma$, where the latter is defined by $\Phi\upharpoonright \Sigma:=(\Phi^+\cap \Sigma,\Phi^\ps \cap \Sigma)$.

\begin{defn}
	Given a model $\mathcal M=(W,\fallible W,\peq,\rel,V)$ and an equivalence relation ${\sim}\subseteq W\times W$, we denote the equivalence class of $w\in W$ under $\sim$ by $[w]$.
	We then define the quotient $\nicefrac{\mathcal M}\sim = (\nicefrac W\sim,\nicefrac{\fallible W}\sim,\nicefrac\peq \sim,\nicefrac\rel \sim,\nicefrac V \sim)$ to be such that $\nicefrac W \sim := \{[w]:w\in W\}$, $\nicefrac{\fallible W}\sim := \{[w]:w\in \fallible W\}$, $[w]\nicefrac \peq\sim [v]$ if there exist $w'\sim w$ and $v'\sim v$ such that $w' \peq v' $, $\nicefrac \sqsubseteq \sim$ is the  transitive closure of $\nicefrac {\sqsubseteq^0} \sim$ defined by $[w] \mathrel{\nicefrac {\sqsubseteq^0} \sim} [v]$ whenever there are $w',v' $ so that $w\sim w' \sqsubseteq v' \sim v$, and $ [w] \in \nicefrac V\sim (p)$ iff there is $w'\sim w$ so that $w' \in V(p)$.
\end{defn}

\begin{lem}\label{lemQuotLeq}
	Let $\mathcal M=(W,\fallible W,\peq,\rel,V)$ and suppose that $\sim$ is an equivalence relation that is also a $\Sigma$-bisimulation.
	Then, for all $w,v\in W$, $[w]\mathrel{ \nicefrac \peq\sim }[v]$ if and only if there is $v'\sim v$ such that $ w \peq v' $.
\end{lem}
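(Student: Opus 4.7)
The plan is to prove the two directions separately. The right-to-left implication is immediate from the definition of $\nicefrac\peq\sim$: given $v'\sim v$ with $w\peq v'$, I take $w'=w$ (noting that $\sim$ is reflexive, so $w\sim w$) to witness the existential in the definition of $\nicefrac\peq\sim$.

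The forward direction is where the $\Sigma$-bisimulation hypothesis on $\sim$ actually does work. Unfolding the definition of $\nicefrac\peq\sim$, the assumption $[w]\mathrel{\nicefrac\peq\sim}[v]$ yields witnesses $w'\sim w$ and $v'\sim v$ with $w'\peq v'$. The idea is to ``slide'' this chain back along $\sim$ from $w'$ to $w$: since $\sim$ is a $\Sigma$-bisimulation it is in particular backward confluent, so from $w\sim w'\peq v'$ I obtain some $v''$ with $w\peq v''\sim v'$. Because $\sim$ is an equivalence relation (so transitive), $v'\sim v$ gives $v''\sim v$, which is exactly what the conclusion requires.

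The ``main obstacle'' is really just conceptual: one needs to notice that backward confluence of $\sim$ is precisely the closure property packaged into the definition of $\Sigma$-bisimulation, so no new combinatorics are needed. Forward confluence of $\sim$ is not invoked here; it will be relevant for the analogous lemma concerning $\nicefrac\rel\sim$ (since that relation is defined via a transitive closure). It is also worth observing, for cleanliness, that symmetry of $\sim$ lets one freely swap $w\sim w'$ and $w'\sim w$ when matching the quantifier patterns in the definitions of forward and backward confluence — otherwise one might be tempted to apply the wrong form of confluence.
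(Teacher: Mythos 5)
Your proof is correct and follows essentially the same route as the paper: the right-to-left direction is immediate from the definition of $\nicefrac\peq\sim$, and the forward direction uses the confluence of $\sim$ (the paper just says ``since $\sim$ is a bisimulation'') to slide the witness $w'\peq v'$ back to $w$, obtaining $v''$ with $w\peq v''\sim v'$, and then transitivity of $\sim$. Your identification of backward confluence as the precise clause being used (with symmetry of $\sim$ to orient the edge) is accurate and slightly more explicit than the paper's wording.
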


\begin{proof}
	Clearly, if there is $v'\sim v$ such that $ w \peq v' $, then $[w]\mathrel{ \nicefrac \peq\sim }[v]$.
	Conversely, if $[w]\mathrel{ \nicefrac \peq\sim }[v]$, then there are $w'\sim w$ and $v'\sim v$ such that $w'\peq v'$.
	Since $\sim$ is a bisimulation, there is $v''\sim v'$ such that $w\peq v''$.
	But then $v'' \sim v$, as needed.
\end{proof}

Of particular importance is the case where the relation $\sim$ is given by $\Sigma$-bisimulation.

\begin{lem}\label{lemQuotPreserv}
	Let $\mathcal M=(W,\fallible W,\peq ,\rel ,V )$ be a bi-intuitionistic model and $\Sigma$ be a set of formulas.
	Let $\sim$ be a $\Sigma$-bisimulation which is also an equivalence relation.
	Then,
	\begin{enumerate}
		
		\item If $\mathcal M$ is forward-confluent, so is $\nicefrac{\mathcal M}{\sim }$.
		
		\item If $\mathcal M$ is backward-confluent, so is $\nicefrac{\mathcal M}{\sim }$.
		
		\item If $\mathcal M$ is locally linear, so is $\nicefrac{\mathcal M}{\sim }$.
		
	\end{enumerate}
\end{lem}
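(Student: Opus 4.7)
The plan is to prove each of the three items in turn, using Lemma \ref{lemQuotLeq} to unpack $\mathrel{\nicefrac{\peq}{\sim}}$ into a single $\peq$-step, and handling the transitive closure in the definition of $\mathrel{\nicefrac{\rel}{\sim}}$ by a short induction on chain length. Crucially, the hypothesis that $\sim$ is a $\Sigma$-bisimulation gives us both forward and backward confluence of $\sim$ itself, which is exactly what we need to move witnesses across $\sim$-edges when combining a $\peq$-step in $\mathcal M$ with a $\sim$-related world.

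The easiest case is local linearity. Suppose $[w] \mathrel{\nicefrac{\peq}{\sim}} [u]$ and $[w] \mathrel{\nicefrac{\peq}{\sim}} [v]$. Two applications of Lemma \ref{lemQuotLeq} produce $u_0 \sim u$ and $v_0 \sim v$ with $w \peq u_0$ and $w \peq v_0$. Local linearity of $\mathcal M$ then forces $u_0 \peq v_0$ or $v_0 \peq u_0$, and either case descends directly to $[u] \mathrel{\nicefrac{\peq}{\sim}} [v]$ or $[v] \mathrel{\nicefrac{\peq}{\sim}} [u]$.

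For forward confluence I would first prove the one-step version: if $[w] \mathrel{\nicefrac{\peq}{\sim}} [w']$ and $[w] \mathrel{\nicefrac{\rel^0}{\sim}} [v]$, then there is $[v']$ with $[v] \mathrel{\nicefrac{\peq}{\sim}} [v']$ and $[w'] \mathrel{\nicefrac{\rel^0}{\sim}} [v']$. Unfolding the hypotheses using Lemma \ref{lemQuotLeq} and the definition of $\nicefrac{\rel^0}{\sim}$, I obtain $w_1 \sim w'$ with $w \peq w_1$, and witnesses $w_0 \sim w$, $v_0 \sim v$ with $w_0 \rel v_0$. Backward confluence of $\sim$ applied to $w_0 \sim w \peq w_1$ gives some $w_0'$ with $w_0 \peq w_0'$ and $w_0' \sim w_1 \sim w'$. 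Forward confluence of $\mathcal M$ applied to $w_0 \peq w_0'$ and $w_0 \rel v_0$ then yields $v_0'$ with $v_0 \peq v_0'$ and $w_0' \rel v_0'$. Taking $[v']:=[v_0']$ works: $[w']=[w_0'] \mathrel{\nicefrac{\rel^0}{\sim}} [v_0']$ and $[v]=[v_0] \mathrel{\nicefrac{\peq}{\sim}} [v_0']$. Extending this to arbitrary $\mathrel{\nicefrac{\rel}{\sim}}$ is an easy induction on the length of the $\nicefrac{\rel^0}{\sim}$-chain from $[w]$ to $[v]$: the one-step version handles the first link and produces an intermediate $[w']$-successor that can feed into the inductive hypothesis on the shorter tail.

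Backward confluence is entirely symmetric. Given $[w] \mathrel{\nicefrac{\rel^0}{\sim}} [v] \mathrel{\nicefrac{\peq}{\sim}} [v']$, unfold to witnesses $w_0 \sim w \rel v_0 \sim v$ and $v_1 \sim v'$ with $v \peq v_1$. This time I would use forward confluence of $\sim$ to transport the $\peq$-step along the $\sim$-edge on the right, obtaining $v_0'$ with $v_0 \peq v_0' \sim v_1$. Backward confluence of $\mathcal M$ applied to $w_0 \rel v_0 \peq v_0'$ then delivers $w_0'$ with $w_0 \peq w_0' \rel v_0'$, and $[w_0']$ is the required witness. The induction over chain length proceeds as before. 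The main obstacle throughout is pure bookkeeping: at each hop along a $\sim$-edge one must pick the correct direction (forward versus backward) of confluence of the $\Sigma$-bisimulation so that the $\peq$- and $\rel$-edges line up coherently, and Lemma \ref{lemQuotLeq} is what lets us avoid chasing witnesses on both sides of a single $\peq$-step.
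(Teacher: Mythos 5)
Your proof is correct and takes essentially the same route as the paper: Lemma \ref{lemQuotLeq} to unpack quotient $\peq$-edges, a single application of the relevant confluence property of $\mathcal M$ to establish the one-step claim for $\nicefrac{\rel^0}{\sim}$, and the same two-witness argument for local linearity. The only cosmetic differences are that you prove closure under transitive closure by an explicit induction on chain length where the paper simply invokes Lemma \ref{lemmClosure}, and you relocate representatives by using the forward/backward confluence of $\sim$ directly where the paper re-applies Lemma \ref{lemQuotLeq} at the appropriate representative.
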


\begin{proof}
	For forward confluence, in view of Lemma \ref{lemmClosure}, it suffices to show that $\nicefrac {\sqsubseteq^0} \sim$ is forward confluent.
Assume that $[v] \mathrel{\nicefrac\seq\sim} [w] \mathrel{\nicefrac {\sqsubseteq^0} \sim} [w']$.
By Lemma \ref{lemQuotLeq}, there is $v''\sim v$ such that $v''\seq w$.
Since $\mathcal M$ is forward confluent, there is $v'$ such that $v'' \sqsupseteq v' \seq w'$.
Therefore, $[v] \mathrel{\nicefrac {\sqsupseteq^0} \sim} [v'] \mathrel{\nicefrac\seq\sim} [w']$, as needed.
Backward confluence is treated similarly, and we omit it.
	
For local linearity, assume that $[w] {\nicefrac{\peq}\sim} [u]$ and $[w] {\nicefrac{\peq}\sim} [v]$.
By Lemma~\ref{lemQuotLeq}, let $u'$, $v'$ in $W$ be such that $u \sim u' \seq w \peq v \sim v'$.
Hence, $u' \peq v'$ or $v' \peq u'$. Consecuently, $[u]{\nicefrac{\peq}\sim}[v]$ or $[v]{\nicefrac{\peq}\sim}[u]$.
%	We may assume that $w,u$ are chosen so that $ w \peq_\mathcal M u$.
%	Meanwhile, there are $w'\sim w$ and $v'\sim v$ so that $w'\peq_\mathcal M v'$. 
%	But $\sim$ is a $\Sigma$-bisimulation, so there is $v'' \sim v'$ so that $w \peq _\mathcal M v''$.
%	It follows from linearity of $\peq_\mathcal M$ that $u\peq_\mathcal M v''$ or $v'' \peq_\mathcal M v$, yielding $[ u]\peq_{\nicefrac{\mathcal M}\sim}  [v'']$ or $[v''] \peq_{\nicefrac{\mathcal M}\sim} [v]$.
\end{proof}

\begin{lem}\label{lemQuotTruth}
	If $\mathcal M = (W,\fallible W,\peq,\sqsubseteq,V)$ is any model, $\Sigma$ is closed under subformulas, and ${\sim} \subseteq W\times W$ is a $\Sigma$-bisimulation that is also an equivalence relation, then for all $w \in W$ and $\varphi\in \Sigma$, $(\mathcal M,w)\models\varphi$ iff $(\nicefrac{\mathcal M}\sim,[w])\models\varphi$.
\end{lem}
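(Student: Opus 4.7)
The plan is to proceed by structural induction on $\varphi\in\Sigma$, with the subformula closure of $\Sigma$ ensuring that the induction hypothesis applies to every subformula. The atomic cases for $p\in\mathbb P$ and $\bot$ follow from the fact that membership in $V(p)$ and in $\fallible W$ is encoded in $\ell^+$, which is preserved by $\sim$; the definitions of $\nicefrac V\sim$ and $\nicefrac{\fallible W}\sim$ then yield the equivalence. The Boolean cases $\wedge,\vee$ are immediate by IH. For $\varphi=\chi\to\xi$, I would use Lemma~\ref{lemQuotLeq} together with the forward and backward confluence of $\sim$ with respect to $\peq$ to lift $\peq$-successors between $\mathcal M$ and $\nicefrac{\mathcal M}\sim$, after which the IH handles premise and conclusion.

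For $\varphi=\nec\psi$, the backward direction is immediate: any $w\peq u\sqsubseteq v$ in $\mathcal M$ lifts to $[w]\nicefrac\peq\sim[u]\nicefrac{\sqsubseteq^0}\sim[v]$ in the quotient, so the hypothesis gives $(\nicefrac{\mathcal M}\sim,[v])\models\psi$ and the IH yields $(\mathcal M,v)\models\psi$. For the forward direction, assume $(\mathcal M,w)\models\nec\psi$ and let $[w]\nicefrac\peq\sim[u]\nicefrac\sqsubseteq\sim[v]$. Lemma~\ref{lemQuotLeq} supplies $u'\sim u$ with $w\peq u'$, so $(\mathcal M,u')\models\nec\psi$, and bisimulation (using $\nec\psi\in\Sigma$) transfers this to $\nec\psi\in\ell^+(u)$. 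I would then induct on the length $n$ of a witnessing chain $[u]=[u_0]\nicefrac{\sqsubseteq^0}\sim\cdots\nicefrac{\sqsubseteq^0}\sim[u_n]=[v]$ with witnesses $u_i\sim x_i\sqsubseteq y_{i+1}\sim u_{i+1}$, transporting $\nec\psi$ across each $\sim$-step by bisimulation and using the $\nec$-clause across each $\sqsubseteq$-step to obtain $\psi$ at the next $u_{i+1}$; the base $n=0$ uses the $T_\nec$ axiom, valid on every frame.

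For $\varphi=\ps\psi$, the forward direction is routine: given $[u]$ with $[w]\nicefrac\peq\sim[u]$, Lemma~\ref{lemQuotLeq} supplies $u'\sim u$ with $w\peq u'$, and a witness $v\sqsupseteq u'$ of $\psi$ in $\mathcal M$ yields $[u]\nicefrac{\sqsubseteq^0}\sim[v]$ with $(\nicefrac{\mathcal M}\sim,[v])\models\psi$ by IH. For the backward direction, the key tool is the auxiliary label $\ell^\ps$. Given $(\nicefrac{\mathcal M}\sim,[w])\models\ps\psi$ and $u\succcurlyeq w$, the quotient hypothesis and Lemma~\ref{lemQuotLeq} yield a chain $u\sim x_0\sqsubseteq y_1\sim u_1\sim x_1\sqsubseteq y_2\sim\cdots\sim u_n=v$ in $\mathcal M$ with $(\mathcal M,v)\models\psi$, and I would iteratively pull back a $\psi$-witness: from $\psi\in\ell^+(y_n)$ and $x_{n-1}\sqsubseteq y_n$ one gets $\psi\notin\ell^\ps(x_{n-1})$; bisimulation transports this to $\psi\notin\ell^\ps(y_{n-1})$, and transitivity of $\sqsubseteq$ then yields $\psi\notin\ell^\ps(x_{n-2})$; iterating $n$ times produces $\psi\notin\ell^\ps(u)$, which is precisely the desired $v'\sqsupseteq u$ with $(\mathcal M,v')\models\psi$.

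The main obstacle is the forward $\nec$ case: propagating $\nec\psi$ through successive $\sqsubseteq$-steps relies on $\nec\psi\to\nec\nec\psi$ in order to persist the inductive invariant $\nec\psi\in\ell^+(u_i)$ along the chain. This $4_\nec$ axiom is valid on the canonical models of $\sf CS4$, $\sf GS4$, and $\sf S4I$ by Proposition~\ref{prop:validity1}, which is the setting where this lemma will be applied. The backward $\ps$ case is intricate but is cleanly handled by the $\ell^\ps$ component of the label, which was introduced precisely to record $\sqsubseteq$-successor information invisible to $\ell^+$ alone.
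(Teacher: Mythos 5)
Your proposal tracks the paper's proof almost exactly: the induction, the implication case via Lemma~\ref{lemQuotLeq}, the forward $\ps$ case, and the backward $\ps$ case, where your pulling back of $\psi\notin\ell^\ps$ along a $\nicefrac{\sqsubseteq^0}{\sim}$-chain is just the contrapositive of the paper's forward propagation of $\psi\in\ell^\ps$. The one real divergence is the forward direction for $\nec\psi$, and there you have put your finger on a genuine subtlety that the paper passes over with ``treated dually'' (it only displays the refutation direction). The exact dual of the $\ps$ argument would propagate the property ``every $\rel$-successor satisfies $\psi$'' along the chain, but this property is not recorded in the label $(\ell^+,\ell^\ps)$ and is not preserved by a $\Sigma$-bisimulation (which only has zig-zag clauses for $\peq$). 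For a completely arbitrary bi-intuitionistic model the propagation really can break: one can arrange $w\rel a_1\sim a_2\rel c$ with $p$ true at $a_1,a_2$ and false at $c$, $\nec p$ true at $w$, and $\nec p$ failing at $a_1$ only through an extra $\peq$-step (as in Figure~\ref{fig:gen-frame}), so that $a_1$ and $a_2$ carry the same $\Sigma$-label; the transitive closure defining $\nicefrac{\rel}{\sim}$ then lets $[w]$ reach $[c]$ in the quotient and $\nec p$ is lost at $[w]$. So the statement read literally for ``any model'' cannot be obtained by pure label propagation, and your instinct that an extra ingredient is needed in this case is correct.

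Your repair---propagating $\nec\psi\in\ell^+$ itself, which persists along $\rel$ whenever \ref{ax:trans:box} is valid (from $\nec\psi$ at $x$ and $x\rel y$ one gets $\nec\nec\psi$ at $x$, hence $\nec\psi$ at $y$ via $x\peq x\rel y$)---is sound, and by Proposition~\ref{prop:validity1} it applies to every backward- or downward-confluent model. Two adjustments: state the extra hypothesis in exactly those terms, since the lemma is invoked in Theorem~\ref{thmShallowtoFin} on arbitrary shallow $\Lambda$-models and on the shallow models $\mathcal M^\Lambda_\Sigma$, not on the canonical models; all of these are backward confluent (for $\sf CS4$, $\sf IS4$, $\sf GS4$) or downward confluent (for $\sf S4I$), so your restricted version does cover every use made of the lemma. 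If one wants the lemma verbatim for all bi-intuitionistic models, the cleaner repair is to enrich the label with a third component $\ell^\nec(w)=\{\varphi\in\Sigma:\forall v\ler w\ (\mathcal M,v)\models\varphi\}$: that component is preserved by bisimulation by definition, propagates along $\rel$ by transitivity, and makes the ``dual'' argument go through word for word, with only cosmetic changes to the counting in Lemma~\ref{lemBisBound}.
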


\begin{proof}
	Proceed by a standard induction on $\varphi$.
	Consider only the interesting cases.
	\medskip

	\noindent {\sc Case} $\varphi = \psi\to \theta$. If $(\mathcal M,w)\models \psi\to \theta$ and $[v] \mathrel {\nicefrac \seq\sim} [w]$, then in view of Lemma \ref{lemQuotLeq} (which we henceforth use without mention), there is $v' \sim v$ such that $w \peq v'$.
	Since $v'\seq w$, either $( \mathcal M,v')\not \models \psi$ or $( \mathcal M,v')   \models \theta$, which by the induction hypothesis and the fact that $[v] = [v']$ yields $(\nicefrac{\mathcal M}\sim,[v])\not \models \psi$ or $(\nicefrac{\mathcal M}\sim,[v]) \models \theta$, and since $[v]  \mathrel {\nicefrac \seq\sim} [w]$ was arbitrary, we conclude that $(\nicefrac{\mathcal M}\sim,[w])\models\varphi$.
	Conversely, if $(\nicefrac{\mathcal M}\sim,[w]) \models \psi \to \theta$ and $v\seq w$, then $[w]\mathrel {\nicefrac \peq\sim} [v]$, which implies that $( \nicefrac{\mathcal M}\sim,[v]) \not \models \psi$ or $(\nicefrac{\mathcal M} \sim,[v]) \models \theta$, and by the induction hypothesis, that $( \mathcal M,v)\not \models \psi$ or $( \mathcal M,v)   \models \theta$, as needed.
	\medskip
	
	\noindent {\sc Case} $\varphi = \ps\psi$.
	Suppose that $(\mathcal M,w) \models \ps \psi $ and let $[v] \mathrel {\nicefrac \seq\sim} [w]$.
	We may assume that $v$ is chosen so that $v\seq w$.
	Then, there is $v'$ such that $v\sqsubseteq v'$ and $(\mathcal M,v') \models \psi $.
	But then $[v] \mathrel{\nicefrac\sqsubseteq\sim} [v']$ and the induction hypothesis yields $(\nicefrac{\mathcal M}\sim,[v']) \models \psi$.
	We conclude that $(\nicefrac{\mathcal M}\sim,[w]) \models \ps \psi$.
Conversely, suppose that $(\mathcal M,w) \not \models \ps \psi $.
Then, there is $v\seq w$ such that $\psi\in \ell^\ps (v)$.
Let $v'$ be such that $[v]\mathrel{ \nicefrac\rel\sim} [v'] $.
Then, there exist sequences $(v_i)_{i\leq k}$ and $(v'_i)_{i\leq n}$ such that
	\[v = v_0 \sim  v'_0 \sqsubseteq  v_1 \sim  v'_1 \sqsubseteq  \ldots \sqsubseteq v_n \sim v'_n = v'. \]
	By induction on $i\leq n$, one readily verifies that $\psi \in \ell^\ps(v_i) \cap \ell^\ps(v'_i)$, from which it follows that $\psi \in \ell^\ps(v')$ and, hence, $  \psi \not \in \ell^+(v')$.
By the induction hypothesis, $(\nicefrac{\mathcal M}\sim,[v']) \not \models \psi$. Since $v \seq w$ then $[v] \mathrel {\nicefrac \seq\sim} [w]$ so $(\nicefrac{\mathcal M}\sim,[w]) \not \models \ps \psi$ : a contradiction.
	\medskip

	\noindent {\sc Case} $\varphi = \nec\psi $. This case is treated very similarly to the previous, but working `dually'. We show only that $(\mathcal M,w) \not \models \nec \psi $ implies that $(\nicefrac{\mathcal M}\sim,[w]) \not \models \nec \psi$ to illustrate.
	If $(\mathcal M,w) \not \models \nec \psi $, there are $v'\sqsupseteq v\seq w$ such that $(\mathcal M,v') \not \models \nec \psi $.
	But then, $ [v'] \mathrel{\nicefrac \sqsupseteq \sim} [v] \mathrel{\nicefrac \seq \sim} [w]$ and the induction hypothesis yields $(\nicefrac{\mathcal M}\sim,[v']) \not \models \psi$, hence $(\nicefrac{\mathcal M}\sim,[w]) \not \models \nec \psi$.
\end{proof}

\begin{lem}\label{lemSingletons}
	Every $\sim_\Sigma$-equivalence class of $\nicefrac{\mathcal M}{\sim_\Sigma}$ is a singleton.
%	\comment{P: $\sim_{\Sigma}$ denotes both the greatest $\Sigma$-bisimulation on $M$ and the greatest $\Sigma$-bisimulation on $M/\sim_{\Sigma}$. Is it not a problem ?}\comment{D: I think it is OK, no? Do you think there is risk of confusion?}
\end{lem}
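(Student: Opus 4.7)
The plan is to argue by maximality. Write $\sim$ for $\sim_\Sigma$ on $\mathcal M$, and let $\approx$ denote the greatest $\Sigma$-bisimulation on the quotient $\nicefrac{\mathcal M}{\sim}$. Define $Z\subseteq W\times W$ by $u\mathrel Z v$ iff $[u]\approx[v]$. The claim reduces to showing that $Z$ is itself a $\Sigma$-bisimulation on $\mathcal M$: once this is established, maximality of $\sim$ forces $Z\subseteq\sim$, so $[u]\approx[v]$ entails $u\sim v$, i.e.\ $[u]=[v]$, and every $\approx$-class in the quotient is a singleton.

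Verifying that $Z$ is a $\Sigma$-bisimulation breaks into label preservation and the two confluence conditions. For confluence, suppose $u\peq u'$ and $u\mathrel Z v$; then $[u]\mathrel{\nicefrac\peq\sim}[u']$, so by forward confluence of $\approx$ there is $[v']$ with $[v]\mathrel{\nicefrac\peq\sim}[v']$ and $[u']\approx[v']$. By Lemma~\ref{lemQuotLeq} we may choose a representative $v'$ so that $v\peq v'$, and then $u'\mathrel Z v'$ as required. Backward confluence is entirely dual, using the symmetric version of Lemma~\ref{lemQuotLeq} (available because $\sim$, as a bisimulation, is both forward and backward confluent).

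For the label condition, $u\mathrel Z v$ gives $\ell([u])=\ell([v])$ in the quotient, and this must be transferred back to $\ell(u)=\ell(v)$ in $\mathcal M$. For $\ell^+$ this is immediate from Lemma~\ref{lemQuotTruth} (under the standing assumption that $\Sigma$ is closed under subformulas). For $\ell^\ps$ one checks $\varphi\in\ell^\ps_{\mathcal M}(u)$ iff $\varphi\in\ell^\ps_{\nicefrac{\mathcal M}\sim}([u])$: the ($\Leftarrow$) direction is routine, while the ($\Rightarrow$) direction proceeds by unfolding any $[v]\mathrel{\nicefrac\rel\sim}[u]$ into a zig-zag chain of $\sim$ and $\rel$ steps, exactly as in the $\ps\psi$-case of the proof of Lemma~\ref{lemQuotTruth}, and then using that $\sim$ preserves $\ell^\ps$ along each $\sim$-step. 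This label-preservation step for $\ell^\ps$ is the main technical obstacle, since, unlike $\ell^+$, it is not directly the truth value of a $\Sigma$-formula at a point; everything else is a direct appeal to Lemmas~\ref{lemQuotLeq} and~\ref{lemQuotTruth}.
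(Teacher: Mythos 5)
Your proposal is correct and follows essentially the same route as the paper's proof: both pull the greatest quotient bisimulation back to a relation $Z$ on $\mathcal M$ (i.e.\ $u\mathrel Z v$ iff $[u]$ and $[v]$ are bisimilar in the quotient), verify that $Z$ is a $\Sigma$-bisimulation using Lemma~\ref{lemQuotLeq} for the confluence clauses, and conclude by maximality of $\sim_\Sigma$. The only difference is that you spell out the label-preservation step (via Lemma~\ref{lemQuotTruth} for $\ell^+$ and the zig-zag argument from its $\ps$-case for $\ell^\ps$), which the paper's proof dismisses as ``clear''.
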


\begin{proof}
	Suppose that $[w] \sim_\Sigma [v] $; we must show that $ w  \sim_\Sigma  v  $ as well to conclude $[w] = [v]$ (note that $\sim_\Sigma$ is defined both on $\mathcal M$ and $\nicefrac{\mathcal M}\sim$). Define a relation $Z \subseteq W\times W$ given by $x \mathrel Z y$ if $[x] \sim_\Sigma [y]$.
	Clearly $ w\mathrel Z v$, so it remains to check that $Z$ is a $\Sigma$-bisimulation to conclude that $ w \sim_\Sigma v$.
	Clearly $Z$ preserves labels in $\Sigma$.
	We check only the `forth' clause, as the `back' clause is symmetric.
	
	Suppose that $x' \seq   x \mathrel Z y$.
	Since $[x] \sim_\Sigma [y] $, there is $y'$ such that $ [x'] \sim_\Sigma [y'] \mathrel {\nicefrac{\seq}{\sim_\Sigma}} [y]$.
	In view of Lemma \ref{lemQuotLeq}, we may assume that $y'$ is chosen so that $y \peq  y ' $.
	From $[x'] \sim_\Sigma [y']  $ we obtain $x'\mathrel Z y'$, as needed.
\end{proof}

Quotients modulo $\Sigma$-bisimulation will be instrumental in proving the finite model property for $\sf CS4$ and $\sf GS4$.
However, bisimulation does not preserve downward confluence, so to treat $\sf S4I$, we will need a stronger notion of bisimulation.

\begin{defn}
	A {\em strong $\Sigma$-bisimulation} is a $\Sigma$-bisimulation $Z$ such that both $Z$ and $Z^{-1}$ are downward confluent.
We denote the greatest strong $\Sigma$-bisimulation by $\approx_\Sigma$.
%	If $\mathcal M$ is a model and $w,v\in W$, we write $w\approx_\Sigma v$ if there is a strong $\Sigma$-bisimulation $Z$ such that $w\mathrel Z v$.
\end{defn}

    Notice that an arbitrary union of strong $\Sigma$-bisimulations is a strong $\Sigma$-bisimulation. 
    Hence, $\approx_\Sigma$ is well defined.
    Obviously, $\approx_\Sigma$ is an equivalence relation.
The following is proven in essentially the same way as the forward-confluence preservation clause of Lemma \ref{lemQuotPreserv}.

\begin{lem}\label{lemStrongQuotPreserve}
	Let $\mathcal M $ be a bi-intuitionistic model and $\Sigma$ be a finite set of formulas closed under subformulas.
	Let $\approx$ be a strong $\Sigma$-bisimulation on $\mathcal M$ which is also an equivalence relation.
	Then, if $\mathcal M$ is downward-confluent, so is $\nicefrac{\mathcal M}{\approx}$.
\end{lem}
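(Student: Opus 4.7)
The plan is to mimic the forward-confluence clause of Lemma \ref{lemQuotPreserv}, with the added twist that the strong $\Sigma$-bisimulation $\approx$ is itself downward confluent by definition. Since downward confluence is preserved under transitive closure (Lemma \ref{lemmClosure}), it suffices to show that the relation $\nicefrac{\sqsubseteq^0}{\approx}$ is downward confluent with respect to $\nicefrac{\peq}{\approx}$.

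To that end, I would start from an arbitrary $[w] \mathrel{\nicefrac{\peq}{\approx}} [v] \mathrel{\nicefrac{\sqsubseteq^0}{\approx}} [v']$ and unpack it inside $\mathcal M$: Lemma \ref{lemQuotLeq} supplies $v'' \approx v$ with $w \peq v''$, and the definition of $\nicefrac{\sqsubseteq^0}{\approx}$ supplies $v_1, v_1'$ with $v \approx v_1 \sqsubseteq v_1' \approx v'$; by transitivity $v'' \approx v_1$. The main obstacle---and the reason ordinary $\Sigma$-bisimulations do not suffice here---is to transfer $w$ from being a $\peq$-predecessor of $v''$ to being a $\peq$-predecessor of $v_1$, while $v''$ and $v_1$ are only related by $\approx$. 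This is precisely what the extra downward-confluence clause built into strong $\Sigma$-bisimulations provides: applied to $w \peq v'' \approx v_1$, it yields some $w_1$ with $w \approx w_1 \peq v_1$.

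After this realignment the argument stays inside $\mathcal M$: from $w_1 \peq v_1 \sqsubseteq v_1'$, the downward confluence of $\mathcal M$ itself produces $w_1'$ with $w_1 \sqsubseteq w_1' \peq v_1'$. Then $w \approx w_1 \sqsubseteq w_1'$ witnesses $[w] \mathrel{\nicefrac{\sqsubseteq^0}{\approx}} [w_1']$, while $w_1' \peq v_1' \approx v'$ witnesses $[w_1'] \mathrel{\nicefrac{\peq}{\approx}} [v']$, closing the required diagram.
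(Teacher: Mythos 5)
Your proof is correct and follows the route the paper intends: the paper's own ``proof'' is just the remark that the argument mirrors the forward-confluence clause of Lemma \ref{lemQuotPreserv}, and your write-up is precisely that adaptation, reducing to $\nicefrac{\sqsubseteq^0}{\approx}$ via Lemma \ref{lemmClosure} and then closing the diagram inside $\mathcal M$. In particular, you correctly pinpoint that the downward-confluence clause built into strong $\Sigma$-bisimulations is what replaces the realignment supplied by Lemma \ref{lemQuotLeq} in the forward case, which is exactly why the paper introduces strong bisimulations for $\sf S4I$.
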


We remark that the quotient $\nicefrac{\mathcal M}{\sim}$ may be infinite, even taking ${\sim} \in \{ {\sim_\Sigma} , {\approx_\Sigma} \}$.
However, in the next section we discuss a class of models which do have finite quotients.

\section{Shallow Models}\label{sShallow}

A key ingredient in our finite model property proofs is to work with {\em shallow models,} which are models where the {\em heights} of worlds are bounded.
Given a frame $\mathcal F = ( W,\fallible W ,\peq ,\rel )$ and $w\in W$, the {\em height} of $w$ is the supremum of all $n$ such that there is a sequence
\[w = w_0 \prec w_1 \prec \ldots \prec w_n.\]
The height of the frame $\mathcal F  $ is the supremum of all heights of elements of $W$.
Height is defined in the same way if we replace $\mathcal F$ by a model $\mathcal M$.
Note that the height of worlds or models could be $\infty$.
If a frame or model has finite height, we say that it is {\em shallow.}
Shallow models will provide an important intermediate step towards establishing the finite model property, as the bisimulation quotient of a shallow model is finite.
Nevertheless, it can be quite large, as it is only superexponentially bounded.

Below, let $2^x_y$ be the superexponential function given by $2^x_0 = x$ and $2^x_{y+1} = 2^{2^x_y}$.

\begin{lem}
	For all $m,n,k\geq 1$, $2^m\cdot 2_{k}^{(n-1)m}\leq 2_{k}^{nm}.$
\end{lem}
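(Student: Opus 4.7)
The plan is to proceed by induction on $k\geq 1$, unfolding the definition of $2^{(\cdot)}_k$ one layer at a time and reducing the desired inequality to a simple elementary estimate.

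For the base case $k=1$, one just computes: $2^m\cdot 2^{(n-1)m}_1 = 2^m\cdot 2^{(n-1)m} = 2^{nm} = 2^{nm}_1$, so equality holds. For the inductive step, assuming the inequality for $k$, unfolding the definition transforms the desired bound $2^m\cdot 2^{(n-1)m}_{k+1}\leq 2^{nm}_{k+1}$ into
\[
2^{m + 2^{(n-1)m}_k} \;\leq\; 2^{2^{nm}_k},
\]
which is equivalent to the exponent-level inequality $m + 2^{(n-1)m}_k \leq 2^{nm}_k$.

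To obtain this, I would combine the inductive hypothesis $2^m\cdot 2^{(n-1)m}_k \leq 2^{nm}_k$ with the observation that it therefore suffices to show $m \leq (2^m - 1)\cdot 2^{(n-1)m}_k$. This last inequality follows immediately from the two elementary facts $m \leq 2^m - 1$ for $m\geq 1$, and $2^{(n-1)m}_k \geq 1$ for $n,k\geq 1$ (verified by a trivial side induction on $k$, using that $(n-1)m\geq 0$).

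There is essentially no obstacle here; the only mild care needed is at the boundary, namely to confirm that the hypotheses $m\geq 1$ and $n\geq 1$ are exactly what is required to make both auxiliary estimates go through, and that the definition $2^{x}_0 = x$ does not cause trouble since the induction starts at $k=1$ rather than $k=0$.
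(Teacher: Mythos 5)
Your proposal is correct and follows essentially the same route as the paper: induction on $k$, reducing the inductive step to the exponent-level inequality $m + 2^{(n-1)m}_k \leq 2^{nm}_k$, which is then obtained from the inductive hypothesis together with the elementary facts $m+1\leq 2^m$ and $2^{(n-1)m}_k \geq 1$. The only cosmetic difference is that you bound $m + x \leq (2^m-1)x + x = 2^m x$ where the paper writes $m + x \leq (m+1)x \leq 2^m x$.
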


\proof
Proceed by induction on $k$. If $k=1$, then
\[2^m\cdot 2_{k}^{(n-1)m}=2^m\cdot 2^{(n-1)m}=2_{k}^{n m}.\]
If $k>1$, then note that $1 \leq  2_{k-1}^{(n-1)m}$, so that
\[m+ 2_{k-1}^{(n-1)m} \leq (m+1) 2_{k-1}^{(n-1)m} \leq 2^m\cdot 2_{k-1}^{(n-1)m} \stackrel{\text{{\sc IH}}} \leq 2_{k-1}^{nm} .\]
Then,
\[2^m\cdot 2_{k}^{(n-1)m}=2^m \cdot 2^{ 2_{k-1}^{(n-1)m}}=2^{m+ 2_{k-1}^{(n-1)m}} \leq 2^{2_{k-1}^{nm}} = 2_{k}^{nm},\]
as needed.
\endproof

\begin{lem}\label{lemBisBound}
	Given a bi-intuitionistic model $\mathcal M = (W,\fallible W,\peq,\sqsubseteq,V)$ of finite height $n$ and finite $\Sigma$ with $\#\Sigma = s$, $ \# \nicefrac W{\sim_\Sigma} \leq  2^{2(n+1)s}_{n+2}$.
\end{lem}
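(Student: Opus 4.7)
The plan is to bound $\# \nicefrac W{\sim_\Sigma}$ by approximating $\sim_\Sigma$ from above by a decreasing sequence of equivalence relations whose cardinalities can be controlled inductively. Define $w \sim_0 v$ iff $\ell(w) = \ell(v)$, and $w \sim_{k+1} v$ iff $\ell(w) = \ell(v)$ together with the forth/back clauses for $\peq$: for every $w' \seq w$ there is $v' \seq v$ with $w' \sim_k v'$, and symmetrically. Each $\sim_k$ is readily seen to be an equivalence relation, and $\sim_0 \supseteq \sim_1 \supseteq \cdots \supseteq \sim_\Sigma$.

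The first step is to show that the chain stabilizes quickly under the height hypothesis: for any $w \in W$, the $\sim_\Sigma$-class of $w$ is already determined by its $\sim_{h(w)+1}$-class, where $h(w)$ denotes the height of $w$. In particular, if $\mathcal M$ has height $n$, then $\sim_\Sigma = \sim_{n+1}$. The proof is by induction on $h(w)$; the key observation is that the $\peq$-successors of $w$ partition into (i) worlds of strictly smaller height, covered by the inductive hypothesis, and (ii) $\peq$-equivalents of $w$ (cluster members), whose contribution is captured uniformly via their labels together with the shared set of cluster-external successors.

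The second step is the counting. Let $T_k := \# \nicefrac W{\sim_k}$. We have $T_0 \leq 2^{2s}$ since a label consists of a pair of subsets of $\Sigma$. Inductively, each $\sim_{k+1}$-class is determined by the pair (label, set of $\sim_k$-classes among $\peq$-successors), yielding $T_{k+1} \leq 2^{2s} \cdot 2^{T_k}$. Combining this recursion with the superexponential lemma proved immediately above (applied with $m = 2s$), a routine induction produces a tower bound of the shape $T_k \leq 2^{2(k+1)s}_{k+1}$, from which the target estimate $\# \nicefrac W{\sim_\Sigma} \leq 2^{2(n+1)s}_{n+2}$ follows.

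The main obstacle will be the stabilization step: while the idea is standard for tree-like frames, here $\peq$-clusters may be non-trivial and their members can carry distinct labels (differing in their $\ell^\ps$-component) despite sharing $\ell^+$, so the cluster-internal label distribution has to be tracked carefully to ensure that $\sim_{h(w)+1}$ really captures the full $\Sigma$-bisimulation type of $w$.
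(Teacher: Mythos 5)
Your Step 2 (the recursion $T_0\leq 2^{2s}$, $T_{k+1}\leq 2^{2s}\cdot 2^{T_k}$, justified by the observation that a $\sim_{k+1}$-class is determined by the label together with the set of $\sim_k$-classes of $\peq$-successors) is fine, but the proposal has a genuine gap at its central step, the stabilization claim, and that is where all the content of the lemma lives. The standard convergence of the stratified approximants $\sim_k$ to full bisimilarity uses image-finiteness, which is not available here: clusters and sets of $\prec$-successors may be infinite, and finite height does not substitute for it automatically. Your sketched induction on $h(w)$ does not close precisely because of your case (ii): the cluster members of $w$ have the \emph{same} height as $w$, so the induction hypothesis says nothing about them, and the forth/back witnesses that $\sim_{h(w)+1}$ supplies for them are only known to be $\sim_{h(w)}$-related, need not be bisimilar to them, and need not have smaller height (a witness may itself be a maximal-height cluster member on the other side, or a height-$0$ world). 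Nor is ``label together with the shared set of cluster-external successors'' something you can match ``uniformly'': a height-$1$ cluster member can be $\sim_1$-matched by a height-$0$ world carrying the same label and the same set of successor labels without being bisimilar to it, since the bisimulation class of a height-$0$ world is just the pair (own label, cluster label set), whereas that of the height-$1$ world also depends on the $\sim_\Sigma$-classes of its strict successors (here the $\ell^\ps$-components, which vary inside clusters, make such coincidences genuinely possible). So whether $\sim_\Sigma$ coincides with $\sim_{n+1}$ on arbitrary infinitely branching, infinite-cluster models of height $n$ is exactly what needs proof, and none is given; you yourself flag it as ``the main obstacle.'' The paper's argument avoids approximants entirely: it works with $\sim_\Sigma$ itself, uses only the easy sufficiency direction (if two worlds have the same label, their clusters realize the same set of labels, and their $\prec$-successors realize the same set of $\sim_\Sigma$-classes, then one can exhibit a bisimulation directly), and then bounds the number of such data by induction on height.

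There is also a quantitative slip at the end. Even granting stabilization at level $n+1$, your recursion together with the superexponential lemma gives $T_k\leq 2^{2(k+1)s}_{k+1}$ and hence $\#\nicefrac{W}{\sim_\Sigma}=T_{n+1}\leq 2^{2(n+2)s}_{n+2}$, which is weaker than the stated bound $2^{2(n+1)s}_{n+2}$; already at $n=0$ your estimate $2^{2s}\cdot 2^{2^{2s}}$ exceeds $2^{2s}_2$. To recover the stated constant you would need to organize the count as the paper does, by induction on height over cluster data and successor classes, rather than paying a fresh factor $2^{2s}$ at every approximation stage.
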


\proof[Proof sketch]
This is proven in some detail in e.g.~\cite{BalbianiToCL}, but we outline the main elements of the proof.
Proceed by induction on $n\in \mathbb N$ to show that there are at most $2^{2(n+1)s}_{n+2} $ $\Sigma$-bisimulation classes of points of height $n$.
Let $w\in W$.
If $w$ has height $0$, its bisimulation class is uniquely determined by the labels $\ell(v) \in 2^\Sigma\times 2^\Sigma$ of those $v$ in the cluster of $w$ (i.e., the set of $v\in W$ such that $v\peq w \peq v $), and there are at most $2^{2s} = 2^{2s}_1  $ choices for each label, so there are at most $2^{2^{2s}_1} = 2^{2s}_2$ choices for the entire cluster.

For the inductive step, let $\{[v_i]: i\in I\}$ enumerate the equivalence classes of the immediate successors of $w$.
Note that each $v_i$ has height less than that of $w$, so that by the induction hypothesis, there are at most $2^{2ns}_{n+1} $ choices for $[v_i]$, and the bisimulation class of $w$ is determined by the labels of its cluster, for which there are $2^{2^{2s}}$ choices, and a possible choice of  $\{[v_i]: i\in I\}$, of which there are at most $2^{2^{2ns}_{n+1}} $ choices.
Hence there are at most
\[2^{2^{2s}}\cdot 2^{2^{2ns}_{n+1}} = 2^{2^{2s} + 2^{2ns}_{n+1}} \leq 2^{2^{2 (n+1) s}_{n+1}} = 2^{2 (n+1) s}_{n+2} 
\]
choices for the bisimulation class of $w$.
\endproof

The following lemma is proven analogously to Lemma \ref{lemBisBound}.
Below, a bi-intuitionistic model $\mathcal M = (W,\fallible W,\peq,\sqsubseteq,V)$ is {\em forest-like} if for every $w\in W$, the set $\{v\in W:v\peq w\}$ is totally ordered by $\peq$.

\begin{lem}\label{lemStrongBisBound}
	Given a forest-like bi-intuitionistic model $\mathcal M = (W,\fallible W,\peq,\sqsubseteq,V)$ of finite height $n$ and finite $\Sigma$ with $\#\Sigma = s$, $ \# \nicefrac W{\approx_\Sigma} \leq  2^{2(n+1)s}_{n+2}$.
\end{lem}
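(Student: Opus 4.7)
My plan is to mirror the inductive argument of Lemma~\ref{lemBisBound}, using the same signature-based counting; the only new work is verifying that, in a forest-like model, signature-equivalence yields a \emph{strong} $\Sigma$-bisimulation rather than merely a $\Sigma$-bisimulation. Proceed by induction on the height $n$. For each $w \in W$, associate a signature $\sigma(w)$ consisting of (a) the set of labels appearing in the $\peq$-cluster of $w$, together with (b) the set $\{\sigma(v) : v \text{ an immediate } \prec\text{-successor of } w\}$. By the induction hypothesis, the number of possible signatures at heights below $n$ is at most $2^{2ns}_{n+1}$, so the signatures at height $n$ number at most $2^{2^{2s}} \cdot 2^{2^{2ns}_{n+1}} \leq 2^{2(n+1)s}_{n+2}$, by the same estimate used in Lemma~\ref{lemBisBound}.

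The genuinely new content is showing that the equivalence $w \mathrel Z v$ iff $\sigma(w) = \sigma(v)$ is a strong $\Sigma$-bisimulation. Label-preservation is immediate, and forward and backward confluence follow as in Lemma~\ref{lemBisBound}. For downward confluence, suppose $u \peq w$ and $w \mathrel Z v$; since $\mathcal M$ is forest-like, $u$ lies on the unique totally ordered $\peq$-chain of predecessors of $w$, and one recursively follows the correspondence between this chain and the chain below $v$---matching clusters level by level using the signature data---to produce a $u' \peq v$ with $\sigma(u') = \sigma(u)$. Downward confluence of $Z^{-1}$ is symmetric.

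The main obstacle is making this chain correspondence rigorous. A clean way is to enrich the signature to additionally record the signature-class of the immediate $\prec$-predecessor (unique up to cluster in a forest-like model); this recursion terminates in at most $n+1$ levels and, because predecessor-signatures at each level have already been enumerated by the induction hypothesis, the enriched signature-space remains bounded by the quantity of Lemma~\ref{lemBisBound}. The forest-likeness hypothesis is essential here: in an arbitrary bi-intuitionistic model, $w$ could have incomparable predecessors, destroying the canonical matching with $v$'s chain and invalidating the bound.
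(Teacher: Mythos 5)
Your plan follows the route the paper intends (it only says the lemma is ``proven analogously to Lemma~\ref{lemBisBound}''), but the execution breaks down exactly at the one step where the analogy is not routine. The relation $Z$ you define -- equality of signatures built from cluster labels and $\prec$-successor signatures -- is simply not a strong $\Sigma$-bisimulation, and no ``correspondence between the chain below $w$ and the chain below $v$'' can be extracted from it, because that signature contains no information whatsoever about predecessors. Concretely, take $\Sigma=\{p\}$, $W=\{a,b,c\}$ with $a\prec b$, $c$ isolated, $\sqsubseteq$ the identity, and $p$ true exactly at $b$ and $c$. This model is forest-like, and $b,c$ have equal labels, equal cluster data and no proper $\prec$-successors, so $\sigma(b)=\sigma(c)$; yet $b\not\approx_\Sigma c$, since downward confluence at $a\peq b$ would require a predecessor of $c$ with the label of $a$, and there is none. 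So the argument of your second paragraph cannot be repaired by ``following the chain'': the relation itself is too coarse, and forest-likeness does not change that.

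The patch in your last paragraph does not close the gap either. If ``the signature-class of the immediate $\prec$-predecessor'' means the \emph{enriched} class, the definition is circular under your induction on height: predecessors have strictly greater height, so their enriched signatures are not yet defined, and they would in turn refer to the signatures of their successors, including the very point being defined. If it means the \emph{plain} class (or the chain of plain cluster data below the point), then you have refined the relation and must re-verify everything: forward and backward confluence are no longer ``as in Lemma~\ref{lemBisBound}'', since matching a successor only up to plain signature does not guarantee that the clusters strictly between the two points match, hence does not preserve the enriched data; downward confluence of both $Z$ and $Z^{-1}$ must be checked against the recorded chain; and the bound $2^{2(n+1)s}_{n+2}$ must be re-derived, since enlarging the signature space by predecessor-chain data is not obviously absorbed by the original count. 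None of these verifications is carried out, and they constitute the real content of the lemma: one needs a signature that interleaves upward and downward information along the (finite, by the height bound) chain of clusters below a point, a proof that the induced equivalence is a strong $\Sigma$-bisimulation, and a count of the resulting signature space. As it stands, the proposal asserts rather than proves the only non-analogous step.
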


\begin{remark}
	Note that the forest-like assumption in Lemma \ref{lemStrongBisBound} is needed, as in general there may be infinitely many $\approx_\Sigma$ equivalence classes of points if this assumption fails.
	In a model consisting of an infinite sequence
	\[w_0 \succ v_0 \prec w_1 \succ v_1 \prec w_2 \succ \ldots \]
	where $p$ is true only on $w_0$, no two points are strongly $\sim_{\{p\}}$-bisimilar.
\end{remark}

We conclude this section by showing that the shallow model property implies the finite model property for any of the logics we are interested in.

\begin{thm}\label{thmShallowtoFin}Let $\Lambda\in\! \lbrace{\sf CS4},{\sf IS4},{\sf S4I},{\sf GS4}\rbrace$ and $\varphi\in \lanfull$.
\begin{enumerate}

\item

If $\Lambda \neq\sf S4I$ and $\varphi$ is satisfiable (resp.~falsifiable) in a shallow $\Lambda$-model, then $\varphi$ is satisfiable (resp.~falsifiable) in a finite $\Lambda$-model.

\item

If $\Lambda = \sf S4I$ and $\varphi$ is satisfiable (resp.~falsifiable) in a shallow, forest-like $\Lambda$-model, then $\varphi$ is satisfiable (resp.~falsifiable) in a finite $\Lambda$-model.

\end{enumerate}
\end{thm}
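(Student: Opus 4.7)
The plan is to realize the required finite model as a bisimulation quotient of the given shallow model modulo $\mathrm{sub}(\varphi)$. Fix a shallow $\Lambda$-model $\mathcal M = (W,\fallible W,\peq,\sqsubseteq,V)$ (forest-like, in the $\sf S4I$ case) together with a world $w \in W \setminus \fallible W$ witnessing the satisfiability or falsifiability of $\varphi$, and set $\Sigma := \mathrm{sub}(\varphi) \cup \{\bot\}$, which is finite. For $\Lambda \in \{{\sf CS4}, {\sf IS4}, {\sf GS4}\}$ I would form the quotient $\mathcal M' := \nicefrac{\mathcal M}{\sim_\Sigma}$; for $\Lambda = \sf S4I$ I would instead use the finer strong quotient $\mathcal M' := \nicefrac{\mathcal M}{\approx_\Sigma}$.

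The first step is to verify that $\mathcal M'$ lies in the intended class of frames. Lemma~\ref{lemQuotPreserv} transfers forward confluence, backward confluence, and local linearity through $\sim_\Sigma$, which jointly take care of $\sf CS4$, $\sf IS4$, and $\sf GS4$. For $\sf S4I$, forward confluence still transfers by Lemma~\ref{lemQuotPreserv} (since $\approx_\Sigma$ is a fortiori a $\Sigma$-bisimulation), while downward confluence is preserved by Lemma~\ref{lemStrongQuotPreserve}. Infallibility, when required by the logic, is immediate because $\fallible W = \varnothing$ forces $\nicefrac{\fallible W}{\sim} = \varnothing$; moreover, the choice $\bot \in \Sigma$ guarantees that bisimulation never identifies a fallible with a non-fallible world (a fallible world has $\bot \in \ell^+$, a non-fallible one does not), so $[w]$ remains in the non-fallible part of $\mathcal M'$.

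Next, I would invoke Lemma~\ref{lemQuotTruth} to obtain $(\mathcal M',[w]) \models \varphi \iff (\mathcal M, w) \models \varphi$, so that satisfiability or falsifiability is inherited verbatim by the quotient. Finally, Lemma~\ref{lemBisBound} (respectively Lemma~\ref{lemStrongBisBound} for the strong quotient used in the $\sf S4I$ case) bounds the number of equivalence classes by $2^{2(n+1)s}_{n+2}$, where $n$ is the height of $\mathcal M$ and $s = \#\Sigma$, so $\mathcal M'$ is a finite $\Lambda$-model with the desired property.

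The main obstacle, and the reason for the case split in the statement, is the $\sf S4I$ clause. Ordinary $\Sigma$-bisimulation does not preserve downward confluence in general, which forces the refinement to $\approx_\Sigma$. In turn, the finiteness bound for strong bisimulation classes (Lemma~\ref{lemStrongBisBound}) genuinely relies on the forest-like hypothesis, as the counterexample in the remark following that lemma already shows. This is precisely why the $\sf S4I$ clause carries the extra forest-like requirement, whereas the other logics make do with plain $\Sigma$-bisimulation and therefore impose no such restriction.
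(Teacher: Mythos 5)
Your proposal is correct and follows essentially the same route as the paper: quotient the shallow model by $\sim_\Sigma$ (or $\approx_\Sigma$ for $\sf S4I$), then apply Lemmas \ref{lemQuotPreserv}/\ref{lemStrongQuotPreserve}, \ref{lemQuotTruth}, and \ref{lemBisBound}/\ref{lemStrongBisBound} exactly as the paper does. Your extra care about adding $\bot$ to $\Sigma$ so that fallible and infallible worlds are never identified is a small refinement the paper leaves implicit, but it does not change the argument.
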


\begin{proof}
Let $\Lambda\in \{{\sf CS4},{\sf IS4},{\sf S4I},{\sf GS4}\}$ and let $\mathcal M$ be a shallow model satisfying (falsifying) $\varphi$.
Then, for $ \Lambda\neq {\sf S4I}$ we see that $\nicefrac{\mathcal M}{\sim_\Sigma}$ is finite by Lemma \ref{lemBisBound}, is a $\Lambda$-model by Lemma \ref{lemQuotPreserv}, and satisfies (falsifies) $\varphi$ by Lemma \ref{lemQuotTruth}.
For $\Lambda = {\sf S4I}$ we further assume that $\mathcal M$ is forest-like, and use Lemma \ref{lemStrongQuotPreserve} to see that $ \nicefrac{\mathcal M}{\approx_\Sigma}$ is downward-confluent.
Reasoning as above, it is also an $\sf S4I$-model satisfying (falsifying) $\varphi$.
\end{proof}

Thus in order to prove the finite model property for any of these logics, it suffices to show that they have the shallow model property: that any non-valid formula is falsifiable in a shallow model.
This is the strategy that we will employ in the sequel.

\begin{remark}
Note that Theorem \ref{thmShallowtoFin} applies to $\sf IS4$, even though we will not establish the finite model property for $\sf IS4$ in this paper.
However, this result does reduce the problem of establishing the finite model property for $\sf IS4$ to that of establishing the shallow model property.
\end{remark}

\section{The finite model property for $\sf CS4$}\label{secFMPCS4}

In view of the above results, in order to prove the finite model property for $\sf CS4$, it suffices to prove the shallow model property.
To this end we define a `shallow' model, $\mathcal M_\Sigma^{\sf CS4}$, which has finite depth.
In this section, the notation $W_c,\peq_c$, etc.~will refer to the canonical model $\mathcal M^{\sf CS4}_c$ for $\sf CS4$.

\begin{defn}
Fix $\Lambda={\sf CS4}$. 
Given $\Gamma, \Delta\in W_c$, define $\Gamma \peq_\Sigma \Delta $ if $\Gamma \peq_c \Delta$ and
%\comment{P: should we write $\Gamma^{+}=\Delta^{+}$ or $\Gamma^{+}\cap\Sigma=\Delta^{+}\cap\Sigma$ ? D: This is correct.}
\begin{enumerate*}
	\item $\Gamma^+ = \Delta^+$ or
	\item there exists $\chi \in \Sigma $ such that $\chi \in \Delta^+ \setminus \Gamma^+$.
\end{enumerate*}
We then define $\mathcal M_\Sigma^{\sf CS4} = (W_c,\peq_\Sigma,\rel_c,V_c)$.
\end{defn}

Thus $\mathcal M_\Sigma^{\sf CS4}$ is almost identical to $\mathcal M_c^{\sf CS4}$, but we have modified the intuitionistic accessibility relation.
It is easy to see that $\mathcal M_\Sigma^{\sf CS4}$ is a bi-intuitionistic model.

\begin{lem}\label{lemTruthImpCS4}
For all $\Gamma \in W_c$ and $ \varphi \to \psi \in \Sigma $ we have that $\varphi\to \psi \in  \Gamma^+$ iff for all $\Delta \seq_\Sigma \Gamma$, $\varphi\in \Delta^+$ implies $\psi  \in  \Delta^+$.
\end{lem}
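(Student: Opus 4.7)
The plan is to prove both directions, with the forward implication being a direct monotonicity argument and the converse a contrapositive essentially lifted from the implication case of Lemma~\ref{lem:truth-lemma}.

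For the forward direction, I would fix $\Delta \seq_\Sigma \Gamma$ with $\varphi \in \Delta^+$. Since $\peq_\Sigma$ is by construction a sub-relation of $\peq_c$, we have $\Gamma^+ \subseteq \Delta^+$, so $\varphi \to \psi \in \Delta^+$; deductive closure together with \ref{rl:mp} then yields $\psi \in \Delta^+$.

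For the converse I would argue by contrapositive: assuming $\varphi \to \psi \notin \Gamma^+$, I produce $\Delta \seq_\Sigma \Gamma$ with $\varphi \in \Delta^+$ and $\psi \notin \Delta^+$. I split on whether $\varphi \in \Gamma^+$. If so, then $\psi \notin \Gamma^+$, since otherwise deductive closure applied to the intuitionistic tautology $\psi \to (\varphi \to \psi)$ would force $\varphi \to \psi \in \Gamma^+$; then $\Delta = \Gamma$ itself works, with $\Gamma \peq_\Sigma \Gamma$ holding via the first clause $\Gamma^+ = \Delta^+$. If instead $\varphi \notin \Gamma^+$, I would follow the familiar recipe: form $\Upsilon := (\Gamma^+ \cup \{\varphi\};\varnothing)$, verify it is $\psi$-consistent (a derivation $\Gamma^+,\varphi \vdash \psi$ would give $\Gamma^+ \vdash \varphi \to \psi$), and apply Lemma~\ref{lem:lindembaum} to extend it to a $\psi$-consistent theory $\Delta$. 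By construction $\Gamma \peq_c \Delta$, $\varphi \in \Delta^+$, and $\psi \notin \Delta^+$.

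The one delicate point, which I expect to be the crux of the argument, is promoting $\Gamma \peq_c \Delta$ to $\Gamma \peq_\Sigma \Delta$ in this second subcase. Here I would invoke the standing assumption that $\Sigma$ is closed under subformulas, so that $\varphi \in \Sigma$; since $\varphi \in \Delta^+ \setminus \Gamma^+$, the formula $\varphi$ itself is the witness $\chi$ required by the second clause of $\peq_\Sigma$. There is no genuine obstacle beyond this bookkeeping: the definition of $\peq_\Sigma$ has been engineered precisely so that the canonical witness produced by Lindenbaum's lemma already lies in $\Sigma$, converting what might otherwise be the hard step into a routine verification.
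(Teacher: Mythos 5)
Your proposal is correct and takes essentially the same route as the paper: the forward direction is the same monotonicity argument, and the converse likewise produces either $\Gamma$ itself or a $\peq_c$-successor falsifying $\varphi\to\psi$ and observes that it is automatically a $\peq_\Sigma$-successor because the distinguishing formula lies in $\Sigma$. The only differences are cosmetic — you split on $\varphi\in\Gamma^+$ where the paper splits on whether $\Delta^+\cap\Sigma=\Gamma^+\cap\Sigma$, and you spell out the Lindenbaum step that the paper imports from its completeness argument; both proofs use the standing assumption that $\Sigma$ is closed under subformulas.
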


\begin{proof}
From left to right, let us take $\Gamma \in W_c$ such that 
$\varphi \rightarrow \psi \in \Gamma^+$. Take any $\Delta \in W_c$ such that $\Gamma \peq_\Sigma \Delta$.
By definition, $\Gamma^+ \subseteq\Delta^+$ so $\varphi \rightarrow \psi \in\Delta^+$. If $\varphi \in \Delta^+$, it follows that $\psi \in \Delta^+$

Conversely, assume towards a contradiction that $\varphi \rightarrow \psi \not \in \Gamma^+$. Therefore, there exists $\Delta \seq_c \Gamma$ such that $\varphi \in \Delta^+$ and $\psi \not \in \Delta^+$: 
\begin{enumerate*}
	\item If $\Delta^+ \cap\Sigma = \Gamma^+\cap\Sigma$ then we already have $\varphi\in \Gamma^+$ and $\psi\not\in \Gamma^+$, and $\Gamma \peq_\Sigma\Gamma$, as needed.
	\item If $\Delta^+ \cap\Sigma \neq \Gamma^+\cap\Sigma$, since $\Delta \seq_c \Gamma$, we conclude that $\Delta\seq_\Sigma \Gamma$.
\end{enumerate*}
\end{proof}

\begin{lem}\label{lemCS4Shallow}
Any $\prec_\Sigma $-chain has length at most $|\Sigma|+1$.
\end{lem}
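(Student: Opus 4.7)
The plan is to extract a witness formula in $\Sigma$ for each strict step in the chain and argue, via monotonicity along $\peq_c$, that these witnesses must all be distinct.

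First, I would unfold the definition of $\prec_\Sigma$. Suppose $\Gamma \prec_\Sigma \Delta$, meaning $\Gamma \peq_\Sigma \Delta$ but $\Delta \not\peq_\Sigma \Gamma$. In particular we cannot have $\Gamma^+ = \Delta^+$, for then clause~(1) would give $\Delta \peq_\Sigma \Gamma$ as well (note that $\peq_c$ is symmetric whenever the underlying $^+$-components agree, since it is defined by set inclusion). Hence clause~(2) of the definition of $\peq_\Sigma$ is the one that must apply, yielding some witness $\chi \in \Sigma$ with $\chi \in \Delta^+ \setminus \Gamma^+$.

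Second, given a chain $\Gamma_0 \prec_\Sigma \Gamma_1 \prec_\Sigma \ldots \prec_\Sigma \Gamma_n$, I would apply this observation to each transition to obtain formulas $\chi_0, \ldots, \chi_{n-1} \in \Sigma$ with $\chi_i \in \Gamma_{i+1}^+ \setminus \Gamma_i^+$ for each $i < n$. The key claim is that these witnesses are pairwise distinct. By transitivity of $\peq_c$ we have $\Gamma_j^+ \subseteq \Gamma_k^+$ whenever $j \leq k$, so on the one hand $\chi_i \in \Gamma_{i+1}^+ \subseteq \Gamma_k^+$ for every $k \geq i+1$, and on the other hand $\chi_i \notin \Gamma_i^+ \supseteq \Gamma_j^+$ for every $j \leq i$. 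Thus $\chi_i$ belongs to $\Gamma_k^+$ iff $k > i$, which uniquely identifies the index $i$ from $\chi_i$.

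Finally, since $\chi_0, \ldots, \chi_{n-1}$ are $n$ distinct elements of $\Sigma$, we conclude $n \leq |\Sigma|$, and hence the chain contains at most $|\Sigma|+1$ worlds, as required. There is no serious obstacle here: the argument is essentially bookkeeping, and the only subtle point is noticing that the strict inequality $\Gamma_i^+ \neq \Gamma_{i+1}^+$ forces clause~(2) of the definition of $\peq_\Sigma$ to supply a genuine $\Sigma$-witness for each step.
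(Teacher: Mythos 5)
Your proof is correct and follows essentially the same route as the paper: extract a witness $\chi_i\in\Gamma_{i+1}^+\setminus\Gamma_i^+$ for each strict step and use monotonicity of the $^+$-components along $\peq_c$ to see that the witnesses are pairwise distinct, giving $n\leq|\Sigma|$. Your extra remark that strictness rules out clause~(1) (since equal $^+$-components would force $\peq_\Sigma$ in both directions) just makes explicit a detail the paper leaves implicit.
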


\begin{proof}
Let $\Gamma_0\prec_\Sigma \Gamma_1 \prec_\Sigma\ldots \prec_\Sigma \Gamma_n$ be any chain.
For each $i<n$ there is $\varphi_i \in \Sigma$ such that $\varphi_i \in \Gamma^+_{i+1} \setminus \Gamma^+_{i }$.
Note that if $i<j<n$ then $\varphi_i\neq\varphi_j$, since by monotonicity $\varphi_i\in  \Gamma^+_{i+1}$ implies that $\varphi_i\in  \Gamma^+_{j}$.
Hence $n\leq |\Sigma|$, so the length of the chain is at most $|\Sigma|+ 1$.
\end{proof}

%We remark that the canonical model satisfies the following strong notion of downward confluence.

%\begin{lem}\label{lemCS4Down}
%If $\Phi \peq_c \Psi \rel_c \Theta$ then also $(\Phi^+,\varnothing) \rel_c \Theta$.
%\end{lem}

%\begin{proof}
%Obvious from the definition of $\rel_c$.
%\end{proof}

\begin{lem}\label{lemCS4Back} The model $\mathcal M_\Sigma^{\sf CS4}$ is backward confluent.
\end{lem}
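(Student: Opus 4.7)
The plan is to recycle the canonical-model backward confluence witness from Proposition \ref{prop:backward} and verify that it already lives in the coarser relation $\peq_\Sigma$. Specifically, suppose $\Gamma \rel_c \Delta \peq_\Sigma \Theta$. Since $\Delta \peq_\Sigma \Theta$ implies by definition $\Delta \peq_c \Theta$, we are in the hypothesis of Proposition \ref{prop:backward} for the canonical model. That proposition produces the witness $\Upsilon := (\Gamma^+;\varnothing)$ which is a theory (because $\Gamma^+$ is prime), lies in $W_c$, and satisfies $\Gamma \peq_c \Upsilon \rel_c \Theta$.

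The only thing remaining is to upgrade $\Gamma \peq_c \Upsilon$ to $\Gamma \peq_\Sigma \Upsilon$. But this is immediate: since $\Upsilon^+ = \Gamma^+$, clause (1) in the definition of $\peq_\Sigma$ is satisfied. Thus $\Gamma \peq_\Sigma \Upsilon \rel_c \Theta$, as required.

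I expect no serious obstacle here; the lemma is essentially a free consequence of Proposition \ref{prop:backward} because the construction of the witness $\Upsilon$ in that proposition does not enlarge the positive part of $\Gamma$, and $\peq_\Sigma$ automatically contains all pairs with coinciding positive parts via clause (1). The only thing one might worry about is whether the witness $\Upsilon$ satisfies $\Upsilon \rel_c \Theta$ under the assumption $\Delta \peq_\Sigma \Theta$ instead of $\Delta \peq_c \Theta$, but since $\peq_\Sigma \subseteq \peq_c$, we inherit what we need from the earlier proposition without any extra work.
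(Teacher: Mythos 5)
Your proof is correct and follows essentially the same route as the paper: both take the witness $\Upsilon=(\Gamma^+;\varnothing)$ from Proposition~\ref{prop:backward} (applicable since $\peq_\Sigma\,\subseteq\,\peq_c$) and observe that $\Upsilon^+=\Gamma^+$ puts the pair $(\Gamma,\Upsilon)$ in $\peq_\Sigma$ via the first clause of its definition.
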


\begin{proof}
If $\Phi \rel_c \Psi \peq_\Sigma \Theta$ then also $\Psi \peq_c \Theta$, so that setting $\Upsilon :=(\Phi^+;\varnothing)$, Proposition \ref{prop:backward}  yields $\Phi \peq_c \Upsilon  \rel_c \Theta$, and clearly we also have $\Phi \peq_\Sigma \Upsilon$, providing the required witness.
\end{proof}

\begin{lem}\label{lemCS4TruthDiam} The following items hold.
\begin{itemize}
	\item If $\ps\varphi \in \Sigma$ and $\Gamma \in W_c$ then $\ps\varphi\in \Gamma^+$ if and only if for all $\Psi\seq_\Sigma \Gamma$ there is $\Delta$ such that $\Psi \rel_c \Delta$ and $ \varphi\in \Delta^+$.
	\item If $\nec\varphi \in \Sigma$ and $\Gamma \in W_c$ then $\nec\varphi\in \Gamma^+$ if and only if for all $\Psi$ and $\Delta$ such that    $ \Gamma \peq_\Sigma \Psi\rel_c \Delta$, $ \varphi\in \Delta^+$.
\end{itemize}	
\end{lem}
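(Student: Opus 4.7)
The plan is to reduce this lemma directly to Lemma \ref{lemDiamCS4} by exploiting two observations: first, that $\peq_\Sigma$ is contained in $\peq_c$ by its very definition; and second, that the explicit theories constructed as witnesses in the proof of Lemma \ref{lemDiamCS4} always have the same positive part as $\Gamma$, which makes them automatically $\peq_\Sigma$-related to $\Gamma$ via clause (1) of the definition of $\peq_\Sigma$ ($\Gamma^+ = \Psi^+$).

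For the two left-to-right directions I would simply observe that since $\peq_\Sigma \subseteq \peq_c$, any $\Psi \seq_\Sigma \Gamma$ satisfies $\Psi \seq_c \Gamma$, and likewise any chain $\Gamma \peq_\Sigma \Psi \rel_c \Delta$ is also a chain in the coarser relation $\peq_c;\rel_c$. Hence, assuming $\ps \varphi \in \Gamma^+$ (resp.~$\nec\varphi \in \Gamma^+$), the required $\Delta$ (resp.~membership $\varphi \in \Delta^+$) is produced by the corresponding direction of Lemma \ref{lemDiamCS4}.

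For the right-to-left directions I would argue by contrapositive. If $\ps \varphi \notin \Gamma^+$, the proof of Lemma \ref{lemDiamCS4} already constructs the witness $\Psi := (\Gamma^+; \{\varphi\})$ with $\Psi^+ = \Gamma^+$, verifying that no $\rel_c$-successor $\Delta$ of $\Psi$ contains $\varphi$. Since $\Psi^+ = \Gamma^+$, we have $\Gamma \peq_\Sigma \Psi$ by the first clause of the definition of $\peq_\Sigma$, which gives the required $\peq_\Sigma$-successor violating the right-hand side. The case of $\nec \varphi \notin \Gamma^+$ is entirely analogous, with the witness from Lemma \ref{lemDiamCS4} being $\Psi := (\Gamma^+; \varnothing)$, again with $\Psi^+ = \Gamma^+$ and thus $\Gamma \peq_\Sigma \Psi$. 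The main ``obstacle'' is really only to remember to inspect which concrete $\Psi$ the earlier proof produced; once that is seen, the lemma follows with no new consistency arguments, since the refinement of $\peq_c$ to $\peq_\Sigma$ already accommodates these canonical witnesses.
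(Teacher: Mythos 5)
Your proposal is correct and follows essentially the same route as the paper: the left-to-right directions use $\peq_\Sigma\subseteq\peq_c$ together with Lemma~\ref{lemDiamCS4}, and the right-to-left directions reuse the witnesses $(\Gamma^+;\{\varphi\})$ and $(\Gamma^+;\varnothing)$ from the proof of Lemma~\ref{lemDiamCS4}, noting that $\Psi^+=\Gamma^+$ yields $\Gamma\peq_\Sigma\Psi$. This matches the paper's argument step for step.
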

\begin{proof}
For the first item, suppose that $\ps\varphi \in \Sigma$ and $\Gamma \in W_c$ is such that $\ps\varphi\in \Gamma^+$.
Let $\Psi \seq_\Sigma \Gamma$.
It follows that $\Psi\seq_c\Gamma$, so  there is $\Delta$ such that $\Psi \rel_c \Delta$ and $ \varphi\in \Delta^+$, as needed.
Conversely, assume that $\ps \varphi \not \in \Gamma^+$.
Consider $\Psi:=(\Gamma^+;\{\varphi\})$.
Then, $\Psi\seq_\Sigma\Gamma$ and if $\Delta \ler_c \Psi$, it follows that $\varphi \not\in \Delta^+$.
   
For the second item, from left to right assume that $\nec\varphi\in \Gamma^+$ and suppose that $\Gamma \peq_\Sigma \Psi \rel_c \Delta$.   
It follows that $\Gamma\peq_c \Psi$, so $\Gamma^+ \subseteq \Psi^+$ and hence $\nec \varphi \in \Psi^+$.
Since $\Psi \rel_c \Delta$, then $\varphi \in \Delta^+$.
Conversely, assume that $\nec \varphi \not \in \Gamma^+$.
As in the proof of Lemma~\ref{lemDiamCS4}, if we define $\Psi:= (\Gamma^+;\varnothing)$, there exists $\Delta\ler_c \Psi$ such that $\varphi \not \in \Delta^+$.
From $\Gamma^+ =\Psi^+$, it is easy to conclude that $\Gamma \peq_\Sigma \Psi$.    
\end{proof}

From Lemmas \ref{lemTruthImpCS4} and \ref{lemCS4TruthDiam} we immediately obtain the following.

\begin{lem}\label{lemTruthCS4}
	For all $\Gamma\in W_c$ and $\varphi\in \Sigma$, $(\mathcal M_\Sigma^{\sf CS4},\Gamma) \models\varphi $ iff $\varphi\in \Gamma$.
\end{lem}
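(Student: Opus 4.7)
The plan is to proceed by induction on the structure of $\varphi$, assuming that $\Sigma$ is closed under subformulas (so that the inductive hypothesis applies to the subformulas of any $\varphi\in\Sigma$). The atomic cases are essentially immediate: for $\varphi=p$, the equivalence follows by the definition of $V_c$ together with the observation that $V_c$ remains monotone with respect to $\peq_\Sigma$ because $\peq_\Sigma \subseteq \peq_c$; for $\varphi=\bot$, we use that $\fallible W_c$ consists exactly of the inconsistent theory $(\lanfull;\varnothing)$, which is the unique theory containing $\bot$.

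For the Boolean cases $\wedge$ and $\vee$, the standard induction goes through using the primality of $\Gamma^+$ (for $\vee$) and deductive closure (for $\wedge$). For $\varphi = \psi\to\theta$, the semantic clause in $\mathcal M_\Sigma^{\sf CS4}$ quantifies over $\Delta\seq_\Sigma\Gamma$, and by the induction hypothesis (applied to $\psi,\theta\in \Sigma$), this is equivalent to: for every $\Delta\seq_\Sigma\Gamma$, $\psi\in \Delta^+$ implies $\theta\in \Delta^+$. Lemma \ref{lemTruthImpCS4} identifies this condition with $\psi\to\theta\in\Gamma^+$, closing the case.

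The modal cases proceed in the same spirit. For $\varphi = \nec\psi$, the semantic clause unfolds to: for all $\Psi,\Delta$ with $\Gamma\peq_\Sigma \Psi \rel_c\Delta$, $(\mathcal M_\Sigma^{\sf CS4},\Delta)\models \psi$; the induction hypothesis (valid since $\psi\in\Sigma$) translates this to $\psi\in\Delta^+$ for all such $\Delta$, which is exactly the characterization of $\nec\psi\in\Gamma^+$ provided by the second item of Lemma \ref{lemCS4TruthDiam}. The $\ps$ case is handled symmetrically, reading off the first item of the same lemma, noting again that the modal accessibility $\rel_c$ is unchanged in the passage from $\mathcal M_c^{\sf CS4}$ to $\mathcal M_\Sigma^{\sf CS4}$.

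The only real subtlety, and the step to verify most carefully, is ensuring that the relations quantified over in the semantic clauses of $\mathcal M_\Sigma^{\sf CS4}$ match precisely the relations appearing in Lemmas \ref{lemTruthImpCS4} and \ref{lemCS4TruthDiam}; once this bookkeeping is done, the proof is a routine induction. In particular, nothing in the modal cases requires backward confluence of $\mathcal M_\Sigma^{\sf CS4}$ itself (established separately in Lemma \ref{lemCS4Back}); that property is only needed to know $\mathcal M_\Sigma^{\sf CS4}$ is a legitimate $\sf CS4$-model, not to prove the truth lemma.
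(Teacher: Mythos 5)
Your proof is correct and follows essentially the paper's own route: the paper derives Lemma \ref{lemTruthCS4} "immediately" from Lemmas \ref{lemTruthImpCS4} and \ref{lemCS4TruthDiam}, which is exactly the induction you spell out, with the atomic and Boolean cases routine and the $\to$, $\nec$, $\ps$ cases read off from those two lemmas. Your added observations (closure of $\Sigma$ under subformulas, and that backward confluence is not needed for the truth lemma itself) are accurate clarifications of what the paper leaves implicit.
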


\begin{lem}\label{lemCS4shallowComp}
	For $\varphi\in \Sigma$, $ {\sf CS4}  \vdash\varphi$ if and only if $\mathcal M_\Sigma^{\sf CS4}  \models\varphi$.
\end{lem}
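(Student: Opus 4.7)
My plan is to deduce this lemma as a direct corollary of the work already carried out, splitting into the two implications and relying on the completeness of ${\sf CS4}$ (Theorem \ref{thmComp}) and the truth lemma for the shallow model (Lemma \ref{lemTruthCS4}).

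For the forward implication, I would first observe that $\mathcal M_\Sigma^{\sf CS4}$ is itself a ${\sf CS4}$-model. Indeed, it is a bi-intuitionistic model by construction (in particular $\peq_\Sigma$ is a preorder, since it is the intersection of $\peq_c$ with a relation that is transitive and contains the diagonal), and Lemma \ref{lemCS4Back} establishes backward confluence. Soundness of the axioms and rules over arbitrary backward-confluent bi-intuitionistic models was already shown in Section \ref{secSound} (Propositions \ref{prop:validity0} and \ref{prop:validity1}), so any ${\sf CS4}$-theorem is valid on $\mathcal M_\Sigma^{\sf CS4}$.

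For the converse I would argue by contrapositive. Suppose ${\sf CS4} \not\vdash \varphi$. By Theorem \ref{thmComp}, the pre-theory $(\varnothing;\varnothing)$ is $\varphi$-consistent and can be extended to a $\varphi$-consistent theory $\Phi \in W_c$, so in particular $\varphi \notin \Phi^+$. Since $\Phi$ is consistent, $\Phi \notin \fallible W_c = \{(\lanfull;\varnothing)\}$. Applying Lemma \ref{lemTruthCS4} to $\varphi \in \Sigma$ yields $(\mathcal M_\Sigma^{\sf CS4},\Phi) \not\models \varphi$, hence $\mathcal M_\Sigma^{\sf CS4} \not\models \varphi$.

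There is no real obstacle here, since the technical content has been absorbed into the preceding lemmas: the soundness half is immediate from the frame conditions verified in Lemma \ref{lemCS4Back}, while the completeness half reduces to combining the canonical model construction with the shallow-model truth lemma (Lemma \ref{lemTruthCS4}). The only mild subtlety worth flagging is checking that the witness $\Phi$ produced on the completeness side lies outside $\fallible W_c$, which is automatic from $\varphi$-consistency.
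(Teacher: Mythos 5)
Your proposal is correct, and its completeness half coincides with the paper's: both obtain a theory $\Gamma\in W_c$ with $\varphi\notin\Gamma^+$ from ${\sf CS4}\not\vdash\varphi$ and then apply the shallow-model truth lemma (Lemma~\ref{lemTruthCS4}) to falsify $\varphi$ at $\Gamma$ in $\mathcal M_\Sigma^{\sf CS4}$. The soundness half is where you diverge: the paper stays syntactic, noting that ${\sf CS4}\vdash\varphi$ puts $\varphi$ in $\Gamma^+$ for every theory $\Gamma$ (deductive closure), so Lemma~\ref{lemTruthCS4} immediately gives $(\mathcal M_\Sigma^{\sf CS4},\Gamma)\models\varphi$; you instead argue semantically that $\mathcal M_\Sigma^{\sf CS4}$ is itself a backward-confluent bi-intuitionistic model (Lemma~\ref{lemCS4Back}) and invoke the soundness results of Section~\ref{secSound}. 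Both routes work, and yours has the virtue of not reusing the truth lemma in that direction, at the cost of leaning on facts (validity of the intuitionistic base, monotonicity of $V_c$ and the fallible set with respect to $\peq_\Sigma$) that Propositions~\ref{prop:validity0} and~\ref{prop:validity1} do not spell out, though the paper does assert that $\mathcal M_\Sigma^{\sf CS4}$ is a bi-intuitionistic model, so you may cite that. One small inaccuracy in your justification: the second conjunct in the definition of $\peq_\Sigma$ ($\Gamma^+=\Delta^+$ or some $\chi\in\Sigma$ lies in $\Delta^+\setminus\Gamma^+$) is not by itself a transitive relation, so ``intersection of $\peq_c$ with a transitive reflexive relation'' is not quite the right argument; transitivity of $\peq_\Sigma$ does hold, but one should check it directly, using that the witness $\chi\in\Delta^+\setminus\Gamma^+$ persists into $\Theta^+$ because $\Delta\peq_c\Theta$. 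This is a cosmetic fix, not a gap.
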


\begin{proof}
	We know that $ {\sf CS4}  \vdash\varphi$ if and only if $ \mathcal M^{\sf CS4}_c \models\varphi$.
	Now, given $\Gamma \in W_c$, and Lemma \ref{lemTruthCS4} yields that $ {\sf CS4}  \vdash\varphi$ implies $(\mathcal M_\Sigma^{\sf CS4} ,\Gamma)   \models\varphi$.
	Conversely, if ${\sf CS4} \not \vdash\varphi$ then there is $\Gamma \in W^{\sf CS4}$ such that $\varphi\not\in \Gamma^+ $, which implies that $\varphi$ $(\mathcal M^{\sf CS4}_\Sigma,  \Gamma )   \not\models\varphi$.
\end{proof}

\begin{thm}\label{thmCS4fin}
	$\sf CS4$ has the finite model property.
\end{thm}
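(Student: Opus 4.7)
The plan is to assemble the finite model property for $\sf CS4$ from the machinery established in Sections \ref{sShallow} and this section, reducing the task to two ingredients that have already been verified piecewise: that $\mathcal M_\Sigma^{\sf CS4}$ is a shallow $\sf CS4$-model, and that it correctly reflects truth of formulas in $\Sigma$. From there, Theorem \ref{thmShallowtoFin} converts shallowness into finiteness automatically.

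Concretely, suppose that ${\sf CS4}\not\vdash \varphi$. I would let $\Sigma:=\mathrm{sub}(\varphi)$, which is finite and closed under subformulas, and consider the model $\mathcal M_\Sigma^{\sf CS4}=(W_c,\peq_\Sigma,\rel_c,V_c)$. The first step is to observe that $\mathcal M_\Sigma^{\sf CS4}$ is a $\sf CS4$-model: it is a bi-intuitionistic model by construction (the only change from $\mathcal M_c^{\sf CS4}$ is the restriction of $\peq_c$ to $\peq_\Sigma$, which is easily checked to remain a preorder with monotone valuation), and it is backward confluent by Lemma \ref{lemCS4Back}. The second step is to note that $\mathcal M_\Sigma^{\sf CS4}$ is shallow: Lemma \ref{lemCS4Shallow} bounds the length of any $\prec_\Sigma$-chain by $|\Sigma|+1$, so the height of $\mathcal M_\Sigma^{\sf CS4}$ is finite.

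The third step is to use the adequacy of $\mathcal M_\Sigma^{\sf CS4}$ for $\Sigma$. By Lemma \ref{lemCS4shallowComp} (which is the consolidation of Lemmas \ref{lemTruthImpCS4}, \ref{lemCS4TruthDiam} and \ref{lemTruthCS4}), since ${\sf CS4}\not\vdash\varphi$ there is some $\Gamma\in W_c$ with $(\mathcal M_\Sigma^{\sf CS4},\Gamma)\not\models\varphi$. Thus $\varphi$ is falsifiable in a shallow $\sf CS4$-model.

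Finally, I would apply Theorem \ref{thmShallowtoFin} (the case $\Lambda={\sf CS4}\neq{\sf S4I}$) to conclude that $\varphi$ is falsifiable in a \emph{finite} $\sf CS4$-model, namely $\nicefrac{\mathcal M_\Sigma^{\sf CS4}}{\sim_\Sigma}$, whose size is bounded using Lemma \ref{lemBisBound}. Since $\varphi$ was an arbitrary non-derivable formula, this establishes the finite model property. There is no genuine obstacle at this stage: all the technical work (backward confluence of $\mathcal M_\Sigma^{\sf CS4}$, the chain-length bound, the truth lemma for $\Sigma$, and preservation of the frame conditions under bisimulation quotients) has already been done; the theorem is essentially the bookkeeping step that stitches them together.
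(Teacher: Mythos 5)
Your proposal is correct and follows essentially the same route as the paper: it fixes $\Sigma=\mathrm{sub}(\varphi)$, uses Lemmas~\ref{lemCS4Shallow} and~\ref{lemCS4Back} to see that $\mathcal M_\Sigma^{\sf CS4}$ is a shallow $\sf CS4$-model, invokes Lemma~\ref{lemCS4shallowComp} to falsify a non-derivable $\varphi$ there, and then applies Theorem~\ref{thmShallowtoFin} to pass to a finite model. The only difference is cosmetic (you make the bisimulation quotient and the bound from Lemma~\ref{lemBisBound} explicit), so no further comment is needed.
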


\begin{proof}
	In view of Theorem \ref{thmShallowtoFin}, it suffices to show that $\sf CS4$ has the shallow model property.
	Fix a formula $\varphi$ and let $\Sigma$ be the set of subformulas of $\varphi$. By Lemmas~\ref{lemCS4Shallow},~\ref{lemCS4Back}, $\mathcal M_\Sigma^{\sf CS4} $ is a shallow $\sf CS4$-model, and by Lemma \ref{lemCS4shallowComp}, $\varphi$ is valid on $\mathcal M_\Sigma^{\sf CS4} $ iff ${\sf CS4}\vdash \varphi$, as needed.
\end{proof}

%\begin{prop}
%For $\varphi\in \Sigma$, ${\sf CS4}\not\vdash\varphi$ if and only if $\mathcal M^{\sf CS4}_\Sigma \not\models\varphi$.
%\end{prop}
%
%\begin{proof}
%\comment{To Do}
%\end{proof}
%
%\begin{thm}
%A formula is derivable in $\sf CS4$ iff it is valid on the class of models of finite depth.
%\end{thm}
%
%\begin{proof}
%\comment{ TO DO; use the model $\mathcal M_\Sigma$.}
%\end{proof}
%
%
%
%
%\begin{thm}
%$\sf CS4$ has the finite model property.
%\end{thm}
%
%\begin{proof}
%\comment{TO DO. Proof idea: take the quotient of $\mathcal M_\Sigma$ up to the maximal $\Sigma$-bisimulation.}
%\end{proof}

\section{The finite model property for $\sf GS4$}\label{secGS4fin}

In this section we follow a strategy similar to that of the previous to show that $\sf GS4$ also enjoys the finite model property.
In this case, the worlds of our shallow model will be pairs $(\Gamma,\Delta)$, where the intuition is that $\Gamma$ serves as an `anchor' to enforce local linearity and $\Delta$ represents the formulas of $\Sigma$ true in the given world.
Recall that for $\Theta\in W_c$, we define $\Theta\upharpoonright\Sigma:= (\Theta^+ \cap \Sigma ; \Theta^\ps \cap \Sigma )$.

\begin{defn}
Let $\mathcal M^{\sf GS4}_c = \{W_c,\peq_c,\rel_c,V_c\}$ and $\Sigma\subseteq\lanfull$ be closed under subformulas.
We define $W_\Sigma$ to be the set of pairs $(\Gamma,\Delta)$ where $\Gamma \in W_c$ and $\Delta = \Theta\upharpoonright\Sigma$ for some $\Theta \seq_c \Gamma$.
Set $(\Gamma_0,\Delta_0) \peq _\Sigma (\Gamma_1,\Delta_1)$ iff $\Gamma_0 = \Gamma_1$ and $\Delta_0^+ \subseteq \Delta_1^+ $, and for a propositional variable $p$, set $(\Gamma,\Delta) \in V_\Sigma(p) $ iff $p\in \Delta^+$.
\end{defn}

\begin{lem}\label{lemGS4Model1}
The relation $\peq_\Sigma$ is a locally linear partial order and $V_\Sigma$ is monotone.
\end{lem}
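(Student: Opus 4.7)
The plan is to verify each property required by the statement directly from the definition of $\peq_\Sigma$ and $V_\Sigma$, with the key input being the local linearity of the canonical model $\mathcal M_c^{\sf GS4}$ established in Section~\ref{secCompGS4}.

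First, I will check that $\peq_\Sigma$ is a preorder. Reflexivity is immediate since $\Gamma = \Gamma$ and $\Delta^+ \subseteq \Delta^+$ hold trivially. For transitivity, if $(\Gamma_0, \Delta_0) \peq_\Sigma (\Gamma_1, \Delta_1) \peq_\Sigma (\Gamma_2, \Delta_2)$, then $\Gamma_0 = \Gamma_1 = \Gamma_2$ by transitivity of equality, and $\Delta_0^+ \subseteq \Delta_1^+ \subseteq \Delta_2^+$ by transitivity of $\subseteq$. Antisymmetry amounts to checking that if $\Delta_0^+ \subseteq \Delta_1^+ \subseteq \Delta_0^+$ then $\Delta_0 = \Delta_1$: the $^+$ components are equal by mutual inclusion, and the $\ps$ components agree because each $\Delta_i$ arises as $\Theta_i \upharpoonright \Sigma$ for a precise $\sf GS4$-theory $\Theta_i \seq_c \Gamma$, so preciseness ties $\psi \in \Delta_i^\ps$ to $\ps \psi \notin \Theta_i^+$, which is controlled by $\Delta_i^+$.

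Monotonicity of $V_\Sigma$ is the easiest part: if $(\Gamma, \Delta_0) \peq_\Sigma (\Gamma, \Delta_1)$ and $(\Gamma,\Delta_0)\in V_\Sigma(p)$, then $p \in \Delta_0^+ \subseteq \Delta_1^+$, so $(\Gamma, \Delta_1) \in V_\Sigma(p)$.

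The main step is local linearity, which is where the work actually lies. Suppose that $(\Gamma, \Delta_1)$ and $(\Gamma, \Delta_2)$ are both $\peq_\Sigma$-above a common pair $(\Gamma, \Delta_0)$. Unpacking the definition of $W_\Sigma$, I pick $\Theta_1, \Theta_2 \in W_c$ with $\Gamma \peq_c \Theta_i$ and $\Delta_i = \Theta_i \upharpoonright \Sigma$ for $i = 1, 2$. Since $\mathcal M_c^{\sf GS4}$ is locally linear, either $\Theta_1 \peq_c \Theta_2$ or $\Theta_2 \peq_c \Theta_1$; in the former case $\Theta_1^+ \subseteq \Theta_2^+$, and intersecting with $\Sigma$ yields $\Delta_1^+ \subseteq \Delta_2^+$, whence $(\Gamma, \Delta_1) \peq_\Sigma (\Gamma, \Delta_2)$; the other case is symmetric. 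I expect the most delicate bookkeeping to be the antisymmetry clause, since $\peq_\Sigma$ does not explicitly compare the $\ps$ components of its labels, so one has to rely on the fact that the underlying theories in $W_c$ are precise to show that $\Delta^\ps$ is already determined by $\Delta^+$ and $\Gamma$.
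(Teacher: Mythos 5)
Your argument for the substantive parts of the lemma is essentially the paper's own: monotonicity of $V_\Sigma$ is immediate from $\Delta_0^+\subseteq\Delta_1^+$, and for local linearity you exploit exactly the intended mechanism — since $\peq_\Sigma$ forces equal first components, the witnesses $\Theta_1,\Theta_2$ both lie $\peq_c$-above the same anchor $\Gamma$, local linearity of $\mathcal M_c^{\sf GS4}$ gives $\Theta_i\peq_c\Theta_j$ for some $i\neq j$, and intersecting with $\Sigma$ yields $\Delta_i^+\subseteq\Delta_j^+$, hence $(\Gamma,\Delta_i)\peq_\Sigma(\Gamma,\Delta_j)$. Reflexivity and transitivity are fine as well.

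The one step that does not go through is your antisymmetry argument. Preciseness of $\Theta_i$ does give $\Delta_i^\ps=\{\psi\in\Sigma:\ps\psi\notin\Theta_i^+\}$, but this set is controlled by $\Theta_i^+$, not by $\Delta_i^+=\Theta_i^+\cap\Sigma$: since $\Sigma$ is only closed under subformulas, $\ps\psi$ need not lie in $\Sigma$ when $\psi$ does, so whether $\ps\psi\in\Theta_i^+$ is not recoverable from $\Delta_i^+$ and $\Gamma$. In fact the $\ps$-components are genuinely underdetermined: take $\Sigma=\{p\}$ and a $\sf GS4$-model with worlds $x\peq y$, $y\rel z$ and $p$ true only at $z$; the (precise, consistent) theories $\Theta_x\peq_c\Theta_y$ of $x$ and $y$ both have empty $^+$-trace on $\Sigma$, yet $p\in\Theta_x^\ps\setminus\Theta_y^\ps$, so with anchor $\Gamma=\Theta_x$ the pairs $(\Gamma,\Theta_x\upharpoonright\Sigma)$ and $(\Gamma,\Theta_y\upharpoonright\Sigma)$ are distinct points of $W_\Sigma$ that are $\peq_\Sigma$-related in both directions. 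So $\peq_\Sigma$ is, on the nose, only a preorder; your claim that $\Delta^\ps$ is determined by $\Delta^+$ and $\Gamma$ is false in general (it holds only for those $\psi$ with $\ps\psi\in\Sigma$). This defect is more in the statement than in your strategy — the paper's own proof waves antisymmetry through, and nothing downstream uses it: what is actually needed is that $\peq_\Sigma$ is a locally linear preorder with monotone valuation, together with the bound on strict $\prec_\Sigma$-chains of Lemma~\ref{lemGS4Shallow}, all of which your argument does deliver.
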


\begin{proof}
It is easy to check that $\peq_\Sigma$ is a partial order, given that $\subseteq$ is, and monotonicity is immediate from the definition of $\peq_\Sigma$.
Suppose that $(\Gamma_0,\Delta_0) \peq _\Sigma (\Gamma_0,\Delta_1)$ and $(\Gamma_0,\Delta_0) \peq _\Sigma (\Gamma_0,\Delta_2)$, and write $\Delta_i =\Theta_i\upharpoonright \Sigma$ with $\Theta_i\seq_c\Gamma_0$.
Then, $\Theta_i\peq_c\Theta_j  $ for some $i\neq j\in \{1,2\}$.
It readily follows that $(\Gamma_0,\Delta_i) \peq _\Sigma (\Gamma_0,\Delta_j)$.
\end{proof}

\begin{lem}\label{lemTruthImpGS4}
Let $\Sigma\subseteq\lanfull$ be closed under subformulas.
	For all $(\Gamma,\Delta) \in W_\Sigma$ and $ \varphi \to \psi \in \Sigma $ we have that $\varphi\to \psi \in  \Delta^+ $ iff for all $(\Gamma,\Psi) \seq _\Sigma (\Gamma,\Delta)$ with $\varphi\in \Psi^+ $ we have $\psi \in  \Psi^+$.
\end{lem}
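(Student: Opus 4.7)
The plan is to reduce the statement to facts about the canonical model $\mathcal M_c^{\sf GS4}$ by unpacking the definition of $W_\Sigma$. Writing $\Delta = \Theta\upharpoonright\Sigma$ for some $\Theta\seq_c\Gamma$, and noting that for any $\chi\in\Sigma$ we have $\chi\in\Delta^+$ iff $\chi\in\Theta^+$, the entire question becomes one about extensions of $\Theta$ within $W_c$, together with the bookkeeping that such extensions remain $\seq_c$-above $\Gamma$.

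For the forward direction, suppose $\varphi\to\psi\in\Delta^+$ and take $(\Gamma,\Psi)\seq_\Sigma(\Gamma,\Delta)$ with $\varphi\in\Psi^+$. By definition of $\peq_\Sigma$, $\Delta^+\subseteq\Psi^+$, so $\varphi\to\psi\in\Psi^+$. Writing $\Psi = \Theta'\upharpoonright\Sigma$ with $\Theta'\seq_c\Gamma$, both $\varphi\to\psi$ and $\varphi$ lie in $\Theta'^+$, and deductive closure of the theory $\Theta'$ under modus ponens gives $\psi\in\Theta'^+$. Since $\psi\in\Sigma$ by subformula closure, I conclude $\psi\in\Psi^+$.

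For the reverse direction I argue by contrapositive. Suppose $\varphi\to\psi\notin\Delta^+$, hence $\varphi\to\psi\notin\Theta^+$. Following the $\to$-case of the truth lemma already used for $\sf CS4$ and $\sf GS4$, the pre-theory $(\Theta^+,\{\varphi\};\varnothing)$ is $\psi$-consistent; Lemma \ref{lem:lindembaumPrecise} extends it to a precise, $\psi$-consistent theory $\Theta'\in W_c$ with $\Theta\peq_c\Theta'$, $\varphi\in\Theta'^+$ and $\psi\notin\Theta'^+$. Transitivity of $\peq_c$ together with $\Gamma\peq_c\Theta$ gives $\Theta'\seq_c\Gamma$, so setting $\Psi := \Theta'\upharpoonright\Sigma$ yields $(\Gamma,\Psi)\in W_\Sigma$ with $\Delta^+\subseteq\Psi^+$, hence $(\Gamma,\Psi)\seq_\Sigma(\Gamma,\Delta)$. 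By construction $\varphi\in\Psi^+$ and $\psi\notin\Psi^+$, providing the required witness.

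The only subtlety lies in the bookkeeping that all intermediate theories remain within $W_c$, i.e.~precise and consistent, which is exactly what Lemma \ref{lem:lindembaumPrecise} supplies. No genuinely new axiomatic manipulation is needed beyond the standard Lindenbaum construction; the core of the lemma is really the observation that the restriction map $\Theta\mapsto\Theta\upharpoonright\Sigma$ commutes well with the clause for implication precisely because $\varphi,\psi\in\Sigma$.
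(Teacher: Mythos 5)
Your proposal is correct and follows essentially the same route as the paper: the forward direction via $\Delta^+\subseteq\Psi^+$ and modus ponens inside the underlying theory, and the reverse direction by contrapositive, producing a $\psi$-consistent extension of $(\Theta^+,\{\varphi\};\varnothing)$ above $\Theta$ and restricting it to $\Sigma$. You merely make explicit the appeal to Lemma~\ref{lem:lindembaumPrecise} that the paper leaves implicit in the phrase ``there exists $\Upsilon\seq_c\Theta$ such that $\varphi\in\Upsilon^+$ and $\psi\notin\Upsilon^+$'', which is a harmless (indeed helpful) elaboration.
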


\begin{proof}
	From left to right, take any $(\Gamma,\Delta) \in W_\Sigma $ satisfying $\varphi \rightarrow \psi \in \Delta^+$. Take any $(\Gamma, \Psi)$ satisfying $(\Gamma,\Psi) \seq _\Sigma (\Gamma,\Delta)$ with $\varphi \in \Psi^+$.
Since $\Delta^+\subseteq \Psi^+$, $\varphi\to \psi \in \Psi^+$, hence $\psi\in \Psi^+$.		
	Conversely, assume that $\varphi \rightarrow \psi \in \Sigma\setminus \Delta^+$.
	By definition, there exists $\Theta \seq_c \Gamma$ such that $\Delta= \Theta \upharpoonright \Sigma$.
	Clearly, $\varphi \rightarrow \psi \not \in \Theta^+$. Therefore there exists $\Upsilon \seq_c \Theta$ such that $\varphi \in \Upsilon^+$ and $\psi \not \in \Upsilon^+$. Define $\Psi = \Upsilon \upharpoonright \Sigma$. We have that $\varphi \in \Psi^+$ and $\psi \not \in \Psi^+$, and clearly $(\Gamma,\Delta) \peq _\Sigma (\Gamma,\Psi )$.
\end{proof}

The following is established by essentially the same reasoning as Lemma \ref{lemCS4Shallow}.

\begin{lem}\label{lemGS4Shallow}
If $\Sigma\subseteq\lanfull$ is finite, then $(W_\Sigma,\peq_\Sigma)$ is shallow.
\end{lem}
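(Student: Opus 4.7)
The plan is to mirror the argument of Lemma~\ref{lemCS4Shallow} by associating a strictly increasing sequence of subsets of $\Sigma$ to any $\prec_\Sigma$-chain in $W_\Sigma$. First, I would fix an arbitrary chain
\[(\Gamma_0,\Delta_0)\prec_\Sigma (\Gamma_1,\Delta_1)\prec_\Sigma\ldots\prec_\Sigma (\Gamma_n,\Delta_n),\]
and observe directly from the definition of $\peq_\Sigma$ that all first coordinates $\Gamma_i$ coincide with a common $\Gamma\in W_c$, while $\Delta_i^+\subseteq \Delta_{i+1}^+$ for each $i<n$.

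The next step is to upgrade these inclusions to strict ones using Lemma~\ref{lemGS4Model1}, which states that $\peq_\Sigma$ is a partial order. Indeed, were $\Delta_i^+=\Delta_{i+1}^+$ to hold for some $i$, then both $(\Gamma,\Delta_i)\peq_\Sigma (\Gamma,\Delta_{i+1})$ and $(\Gamma,\Delta_{i+1})\peq_\Sigma (\Gamma,\Delta_i)$ would hold, so antisymmetry would force $(\Gamma,\Delta_i)=(\Gamma,\Delta_{i+1})$, contradicting strictness of the chain. Hence for each $i<n$ I may choose a witness $\varphi_i\in\Delta_{i+1}^+\setminus\Delta_i^+\subseteq\Sigma$.

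These witnesses are pairwise distinct: if $i<j<n$, then $\varphi_i\in\Delta_{i+1}^+\subseteq\Delta_j^+$, whereas $\varphi_j\notin\Delta_j^+$, so $\varphi_i\neq\varphi_j$. Hence $n\leq|\Sigma|$. As the chain was arbitrary, every element of $W_\Sigma$ has height at most $|\Sigma|$, so $(W_\Sigma,\peq_\Sigma)$ is shallow.

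No serious obstacle arises here; the only subtlety worth flagging is the invocation of antisymmetry from Lemma~\ref{lemGS4Model1} to rule out the case $\Delta_i^+=\Delta_{i+1}^+$, since $\peq_\Sigma$ is defined purely in terms of the $(\cdot)^+$ components while the pre-theories $\Delta$ also carry a $(\cdot)^\ps$ component. Once this point is granted, the proof is a near-verbatim transcription of the $\sf CS4$ argument, with $\Delta^+$ in place of $\Gamma^+$.
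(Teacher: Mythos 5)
Your proof is correct and takes essentially the same route as the paper: along any $\prec_\Sigma$-chain all first components coincide, and the resulting strictly increasing chain $\Delta_0^+\subsetneq\ldots\subsetneq\Delta_n^+$ of subsets of $\Sigma$ bounds the length by $|\Sigma|+1$. One small remark: the detour through antisymmetry is unnecessary (and best avoided, since $\peq_\Sigma$ only constrains the $(\cdot)^+$ components, which is exactly the subtlety you flag), because $(\Gamma,\Delta_i)\prec_\Sigma(\Gamma,\Delta_{i+1})$ already means $(\Gamma,\Delta_{i+1})\not\peq_\Sigma(\Gamma,\Delta_i)$ and hence directly yields $\Delta_{i+1}^+\not\subseteq\Delta_i^+$, which is all your witness argument needs.
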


\begin{proof}
Any chain $(\Gamma_0,\Delta _0) \prec_\Sigma \ldots \prec_\Sigma (\Gamma_n,\Delta _n) $ satisfies $\Gamma _i = \Gamma_j$ for all $i<j<n$, and hence the elements differ only on the second component.
But a chain $\Delta^+_0\subsetneq \ldots\subsetneq\Delta^+_n$ of subsets of $\Sigma$ can have length at most $|\Sigma| + 1$.
\end{proof}

\begin{defn}
For $(\Gamma,\Delta),(\Gamma',\Delta') \in W_\Sigma$ we set $ (\Gamma,\Delta) \rel^0 _\Sigma (\Gamma',\Delta')$ iff there are $\Theta\seq_c \Gamma$, $\Theta'\seq_c \Gamma'$ such that $\Theta \rel _c \Theta'$, $\Delta = \Theta \upharpoonright \Sigma$, and $\Delta' = \Theta' \upharpoonright \Sigma$.
Then, let $\rel_\Sigma$ be the transitive closure of $\rel^0_\Sigma$.
\end{defn}

\begin{lem}
	The relation $\rel  _\Sigma $ is reflexive and transitive.
\end{lem}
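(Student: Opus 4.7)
The plan is as follows. Transitivity is immediate: by construction, $\rel_\Sigma$ is defined as the transitive closure of $\rel^0_\Sigma$, so transitivity holds by definition of transitive closure.

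For reflexivity, it is enough to show that $\rel^0_\Sigma$ is itself reflexive, since any reflexive relation remains reflexive after taking transitive closure. Fix an arbitrary $(\Gamma,\Delta) \in W_\Sigma$. By definition of $W_\Sigma$, there exists some $\Theta \in W_c$ with $\Theta \seq_c \Gamma$ and $\Delta = \Theta \upharpoonright \Sigma$. To witness $(\Gamma,\Delta) \rel^0_\Sigma (\Gamma,\Delta)$, I would simply take this same $\Theta$ on both sides of the clause: set $\Theta' = \Theta'' = \Theta$. Then $\Theta'\seq_c \Gamma$, $\Theta''\seq_c\Gamma$, $\Delta = \Theta'\upharpoonright\Sigma = \Theta''\upharpoonright\Sigma$, and we need only verify that $\Theta \rel_c \Theta$. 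But this follows from the reflexivity of $\rel_c$, which was established as part of Lemma \ref{lemCS4IsModel} (the analogous proof works for the canonical model $\mathcal M^{\sf GS4}_c$, using axioms \ref{ax:ref:box} and the fact that a theory has $\Theta^\ps \subseteq \Theta^\ps$ trivially).

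There is essentially no obstacle here; the statement is a routine sanity check that $\rel_\Sigma$ is a preorder, needed so that the structure carrying $\rel_\Sigma$ qualifies as a bi-intuitionistic frame. The only small point to be careful about is that reflexivity of $\rel^0_\Sigma$ requires the existence of a witness $\Theta$ with the appropriate properties, but such a $\Theta$ is provided directly by membership in $W_\Sigma$.
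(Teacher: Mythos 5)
Your proof is correct and matches the paper's argument: transitivity is by definition of transitive closure, and reflexivity of $\rel^0_\Sigma$ is witnessed by the $\Theta\seq_c\Gamma$ with $\Delta=\Theta\upharpoonright\Sigma$ guaranteed by membership in $W_\Sigma$, together with reflexivity of $\rel_c$. No issues.
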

\begin{proof}
Transitivity holds by definition.
For the reflexivity, let us take $(\Gamma,\Delta) \in W_\Sigma$. This means that there exists $\Theta \in W_c$ such that $\Delta = \Theta \upharpoonright \Sigma$ and $\Gamma \peq_c \Theta$. Since $\rel_c$ is reflexive, $\Theta \rel_c \Theta$. By definition, $(\Gamma,\Delta) \rel _\Sigma (\Gamma,\Delta)$.
%
%For the transitivity, let us take $(\Gamma_0, \Delta_0)$, $(\Gamma_1, \Delta_1)$ and $(\Gamma_2, \Delta_2)$ in $W_\Sigma$ such that  
%$(\Gamma_0, \Delta_0) \rel _\Sigma (\Gamma_1, \Delta_1) \rel _\Sigma (\Gamma_2, \Delta_2)$.
%By definition, $\Gamma_0 \rel_c \Gamma_1 \rel_c \Gamma2$ and there exist  $\Theta_0$, $\Theta_1$ and $\Theta_2$ in $W_c$ such that $\Delta_0 = \Theta_0 \upharpoonright \Sigma$, $\Delta_1 = \Theta_1  \upharpoonright \Sigma$,  $\Delta_2 = \Theta_2  \upharpoonright \Sigma$, $\Gamma_0 \peq_c \Theta_0$, $\Gamma_1 \peq_c \Theta_1$, $\Gamma_2 \peq_c \Theta_2$, $\Theta_0 \rel_c \Theta_1$ and $\Theta_1 \rel_c \Theta_2$. Since $\rel_c$ is transitive, $\Theta_0 \rel_c \Theta_2$ and $\Gamma_0 \rel_c \Gamma_2$. By definition, $(\Gamma_0,\Delta_0) \rel _\Sigma (\Gamma_2,\Delta_2)$. 
\end{proof}

With this, we define $ \mathcal M^{\sf GS4} _\Sigma  =(W_\Sigma,\peq_\Sigma,\rel_\Sigma,V_\Sigma)$.

\begin{lem}\label{lemGS4Model2}
$\mathcal M^{\sf GS4} _\Sigma $ is forward and backward confluent.
\end{lem}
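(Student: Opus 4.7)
The strategy is to reduce both confluence properties for $\rel_\Sigma$ to the corresponding properties for the one-step relation $\rel^0_\Sigma$, since $\rel_\Sigma$ is defined as the transitive closure of $\rel^0_\Sigma$ and Lemma~\ref{lemmClosure} tells us that forward and backward confluence are preserved under transitive closure. Thus the remaining task is to prove forward and backward confluence of $\rel^0_\Sigma$, and the chief ingredients will be local linearity of $\peq_c$ together with the forward and backward confluence of the canonical model $\mathcal M_c^{\sf GS4}$ established in the previous section.

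For forward confluence of $\rel^0_\Sigma$, I would start from $(\Gamma_0,\Delta_0)\peq_\Sigma(\Gamma_0,\Delta_1)$ and $(\Gamma_0,\Delta_0)\rel^0_\Sigma(\Gamma',\Delta')$, unpacking the definitions to choose canonical witnesses $\Theta_0,\Theta_1\succcurlyeq_c\Gamma_0$ with $\Delta_i=\Theta_i\upharpoonright\Sigma$, and $\Theta'\succcurlyeq_c\Gamma'$ with $\Theta_0\rel_c\Theta'$ and $\Delta'=\Theta'\upharpoonright\Sigma$. Because $\Gamma_0$ sits below both $\Theta_0$ and $\Theta_1$, local linearity of $\peq_c$ gives $\Theta_0\peq_c\Theta_1$ or $\Theta_1\peq_c\Theta_0$. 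In the first subcase, forward confluence of $\mathcal M_c^{\sf GS4}$ produces $\Theta''$ with $\Theta_1\rel_c\Theta''$ and $\Theta'\peq_c\Theta''$; setting $\Delta'':=\Theta''\upharpoonright\Sigma$ yields $(\Gamma_0,\Delta_1)\rel^0_\Sigma(\Gamma',\Delta'')$ (note $\Theta''\succcurlyeq_c\Theta'\succcurlyeq_c\Gamma'$) together with $\Delta'^+\subseteq\Delta''^+$, as desired.

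Backward confluence is handled dually. Given $(\Gamma_0,\Delta_0)\rel^0_\Sigma(\Gamma',\Delta')\peq_\Sigma(\Gamma',\Delta'')$, I would pick canonical witnesses $\Theta',\Theta''\succcurlyeq_c\Gamma'$ for $\Delta',\Delta''$ and apply local linearity of $\peq_c$ to these two. In the helpful subcase $\Theta'\peq_c\Theta''$, backward confluence of $\mathcal M_c^{\sf GS4}$ applied to $\Theta_0\rel_c\Theta'\peq_c\Theta''$ delivers $\Upsilon$ with $\Theta_0\peq_c\Upsilon\rel_c\Theta''$; taking $\Delta_{\mathrm{int}}:=\Upsilon\upharpoonright\Sigma$ we obtain $(\Gamma_0,\Delta_0)\peq_\Sigma(\Gamma_0,\Delta_{\mathrm{int}})\rel^0_\Sigma(\Gamma',\Delta'')$.

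The hard part, in both directions, is the ``wrong-direction'' subcase of local linearity, where the comparability between the two canonical witnesses is oriented opposite to the hypothesis of the confluence lemma available for $\mathcal M_c^{\sf GS4}$. The plan there is to observe that in this subcase the containment forced by local linearity is incompatible with the $\peq_\Sigma$ hypothesis unless the positive $\Sigma$-components coincide, so the original witnesses $\Theta_0,\Theta'$ (respectively $\Theta',\Theta_0$) can be re-used essentially unchanged, possibly after routing through an intermediate world in the transitive closure $\rel_\Sigma$. It is precisely this flexibility that justifies defining $\rel_\Sigma$ as the transitive closure of $\rel^0_\Sigma$ rather than identifying the two.
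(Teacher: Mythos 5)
Your overall strategy coincides with the paper's: reduce both properties to the one-step relation $\rel^0_\Sigma$ via Lemma~\ref{lemmClosure}, then combine local linearity of $\peq_c$ with the forward (resp.\ backward) confluence of $\mathcal M^{\sf GS4}_c$. In the favourable orientation of local linearity your two arguments are exactly the ones the paper gives, and they are correct.

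The genuine gap is the ``wrong-direction'' subcase, which you rightly single out as the hard part but do not actually resolve. Suppose, in the forward case, that the witness $\Theta_1$ for $\Delta_1$ satisfies $\Theta_1\peq_c\Theta_0$. Local linearity together with $(\Gamma_0,\Delta_0)\peq_\Sigma(\Gamma_0,\Delta_1)$ forces $\Delta_0^+=\Delta_1^+$, but it does \emph{not} force $\Delta_0=\Delta_1$: since the worlds of $\mathcal M^{\sf GS4}_c$ are precise theories, $\psi\in\Theta^\ps$ iff $\ps\psi\notin\Theta^+$, and $\ps\psi$ need not lie in $\Sigma$ even when $\psi\in\Sigma$; thus one may have $\ps\psi\in\Theta_0^+\setminus\Theta_1^+$ for some $\psi\in\Sigma$ with $\ps\psi\notin\Sigma$, giving $\Delta_0^\ps\subsetneq\Delta_1^\ps$. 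Then $\Theta_0$ is not a witness for $\Delta_1$, so ``re-using the original witnesses essentially unchanged'' does not yield the required $\rel^0_\Sigma$-step out of $(\Gamma_0,\Delta_1)$; nor does $\Theta_1$ itself work, since $\Theta_1\rel_c\Theta'$ needs $\Theta_1^\ps\subseteq\Theta'^\ps$, which can fail for exactly such $\psi$ (we only know $\Theta_0^\ps\subseteq\Theta'^\ps$). The ``intermediate world'' suggestion is likewise unsubstantiated: for instance the natural first step $(\Gamma_0,\Delta_1)\rel^0_\Sigma(\Gamma_0,\Delta_0)$ would require $\Theta_1\rel_c\Theta_0$, i.e.\ $\Theta_1^\ps\subseteq\Theta_0^\ps$, which fails for the same reason; and in your backward case, when the witness $\Theta''$ for $\Delta''$ lies below $\Theta'$, neither backward confluence of $\mathcal M^{\sf GS4}_c$ nor the relation $\Theta_0\rel_c\Theta''$ (whose $\nec$-clause $\Theta_0^\nec\subseteq\Theta''^+$ is not guaranteed) is available. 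The paper's proof proceeds by arranging the witnesses so that the favourable orientation holds; if you want to treat the unfavourable orientation head-on, you need an additional argument --- e.g.\ a consistency/extension argument producing a $\rel_c$-successor of $\Theta_1$ above $\Gamma'$ whose positive part contains $\Delta'^+$, or a justification that witnesses in the favourable orientation can always be chosen --- and your sketch supplies neither.
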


\begin{proof}
For the forward  condition, we only need to show that $\rel^0 _\Sigma $ is forward confluent.
The claim for $\rel  _\Sigma $ then follows from Lemma \ref{lemmClosure}.
Let us take $(\Gamma_0,\Delta_0)$, $(\Gamma_0,\Delta_2)$ and $(\Gamma_1,\Delta_1)$ in $W_\Sigma$ satisfying $(\Gamma_0,\Delta_0) \peq  _\Sigma  (\Gamma_0,\Delta_2)$ and $(\Gamma_0,\Delta_0) \rel^0 _\Sigma (\Gamma_1,\Delta_1)$.
By definition, there exist $\Theta_0,\Theta_2 \seq_c\Gamma_0$ and $\Theta_1 \seq_c\Gamma_1$ such that $\Delta_i  = \Theta_i \upharpoonright \Sigma$ and $\Theta_0 \rel_c \Theta_1$.
Moreover, we can assume that $\Theta_0\peq_c\Theta_2$, otherwise redefine $\Theta_2:=\Theta_0$.
Since $\mathcal M^{\sf GS4}_c$ is forward confluent, there exists $\Theta_3$ such that $\Theta_2 \rel_c \Theta_3$ and $\Theta_1 \peq_c \Theta_3$. Fix $\Delta_3 = \Theta_3 \upharpoonright \Sigma$. It follows that $(\Gamma_1,\Delta_1) \peq_\Sigma (\Gamma_1, \Delta_3)$ and $(\Gamma_0,\Delta_2) \rel^0 _\Sigma (\Gamma_1,\Delta_3)$.
%\comment{In this line, Philippe suggests to replace $\Gamma_2$ by $\Gamma_1$, However, I did not see where. D: It looks OK to me(?)}

For the backward  condition, it once again suffices to show that $\rel^0_\Sigma$ is backward confluent.
Let us take $(\Gamma_0,\Delta_0)$, $(\Gamma_1,\Delta_1)$ and $(\Gamma_1,\Delta_2)$ in $W_\Sigma$ satisfying $(\Gamma_0,\Delta_0) \rel ^0_\Sigma  (\Gamma_1,\Delta_1) \peq_\Sigma (\Gamma_1,\Delta_2)$. By definition, there exist $\Theta_0 \seq_c \Gamma_0$ and $\Theta_1 \seq_c \Gamma_1$ such that $\Delta_0 = \Theta_0 \upharpoonright \Sigma$, $\Delta_1  = \Theta_1 \upharpoonright \Sigma$ and  $\Theta_0 \rel_c \Theta_1$. Moreover, there exists $\Theta_2 \seq_c \Gamma_2$ such that $\Delta_2 = \Theta_2 \upharpoonright \Sigma$ and $\Theta_1 \peq_c \Theta_2$. Since $\mathcal M^{\sf GS4}_c$ is backward confluent, there exists $\Theta_3 \in W_c$ such that $\Theta_0 \peq_c \Theta_3 \rel_c \Theta_2$.
Define $\Delta_3 = \Theta_3 \upharpoonright \Sigma$.
It follows that $(\Gamma_0,\Delta_0) \peq _\Sigma (\Gamma_0,\Delta_3) \rel _\Sigma (\Gamma_1,\Delta_2)$.  
\end{proof}

\begin{lem}\label{lemGS4TruthDiam}\
\begin{enumerate}
	\item If $\ps\varphi \in \Sigma$ and $(\Gamma_1,\Delta_1) \in W_\Sigma$ then $\ps\varphi  \in \Delta_1^+$ if and only if there is $(\Gamma_2,\Delta_2) \ler _\Sigma(\Gamma_1,\Delta_1) $ such that $ \varphi\in \Delta_2^+$.
	\item If $\nec\varphi \in \Sigma$ and $(\Gamma_1,\Delta_1) \in W_\Sigma$ then $\nec\varphi \in  \Delta_1^+$ if and only if for every $(\Gamma_2,\Delta_2)  \ler _\Sigma ; \seq _\Sigma(\Gamma_1,\Delta_1) $, $ \varphi\in  \Delta_2^+$.	
\end{enumerate}
\end{lem}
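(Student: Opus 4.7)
The plan is to prove each item by unfolding the definition of $\rel_\Sigma$ as the transitive closure of $\rel^0_\Sigma$, and then reducing each step to a statement about the canonical model $\mathcal M^{\sf GS4}_c$. Recall that every $(\Gamma,\Delta) \in W_\Sigma$ comes equipped with some canonical $\sf GS4$-theory $\Theta \seq_c \Gamma$ witnessing $\Delta = \Theta \upharpoonright \Sigma$, and that $\Sigma$ is closed under subformulas, so any subformula of $\ps\varphi$ or $\nec\varphi$ in $\Sigma$ is again in $\Sigma$.

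For item~1, the forward direction is the easy one: given $\ps\varphi \in \Delta_1^+$, pick a witness $\Theta_1$ for $\Delta_1$ so that $\ps\varphi \in \Theta_1^+$; by the proof of Lemma~\ref{lem:truth-lemma:GS4} there is $\Theta_2$ with $\Theta_1 \rel_c \Theta_2$ and $\varphi \in \Theta_2^+$; then $(\Gamma_2,\Delta_2) \eqdef (\Theta_2, \Theta_2 \upharpoonright \Sigma)$ provides a single $\rel^0_\Sigma$-step from $(\Gamma_1,\Delta_1)$. The backward direction is where the work lies. Given a chain $(\Gamma_1,\Delta_1) = (\Xi_0,E_0) \rel^0_\Sigma \cdots \rel^0_\Sigma (\Xi_n,E_n) = (\Gamma_2,\Delta_2)$, with canonical witnesses $\Theta_k \rel_c \Theta'_{k+1}$ at each step, the plan is to propagate $\ps\varphi$ leftward: start from $\varphi \in \Theta'^+_n$ and apply~\ref{ax:ref:dia} to obtain $\ps\varphi \in \Theta'^+_n$; then, exploiting the preciseness of canonical $\sf GS4$-theories (Lemma~\ref{lem:lindembaumPrecise}) together with the fact that $\Theta_k \rel_c \Theta'_{k+1}$ implies $\Theta_k^\ps \subseteq \Theta'^\ps_{k+1}$, conclude that $\ps\varphi \in \Theta'^+_{k+1}$ forces $\varphi \notin \Theta^\ps_k$, hence $\ps\varphi \in \Theta_k^+$. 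Iterating $n$ times gives $\ps\varphi \in \Theta_0^+ \cap \Sigma = \Delta_1^+$.

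For item~2, the forward direction unfolds $(\Gamma_2,\Delta_2)\ler_\Sigma;\seq_\Sigma(\Gamma_1,\Delta_1)$ as an intermediate $(\Gamma_1,\Delta_3)$ with $\Delta_1^+ \subseteq \Delta_3^+$, so that $\nec\varphi \in \Delta_3^+$, and then runs along the $\rel^0_\Sigma$-chain from $(\Gamma_1,\Delta_3)$ to $(\Gamma_2,\Delta_2)$. At each step with witnesses $\Theta_k \rel_c \Theta'_{k+1}$, axiom~\ref{ax:trans:box} yields $\nec\nec\varphi \in \Theta_k^+$, hence $\nec\varphi \in \Theta'^+_{k+1}$, which restricts to $\nec\varphi \in E_{k+1}^+$ since $\nec\varphi \in \Sigma$; the final step applies $\Theta_{n-1}^\nec \subseteq \Theta'^+_n$ to deliver $\varphi \in \Theta'^+_n \cap \Sigma = \Delta_2^+$. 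The backward direction is essentially an extraction from the proof of Lemma~\ref{lem:truth-lemma:GS4}: if $\nec\varphi \notin \Delta_1^+$, then any witness $\Theta \seq_c \Gamma_1$ for $\Delta_1$ satisfies $\nec\varphi \notin \Theta^+$, so the canonical-model argument (using Zorn's lemma, \ref{ax:fs}, and \ref{ax:dp}) produces $\Theta \peq_c \Psi \rel_c \Upsilon$ with $\varphi \notin \Upsilon^+$. Setting $(\Gamma_3,\Delta_3) = (\Gamma_1, \Psi \upharpoonright \Sigma)$ and $(\Gamma_2,\Delta_2) = (\Upsilon, \Upsilon \upharpoonright \Sigma)$ yields $(\Gamma_1,\Delta_1) \peq_\Sigma (\Gamma_3,\Delta_3) \rel^0_\Sigma (\Gamma_2,\Delta_2)$ with $\varphi \notin \Delta_2^+$, as required.

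The principal obstacle is the backward direction of item~1: passing from $\varphi$ at the end of the chain to $\ps\varphi$ at the start requires not just the axioms but the specific closure property of \emph{precise} theories in $\mathcal M^{\sf GS4}_c$, namely that $\ps\varphi \in \Theta'^+$ together with $\Theta \rel_c \Theta'$ forces $\ps\varphi \in \Theta^+$. This is the reason precise theories were introduced in Section~\ref{secCompGS4} in the first place, and the reason the construction of $\mathcal M^{\sf GS4}_\Sigma$ yields the shallow model property for $\sf GS4$.
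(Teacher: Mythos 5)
Your proposal is correct and matches the paper's own proof in all essentials: a single $\rel^0_\Sigma$-step built from a canonical witness for the easy directions, propagation of $\ps\varphi$ (via preciseness and $\Theta^\ps$-monotonicity along $\rel_c$) and of $\nec\varphi$ (via \ref{ax:trans:box}) along the $\rel^0_\Sigma$-chain, and the canonical-model construction $\Theta\peq_c\Psi\rel_c\Upsilon$ projected back to $W_\Sigma$ for the $\nec$ case. The only cosmetic difference is that the paper phrases the chain-propagation for $\ps\varphi$ contrapositively (pushing $\ps\varphi\notin\Delta^+$ forward) and at the level of the $\Sigma$-restricted components, whereas you push $\ps\varphi$ backward through the witnesses, implicitly using that $\ps\varphi\in\Sigma$ bridges the possibly distinct witnesses at each intermediate world -- logically the same argument.
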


\begin{proof}
For the first item, let us take $(\Gamma_1, \Delta_1) \in W_\Sigma$ satisfying $\ps \varphi \in \Delta_1^+$. By definition, there exists $\Theta_1 \in W_c$ such that $\Delta_1 = \Theta_1 \upharpoonright \Sigma$ and $\Theta_1 \seq_c \Gamma_1$. Therefore, there exists $\Theta_1 \rel_c \Theta_2$ such that $\varphi \in \Theta_2^+$. Define $\Delta_2 = \Theta_2 \upharpoonright \Sigma$. Let us take the pair $(\Theta_2,\Delta_2)$. It can be checked that $(\Gamma_1,\Delta_1) \rel_\Sigma (\Theta_2,\Delta_2)$. Since $\varphi \in \Theta_2^+ \cap \Sigma$ then $\varphi \in \Delta_2^+$.
Conversely, assume that $\ps\varphi \not \in \Delta_1^+$ and suppose that $(\Gamma_1,\Delta_1) \rel^0_\Sigma (\Gamma_2,\Delta_2)$ (we will treat the general case for $\rel_\Sigma$ later).
Then, there exist $\Theta_1\seq_c\Gamma_1 $ and $\Theta_2\seq_c\Gamma_2 $ such that $\Theta_1\rel_c\Theta_2$ and $\Delta_i=\Theta_i\upharpoonright \Sigma$.
From $\Theta_1\rel_c\Theta_2$ we obtain $\varphi\in \Theta_2^\ps$ and hence $\ps\varphi\not\in \Theta_2^+$, so $\ps \varphi\not\in \Delta_2^+$.
Since $\rel_\Sigma$ is the transitive closure of $\rel^0_\Sigma$, we note that $(\Gamma_1,\Delta_1) \rel_\Sigma (\Gamma_2,\Delta_2)$ iff there exists a sequence
\begin{align*}
(\Gamma_1,\Delta_1) & = (\Xi_0,\Upsilon_0) \rel^0_\Sigma (\Xi_1,\Upsilon_1) \rel^0_\Sigma \ldots \\
& \rel^0_\Sigma (\Xi_n,\Upsilon_n) = (\Gamma_2,\Delta_2).
\end{align*}
By induction on $i$, $\ps\varphi\notin\Upsilon_i^+$, so that $\ps\varphi\not\in \Delta_2^+$ and $\varphi\not\in \Delta_2^+$. 
	
Let us consider now the second item.
From right to left, assume by contrapositive that $\nec \varphi \not \in \Delta_1^+$.
By definition, there exists $\Theta_1 \seq_c \Gamma_1$ such that $\Delta_1 = \Theta_1 \upharpoonright \Sigma$. Therefore, $\nec \varphi \not \in \Theta_1^+$. 
As a consequence, there exists $\Theta_1'$ and $\Theta_2$ in $W_\Sigma$ such that $\Theta_1 \peq_c \Theta_1' \rel_c \Theta_2$ and $\varphi \not \in \Theta_2^+$.
Define $\Delta_1' = \Theta_1'\upharpoonright \Sigma$ and $\Delta_2 = \Theta_2\upharpoonright \Sigma$.
By definition $(\Gamma_1,\Delta_1) \peq_\Sigma (\Gamma_1, \Delta_1')$ and $(\Gamma_1,\Delta_1') \rel_\Sigma (\Theta_2,\Delta_2)$. Moreover, since $\varphi \not \in \Theta_2^+$, $\varphi \not \in \Delta_2^+$.  

From left to right, we may argue as in the case for $\ps\varphi$ that if $(\Gamma_1;\Delta_1)\rel_\Sigma (\Gamma_2;\Delta_2)$ and $\nec\varphi\in \Delta^+_1$, then also $\nec\varphi\in \Delta^+_2$, so $\varphi\in \Delta^+_2$.
Thus if $(\Gamma_1;\Delta_1)\peq_\Sigma (\Gamma_3;\Delta_3)\rel_\Sigma (\Gamma_2;\Delta_2)$, from $\nec\varphi\in \Delta^+_1$ we obtain $\nec\varphi\in \Delta^+_3$, and from $(\Gamma_3;\Delta_3)\rel_\Sigma (\Gamma_2;\Delta_2)$ we obtain $\varphi\in \Delta^+_2$.
\end{proof}

From lemmas \ref{lemTruthImpGS4} and \ref{lemGS4TruthDiam} we immediately obtain the following.

\begin{lem}\label{lemTruthGS4}
	Let $(\Gamma,\Delta) \in W_\Sigma$ and $\varphi\in \Sigma$. Then, $(\mathcal M^{\sf GS4}_\Sigma,(\Gamma,\Delta)) \models\varphi $ iff $\varphi\in \Delta^+$.
\end{lem}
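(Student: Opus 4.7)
The plan is to proceed by a straightforward induction on the structure of $\varphi\in\Sigma$, invoking the fact that $\Sigma$ is closed under subformulas so every subformula encountered in the induction already falls under the inductive hypothesis. Since each $\Delta$ has the form $\Theta\upharpoonright\Sigma$ for some $\Theta\in W_c$, the properties of prime $\sf GS4$-theories transfer to $\Delta^+$ whenever the formulas involved are in $\Sigma$.

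For the base case $\varphi=p\in\mathbb P$, the equivalence is just the definition of $V_\Sigma$. For $\bot$, recall that $\sf GS4$ only uses infallible theories, and $\bot\notin\Theta^+$ for any $\Theta\in W_c$, so both sides are false. For $\varphi=\psi_1\wedge\psi_2$ and $\varphi=\psi_1\vee\psi_2$, I would use that $\Theta^+$ is deductively closed and prime together with the fact that $\psi_1,\psi_2\in\Sigma$, to conclude that $\psi_1\wedge\psi_2\in\Delta^+$ iff $\psi_1,\psi_2\in\Delta^+$ and similarly for disjunction; the induction hypothesis then finishes the case.

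The implication case $\varphi=\psi_1\to\psi_2$ is handled directly by Lemma~\ref{lemTruthImpGS4}: that lemma says $\psi_1\to\psi_2\in\Delta^+$ iff for every $(\Gamma,\Psi)\seq_\Sigma(\Gamma,\Delta)$ with $\psi_1\in\Psi^+$ we have $\psi_2\in\Psi^+$. Combining this with the induction hypothesis applied to $\psi_1$ and $\psi_2$ (which are in $\Sigma$) yields the required equivalence with $(\mathcal M^{\sf GS4}_\Sigma,(\Gamma,\Delta))\models\psi_1\to\psi_2$. The modal cases $\ps\psi$ and $\nec\psi$ are handled analogously using Lemma~\ref{lemGS4TruthDiam}, which characterises membership of $\ps\psi$ and $\nec\psi$ in $\Delta^+$ in terms of $\rel_\Sigma$ (and, for $\nec$, the composition $\peq_\Sigma;\rel_\Sigma$) together with membership of $\psi$ in the second component of the relevant world; once the induction hypothesis is applied to $\psi$, this matches exactly the semantic clauses.

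The only mild obstacle is to make sure, in the $\to$ case, that the quantification ``for all $(\Gamma',\Delta')\seq_\Sigma(\Gamma,\Delta)$'' in the semantics (which in principle allows $\Gamma'\neq\Gamma$) lines up with the quantification ``for all $(\Gamma,\Psi)\seq_\Sigma(\Gamma,\Delta)$'' (with the first component fixed) given by Lemma~\ref{lemTruthImpGS4}. This is not actually an issue, because by the definition of $\peq_\Sigma$ any $\peq_\Sigma$-successor of $(\Gamma,\Delta)$ must keep the first component equal to $\Gamma$, so the two quantifications coincide. With this observation in place, the induction closes and the lemma follows.
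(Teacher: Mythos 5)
Your proposal is correct and takes essentially the same route as the paper, which likewise obtains the lemma immediately from Lemmas~\ref{lemTruthImpGS4} and~\ref{lemGS4TruthDiam} by induction on $\varphi$, with the propositional cases handled through the underlying theories $\Theta\seq_c\Gamma$ and the observation that $\peq_\Sigma$ fixes the first component. The only point worth making explicit is that in the $\ps$ case the existential characterisation in Lemma~\ref{lemGS4TruthDiam} matches the official $\forall\exists$ semantic clause for $\ps$ only via forward confluence of $\mathcal M^{\sf GS4}_\Sigma$ (Lemma~\ref{lemGS4Model2}) together with Lemma~\ref{lemClassical}, both already available in the paper.
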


\begin{lem}\label{lemGS4hallowComp}
	For $\varphi\in \Sigma$, $ {\sf GS4}  \vdash\varphi$ if and only if $\mathcal M^{\sf GS4}_\Sigma  \models\varphi$.
\end{lem}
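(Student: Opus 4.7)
The proof will follow the same pattern as Lemma~\ref{lemCS4shallowComp}, leveraging the machinery already set up for $\mathcal M^{\sf GS4}_c$ together with the truth lemma for the shallow model, Lemma~\ref{lemTruthGS4}.

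For the forward direction, I would first verify that $\mathcal M^{\sf GS4}_\Sigma$ is indeed a ${\sf GS4}$-model. The underlying structure is infallible (we implicitly take $\fallible W_\Sigma = \varnothing$). Lemma~\ref{lemGS4Model1} gives that $\peq_\Sigma$ is a locally linear partial order with $V_\Sigma$ monotone, and Lemma~\ref{lemGS4Model2} ensures that $\rel_\Sigma$ is forward and backward confluent. Together with reflexivity and transitivity of $\rel_\Sigma$ (established just before Lemma~\ref{lemGS4Model2}), this shows $\mathcal M^{\sf GS4}_\Sigma$ is a ${\sf GS4}$-frame. Soundness of ${\sf GS4}$ (Propositions~\ref{prop:validity0}--\ref{prop:validity2}) then yields that ${\sf GS4}\vdash\varphi$ implies $\mathcal M^{\sf GS4}_\Sigma\models\varphi$.

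For the converse, I would argue by contrapositive. Suppose ${\sf GS4}\not\vdash\varphi$. By Theorem~\ref{thmComp}, $\mathcal M^{\sf GS4}_c\not\models\varphi$, so there is some $\Gamma\in W_c$ with $\varphi\notin\Gamma^+$ (and hence $\varphi\notin \Gamma^+\cap\Sigma$, since $\varphi\in\Sigma$ by hypothesis). Since $\Gamma\peq_c\Gamma$ by reflexivity, the pair $(\Gamma,\Gamma\upharpoonright\Sigma)$ lies in $W_\Sigma$ by definition. Applying the truth lemma for the shallow model (Lemma~\ref{lemTruthGS4}) with $\Delta=\Gamma\upharpoonright\Sigma$, we get
\[
(\mathcal M^{\sf GS4}_\Sigma,(\Gamma,\Gamma\upharpoonright\Sigma))\models\varphi \ \Longleftrightarrow\ \varphi\in(\Gamma\upharpoonright\Sigma)^+=\Gamma^+\cap\Sigma,
\]
and the right-hand side fails, so $\mathcal M^{\sf GS4}_\Sigma\not\models\varphi$, as required.

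The proof is essentially a packaging step: there is no real obstacle, since all the work has been pushed into the preceding lemmas. The only thing one must be careful about is the assumption $\varphi\in\Sigma$, which is needed to invoke Lemma~\ref{lemTruthGS4} in the completeness direction, and to appeal to $\Sigma$ being closed under subformulas when that lemma is used (as the truth lemma is proved by induction and the modal cases rely on $\Sigma$-closure). Given that $\Sigma$ is assumed closed under subformulas in the setup of this section, this presents no issue.
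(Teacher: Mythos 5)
Your proposal is correct, and your right-to-left direction (contrapositive via Theorem~\ref{thmComp}, the world $(\Gamma,\Gamma\upharpoonright\Sigma)\in W_\Sigma$, and Lemma~\ref{lemTruthGS4}) is exactly the paper's argument. The left-to-right direction, however, takes a different route: the paper never invokes soundness over the class of ${\sf GS4}$-models here; instead it observes that ${\sf GS4}\vdash\varphi$ iff $\mathcal M^{\sf GS4}_c\models\varphi$, so that $\varphi$ lies in $\Theta^+$ for every theory $\Theta$, hence $\varphi\in\Delta^+$ for every world $(\Gamma,\Delta)\in W_\Sigma$ with $\Delta=\Theta\upharpoonright\Sigma$, and then applies Lemma~\ref{lemTruthGS4} once more to conclude $(\mathcal M^{\sf GS4}_\Sigma,(\Gamma,\Delta))\models\varphi$. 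Your version instead certifies that $\mathcal M^{\sf GS4}_\Sigma$ is an (infallible, locally linear, forward and backward confluent) ${\sf GS4}$-model via Lemmas~\ref{lemGS4Model1} and~\ref{lemGS4Model2} and then quotes soundness. Both are sound arguments: yours is more modular and applies to arbitrary formulas (not just $\varphi\in\Sigma$), at the price of depending on the frame-condition lemmas, which the paper only needs later when assembling the finite model property; the paper's version needs nothing beyond deductive closure of theories and the truth lemma, keeping this lemma independent of the confluence and linearity verifications.
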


\begin{proof}
	We know that $ {\sf GS4}  \vdash\varphi$ if and only if $ \mathcal M^{\sf GS4} _c  \models\varphi$.
	Now, given $(\Gamma,\Delta) \in W_\Sigma$, $\Delta =\Theta\upharpoonright \Sigma$ for some $\Theta \in W_c$, and Lemma \ref{lemTruthGS4} yields that $ {\sf GS4}  \vdash\varphi$ implies that $\varphi\in \Theta^+$, hence $(\mathcal M^{\sf GS4}_\Sigma,(\Gamma,\Delta))   \models\varphi$.
	Conversely, if ${\sf GS4} \not \vdash\varphi$ then there is $\Gamma \in W_c$ such that $\varphi\not\in \Gamma^+ $, which implies that $(\mathcal M^{\sf GS4}_\Sigma, (\Gamma,  \Gamma\upharpoonright \Sigma)) \not  \models\varphi$.
\end{proof}

Reasoning as in the proof of Theorem \ref{thmCS4fin}, we obtain the finite model property for $\sf GS4$.

\begin{thm}
$\sf GS4$ has the finite model property.
\end{thm}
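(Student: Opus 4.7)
The plan is to mimic the strategy used for Theorem \ref{thmCS4fin}: reduce the finite model property to the shallow model property via Theorem \ref{thmShallowtoFin}, and then verify that for any formula $\varphi$ with $\Sigma := \mathrm{sub}(\varphi)$, the structure $\mathcal M^{\sf GS4}_\Sigma$ is a shallow $\sf GS4$-model in which $\varphi$ is falsified whenever ${\sf GS4}\not\vdash\varphi$. Essentially all of the hard work has already been carried out in the preceding lemmas, so the argument is an assembly of those lemmas.

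First I would fix $\varphi\in\lanfull$ with ${\sf GS4}\not\vdash\varphi$ and set $\Sigma := \mathrm{sub}(\varphi)$, which is finite and closed under subformulas. Next I would verify that $\mathcal M^{\sf GS4}_\Sigma$ belongs to the class of $\sf GS4$ frames: Lemma \ref{lemGS4Model1} shows that $\peq_\Sigma$ is a locally linear partial order and that $V_\Sigma$ is monotone, Lemma \ref{lemGS4Model2} yields that $\rel_\Sigma$ is both forward and backward confluent with respect to $\peq_\Sigma$, and by construction $\fallible W_\Sigma =\varnothing$, so the model is infallible. Together these make $\mathcal M^{\sf GS4}_\Sigma$ a locally linear, forward and backward confluent, infallible bi-intuitionistic model, i.e.~a $\sf GS4$-model. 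Lemma \ref{lemGS4Shallow} then provides that the model is shallow, with every $\prec_\Sigma$-chain bounded in length by $|\Sigma|+1$.

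Next I would invoke Lemma \ref{lemGS4hallowComp}: from ${\sf GS4}\not\vdash\varphi$ it delivers a world $(\Gamma,\Gamma\upharpoonright\Sigma)\in W_\Sigma$ at which $\varphi$ fails, so $\varphi$ is falsifiable in the shallow $\sf GS4$-model $\mathcal M^{\sf GS4}_\Sigma$. Finally, since $\sf GS4$ is included in the list of logics covered by Theorem \ref{thmShallowtoFin} (and moreover is not $\sf S4I$, so no forest-like hypothesis is needed), I would apply that theorem to obtain a finite $\sf GS4$-model falsifying $\varphi$. This establishes the shallow model property and hence the finite model property, completing the proof.

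Since every component is already in place, I do not anticipate a genuine obstacle; the only point requiring a moment of care is to confirm that the frame conditions of $\sf GS4$ coincide with the combination (locally linear)$+$(forward and backward confluent)$+$(infallible), which is immediate from the definition of the class of $\sf GS4$ frames. Everything else is a clean orchestration of the structural lemmas on $\mathcal M^{\sf GS4}_\Sigma$ together with the general shallow-to-finite reduction.
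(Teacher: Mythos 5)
Your proposal is correct and follows essentially the same route as the paper, which proves this theorem by ``reasoning as in the proof of Theorem~\ref{thmCS4fin}'': it assembles Lemmas~\ref{lemGS4Model1}, \ref{lemGS4Shallow}, \ref{lemGS4Model2} and \ref{lemGS4hallowComp} to show $\mathcal M^{\sf GS4}_\Sigma$ is a shallow $\sf GS4$-model falsifying any non-theorem, and then applies Theorem~\ref{thmShallowtoFin}. No gaps; your explicit check of the $\sf GS4$ frame conditions and the non-necessity of the forest-like hypothesis is exactly the intended bookkeeping.
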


%\begin{proof}
%In view of Theorem \ref{thmShallowtoFin}, it suffices to show that $\sf CS4$ has the shallow model property.
%Fix a formula $\varphi$ and let $\Sigma$ be the set of subformulas of $\varphi$. By Lemmas~\ref{lemGS4Shallow},~\ref{lemGS4Model1},and~\ref{lemGS4Model2} $\mathcal M^{\sf GS4}_\Sigma$ is a shallow $\sf GS4$-model, and by Lemma \ref{lemGS4hallowComp}, $\varphi$ is valid on $\mathcal M^{\sf GS4}_\Sigma$ iff ${\sf GS4}\vdash \varphi$, as needed.
%\end{proof}

\section{The finite model property for $\sf S4I$}\label{secFMPS4I}

In this section we fix $\Lambda =\sf S4I$.
Our aim is to prove that $\sf S4I$ has the shallow model property, from where the finite model property will follow.
The construction we will use has certain elements in common with those for $\sf CS4 $ and $\sf GS4$, with the caveat that we need to ensure that our models are forest-like in order to apply Theorem \ref{thmShallowtoFin}.
The following construction will ensure this.
In this section, $W_c,\peq_c$, etc.~refer to the components of $\mathcal M^{\sf S4I}_c$, and $W_\Sigma,\peq_\Sigma$, etc.~will refer to the respective components of the model $\mathcal M^{\sf S4I}_\Sigma$ to be constructed below.

\begin{defn}
For a set of formulas $\Sigma$ closed under subformulas, define $W_\Sigma $ to be the set of all tuples $ \Gamma= (\Gamma _0,\ldots,\Gamma _n )$, where
\begin{enumerate*}
\item each $\Gamma _i\in W_c$,
\item $ \Gamma _i \peq_c \Gamma _{i+1}  $ if $i<n$,
\item for each $i<n$ there is a formula $\varphi \in \Sigma$ such that $\varphi \in \Gamma _{i+1}\setminus \Gamma _{i} $.
\end{enumerate*}
We define $V_\Sigma$ by $\Gamma\in V_\Sigma(p)$ iff $p\in  \Gamma_n^+\cap \Sigma$.
Define $\Gamma \peq_\Sigma \Delta$ if $\Gamma$ is an initial sequence of $\Delta$.
\end{defn}

\begin{lem}\label{lemS4IShallow}
The relation $\peq_\Sigma $ is a forest-like partial order on $W_\Sigma$ and any strict $\peq_\Sigma $-chain has length at most $|\Sigma|+1$.
Moreover, $V_\Sigma$ is monotone.
\end{lem}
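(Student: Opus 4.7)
The plan is to verify each of the three assertions by exploiting the fact that $\peq_\Sigma$ is literally the ``initial-segment-of'' relation on finite sequences, for which most properties are essentially combinatorial.

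For the partial order claim, I would observe that every element of $W_\Sigma$ is a finite sequence and $\Gamma\peq_\Sigma\Delta$ just means $\Gamma$ is an initial segment of $\Delta$. Reflexivity is trivial, transitivity follows from composition of initial-segment inclusions, and antisymmetry holds because two finite sequences that are each initial segments of the other must have the same length and therefore be equal. For the forest-like condition, if $\Delta_1\peq_\Sigma\Gamma$ and $\Delta_2\peq_\Sigma\Gamma$, then $\Delta_1$ and $\Delta_2$ are initial segments of the common sequence $\Gamma$; any two initial segments of a fixed sequence are comparable by length, so whichever is shorter is an initial segment of the other, giving $\Delta_1\peq_\Sigma\Delta_2$ or $\Delta_2\peq_\Sigma\Delta_1$.

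For the length bound, the key observation is that any sequence $\Gamma=(\Gamma_0,\ldots,\Gamma_n)\in W_\Sigma$ has $n\leq |\Sigma|$. Indeed, by the definition of $W_\Sigma$, for each $i<n$ we may choose a witness $\varphi_i\in\Sigma$ with $\varphi_i\in\Gamma_{i+1}^+\setminus \Gamma_i^+$; since $\Gamma_0\peq_c\cdots\peq_c\Gamma_n$ and hence the positive parts are $\subseteq$-increasing, whenever $i<j<n$ we have $\varphi_i\in\Gamma_{i+1}^+\subseteq \Gamma_j^+$ while $\varphi_j\notin\Gamma_j^+$, so the witnesses $\varphi_0,\ldots,\varphi_{n-1}$ are pairwise distinct elements of $\Sigma$. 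Hence each element of $W_\Sigma$ has length at most $|\Sigma|+1$. In a strict chain $\Gamma^1\prec_\Sigma\cdots\prec_\Sigma\Gamma^m$, each $\Gamma^i$ is a \emph{proper} initial segment of $\Gamma^{i+1}$, so lengths strictly increase; with minimum length $1$ and maximum length $|\Sigma|+1$ this forces $m\leq|\Sigma|+1$.

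Finally, for monotonicity of $V_\Sigma$, suppose $\Gamma=(\Gamma_0,\ldots,\Gamma_n)\peq_\Sigma (\Delta_0,\ldots,\Delta_m)=\Delta$ and $\Gamma\in V_\Sigma(p)$, i.e.~$p\in\Gamma_n^+\cap\Sigma$. Since $\Gamma$ is an initial segment of $\Delta$ we have $\Gamma_n=\Delta_n$, and since $\Delta_n\peq_c\Delta_m$ by clause (ii) of the definition, $p\in\Delta_n^+\subseteq\Delta_m^+$, so $p\in\Delta_m^+\cap\Sigma$ and $\Delta\in V_\Sigma(p)$. I do not anticipate any real obstacle: the statement is essentially bookkeeping for the initial-segment construction, and the only point worth being careful about is linking the ``witness'' clause (iii) to the bound $n\leq|\Sigma|$, which is what pins down the shallowness.
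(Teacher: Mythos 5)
Your proof is correct and follows essentially the same route as the paper: the paper also checks the order properties directly from the initial-segment definition, bounds chains by the distinct-witness argument of Lemma~\ref{lemCS4Shallow} (you repackage this slightly by first bounding the length of each tuple in $W_\Sigma$ and then noting that lengths strictly increase along a $\prec_\Sigma$-chain, which is the same combinatorial content), and derives monotonicity of $V_\Sigma$ from the components being $\peq_c$-ordered.
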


\begin{proof}
That $\peq_\Sigma $ is a forest-like partial order is easily checked from the definitions, and the bound on chains follows from the same reasoning as for Lemma \ref{lemCS4Shallow}, using the observation that $\Gamma \peq_\Sigma \Delta$ implies that $ \ell^+(\Gamma)\subsetneq \ell^+(\Delta) $.
The monotonicity of $V_\Sigma$ follows from the elements of $\Gamma$ being ordered by $\peq_c$.
\end{proof}

\begin{lem}\label{lemTruthImpS4I}
For all $\Gamma \in W_\Sigma$ and $ \varphi \to \psi \in \Sigma $ we have that $\varphi\to \psi \in \ell^+ ( \Gamma ) $ iff for all $\Delta \seq_\Sigma \Gamma$ with $\varphi\in \ell^+ ( \Delta) $ we have $\psi \in \ell^+( \Delta )$.
\end{lem}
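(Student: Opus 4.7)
The structure of the claim is exactly parallel to Lemma \ref{lemTruthImpCS4} (for $\sf CS4$) and Lemma \ref{lemTruthImpGS4} (for $\sf GS4$), so I will follow the same two-direction template, adapted to the tuple encoding of worlds. Throughout, I read $\ell^+(\Gamma)$ for $\Gamma = (\Gamma_0,\dots,\Gamma_n) \in W_\Sigma$ as $\Gamma_n^+ \cap \Sigma$, which is consistent with the definition of $V_\Sigma$.

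The plan is to first prove the forward implication by a direct computation. Assume $\varphi\to\psi \in \Gamma_n^+$ and let $\Delta \seq_\Sigma \Gamma$, so $\Delta = (\Gamma_0,\dots,\Gamma_n,\Theta_{n+1},\dots,\Theta_m)$ for some $m\geq n$. Since the components of a tuple in $W_\Sigma$ are $\peq_c$-increasing and $\peq_c$ is transitive, we have $\Gamma_n \peq_c \Delta_m$, so $\varphi\to\psi \in \Delta_m^+$. If moreover $\varphi \in \ell^+(\Delta) = \Delta_m^+ \cap \Sigma$, then by deductive closure (\ref{rl:mp}) we obtain $\psi \in \Delta_m^+$, hence $\psi \in \ell^+(\Delta)$, as required.

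For the converse I would argue by contrapositive, building a witness $\Delta \seq_\Sigma \Gamma$ explicitly. Suppose $\varphi\to\psi \in \Sigma \setminus \Gamma_n^+$. By Lemma \ref{lem:truth-lemma:S4I} (or just by deductive analysis of the canonical model, following the treatment in Lemma \ref{lem:truth-lemma}), there is $\Theta \in W_c$ with $\Gamma_n \peq_c \Theta$, $\varphi \in \Theta^+$, and $\psi \notin \Theta^+$. Then split on whether $\Theta^+ \cap \Sigma = \Gamma_n^+ \cap \Sigma$. If equality holds, simply take $\Delta := \Gamma$; then $\varphi \in \Gamma_n^+ \cap \Sigma = \ell^+(\Gamma)$ while $\psi \notin \ell^+(\Gamma)$, witnessing the failure of the right-hand side. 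Otherwise, since $\Gamma_n^+ \cap \Sigma \subseteq \Theta^+ \cap \Sigma$ by monotonicity and the inclusion is strict, there is some $\chi \in \Sigma$ with $\chi \in \Theta^+ \setminus \Gamma_n^+$; this is exactly the clause in the definition of $W_\Sigma$ permitting us to append $\Theta$, so $\Delta := (\Gamma_0,\dots,\Gamma_n,\Theta) \in W_\Sigma$ and $\Gamma \peq_\Sigma \Delta$. Now $\ell^+(\Delta) = \Theta^+\cap\Sigma$ contains $\varphi$ but not $\psi$, again witnessing the failure.

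I do not expect a real obstacle here. The only subtle point is the case split to ensure that the newly constructed $\Delta$ actually lies in $W_\Sigma$: one must verify the Hennessy--Milner-like `new formula of $\Sigma$' condition on the appended step, which is where the strict inclusion of labels is used. Once this is observed, the proof is essentially a reorganisation of the $\sf CS4$ argument, with the tuple structure replacing the single theory.
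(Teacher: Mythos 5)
Your proof is correct and follows essentially the same route as the paper: the forward direction via monotonicity of $\peq_c$ and deductive closure, and the converse by contrapositive, extending $\Gamma_n$ in the canonical model to a theory containing $\varphi$ but not $\psi$ and appending it to the tuple, falling back to $\Delta=\Gamma$ when no new formula of $\Sigma$ appears. The only cosmetic difference is the case split: the paper assumes without loss of generality that $\varphi\in\ell^+(\Gamma)$ implies $\psi\in\ell^+(\Gamma)$, so that $\varphi$ itself witnesses membership of the extended tuple in $W_\Sigma$, whereas you split on whether the $\Sigma$-labels of $\Theta$ and $\Gamma_n$ coincide; the two checks amount to the same thing.
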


\begin{proof}
First assume that $\Gamma= (\Gamma _0,\ldots,\Gamma _n  )$ is such that $\varphi\to \psi \in \ell^+ ( \Gamma ) $ and let $\Delta = (\Delta _0,\ldots,\Delta _m )$ be such that $\Gamma \peq_\Sigma \Delta$.
Then, $m\geq n$ and $\Delta_n = \Gamma_n$, which by transitivity of $\peq_c$ implies that $\Gamma _n\peq_c \Delta_m$.
It follows that if $\varphi\in \ell^+ ( \Delta) $ then $\psi \in \ell^+( \Delta )$.

Conversely, suppose that $\varphi\to \psi \in \Sigma\setminus \ell^+ ( \Gamma ) $.
We may further assume, without loss of generality, that if $\varphi\in \ell^+(\Gamma)$ then $\psi\in \ell^+(\Gamma)$, for otherwise we may take $\Delta = \Gamma$.
Then, there is $\Delta_{n+1} \seq _c \Gamma_n$ such that $\varphi\in \Delta_{n+1}$ but $\psi \not \in \Delta_{n+1}$.
For $i\leq n$ set $\Delta_i = \Gamma_i$, and define $\Delta = (\Delta_i)_{i\leq n+1}$.
It should then be clear that $\Delta$ has the desired properties.
\end{proof}

The accessibility relation for $\sf S4I$ is a bit more involved than that for $\sf CS4$.

\begin{defn}
%\comment{P: are $\Phi$ and $\Psi$ elements of $W_{c}$ or elements of $W_{\Sigma}$ (D: $W_c$)? What does ``$\Phi\leq\Psi$'' mean (D: There was a subindex missing, it is $\peq_c$.)? The thing is that I do not understand the definition of the binary relation $\leq_{\Sigma}^{0}$. D: Is it clear now?}
For theories $\Phi$, $\Psi$ in $W_c$ we define $\Phi \peq^0_\Sigma \Psi$ if $\Phi \peq_c \Psi$ and $\Phi^+ \cap \Sigma = \Psi^+ \cap \Sigma$.
We define $\Gamma \rel^0_\Sigma \Delta$ if there is $\Theta$ so that $\Gamma\rel_c \Theta \seq^0_\Sigma \Delta$.

We define, for $\Gamma= (\Gamma _0,\ldots,\Gamma _n  )$ and $\Delta = (\Delta _0,\ldots,\Delta _m  ) \in W_\Sigma$, $\Gamma \rel^1_\Sigma \Delta$ if there is a non-decreasing sequence $j_1,\ldots j_n $ with $j_n = m$ such that $ \Gamma _i  \rel^0_\Sigma  \Delta _{j_i} $ for $i\leq n$.
We define $\rel_\Sigma$ to be the transitive closure of $\rel^1_\Sigma$.
\end{defn}

With this, we define $\mathcal M^{\sf S4I}_\Sigma = (W_\Sigma,\peq_\Sigma,\rel_\Sigma,V_\Sigma)$.
It is easy to check using the above lemmas that $\mathcal M^{\sf S4I}_\Sigma$ is a bi-intuitionistic model.
Next we show that it is indeed an $\sf S4I$ model.

\begin{lem}\label{lemIsS4IModel}
The relation $\rel_\Sigma $ is forward and downward confluent on $ S$.
\end{lem}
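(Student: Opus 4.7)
The plan is to invoke Lemma~\ref{lemmClosure}: since $\rel_\Sigma$ is the transitive closure of $\rel^1_\Sigma$, it suffices to establish forward and downward confluence for $\rel^1_\Sigma$.

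Downward confluence is essentially combinatorial. Given $\Gamma = (\Gamma_0,\ldots,\Gamma_n) \peq_\Sigma \Delta$ and $\Delta \rel^1_\Sigma \Delta' = (\Delta'_0,\ldots,\Delta'_{m'})$ witnessed by a non-decreasing index sequence $j_0 \leq \cdots \leq j_m = m'$, the natural choice is $\Gamma' := (\Delta'_0,\ldots,\Delta'_{j_n})$, which is an initial segment of $\Delta'$, so $\Gamma' \peq_\Sigma \Delta'$. Truncating the index sequence at $j_n$ immediately witnesses $\Gamma \rel^1_\Sigma \Gamma'$, since for $i \leq n$ we have $\Gamma_i = \Delta_i$ and the $\rel^0_\Sigma$-witness for $\Delta_i \rel^0_\Sigma \Delta'_{j_i}$ transfers verbatim.

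Forward confluence is more delicate. By induction on the length difference, the plan is to reduce to the case where $\Gamma' = (\Gamma_0,\ldots,\Gamma_n,\Gamma_{n+1})$ extends $\Gamma = (\Gamma_0,\ldots,\Gamma_n)$ by a single element. Let indices $j_0 \leq \cdots \leq j_n = m$ and theories $\Theta_i$ with $\Gamma_i \rel_c \Theta_i \peq^0_\Sigma \Delta_{j_i}$ witness $\Gamma \rel^1_\Sigma \Delta = (\Delta_0,\ldots,\Delta_m)$. First, applying forward confluence of $\rel_c$ in $\mathcal M_c^{\sf S4I}$ to $\Gamma_n \peq_c \Gamma_{n+1}$ and $\Gamma_n \rel_c \Theta_n$ yields a theory $\Theta_{n+1}$ with $\Gamma_{n+1} \rel_c \Theta_{n+1}$ and $\Theta_n \peq_c \Theta_{n+1}$, mirroring the corresponding step for $\sf GS4$ in Lemma~\ref{lemGS4Model2}. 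Observe that $\Delta_m^+ \cap \Sigma = \Theta_n^+ \cap \Sigma \subseteq \Theta_{n+1}^+ \cap \Sigma$.

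The crux is then to construct a theory $\Delta_{m+1}$ extending both $\Delta_m$ and $\Theta_{n+1}$ under $\peq_c$ while also satisfying $\Delta_{m+1}^+ \cap \Sigma = \Theta_{n+1}^+ \cap \Sigma$, so that $\Gamma_{n+1} \rel^0_\Sigma \Delta_{m+1}$ via the witness $\Theta_{n+1}$. The approach is to take a Lindenbaum-style extension of the pre-theory $(\Delta_m^+ \cup \Theta_{n+1}^+;\Delta_m^\ps)$ that is $\Xi$-consistent, where $\Xi := \Sigma \setminus \Theta_{n+1}^+$, thereby blocking any additional $\Sigma$-formula from entering $\Delta_{m+1}^+$. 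The hard part will be verifying this $\Xi$-consistency: the inclusions $\Theta_n^+ \subseteq \Delta_m^+$ and $\Theta_n^+ \subseteq \Theta_{n+1}^+$, combined with the $\sf S4I$-specific axiom \ref{ax:cd}, will play the decisive role, analogous to their use in the downward confluence argument for $\mathcal M_c^{\sf S4I}$. Once $\Delta_{m+1}$ is obtained, two cases arise: if $\Delta_{m+1}^+ \cap \Sigma = \Delta_m^+ \cap \Sigma$, set $\Delta' := \Delta$ with $j_{n+1} := m$; otherwise set $\Delta' := (\Delta_0,\ldots,\Delta_m,\Delta_{m+1})$ with $j_{n+1} := m+1$. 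In either case, $\Delta \peq_\Sigma \Delta'$ and $\Gamma' \rel^1_\Sigma \Delta'$, as required.
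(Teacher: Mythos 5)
Your downward-confluence half is fine: since $\peq_\Sigma$ is the initial-segment order, truncating the witnessing index sequence at $j_n$ and taking $\Gamma':=(\Delta'_0,\ldots,\Delta'_{j_n})$ does give $\Gamma\rel^1_\Sigma\Gamma'\peq_\Sigma\Delta'$, and Lemma~\ref{lemmClosure} then transfers both properties to the transitive closure, exactly as intended.

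The forward-confluence half, however, has a genuine gap, and it originates in a misreading of the definition of $\rel^0_\Sigma$. The definition asks for a witness $\Theta$ with $\Gamma\rel_c\Theta\seq^0_\Sigma\Delta$, i.e.\ $\Delta\peq_c\Theta$ together with $\Delta^+\cap\Sigma=\Theta^+\cap\Sigma$; you instead take witnesses with $\Theta_i\peq_c\Delta_{j_i}$ (you explicitly use the inclusion $\Theta_n^+\subseteq\Delta_m^+$ later). With the direction reversed you are forced into an amalgamation problem: producing a theory $\Delta_{m+1}$ that is a common $\peq_c$-extension of $\Delta_m$ and $\Theta_{n+1}$ whose $\Sigma$-label equals that of $\Theta_{n+1}$. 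You leave precisely this step --- the $\Xi$-consistency of $(\Delta_m^+\cup\Theta_{n+1}^+;\Delta_m^\ps)$ --- unverified, and there is no reason it should hold: $\Delta_m$ and $\Theta_{n+1}$ are merely two prime extensions of $\Theta_n$, their union can already be inconsistent (one may contain $p$ and the other $\ineg p$), and axiom \ref{ax:cd} gives no mechanism for amalgamating $\peq_c$-extensions (its role in the canonical model is to establish downward confluence of $\rel_c$, a different matter). Under the correct reading of $\seq^0_\Sigma$ no amalgamation is needed at all: from $\Delta_m\peq_c\Theta_n\peq_c\Theta_{n+1}$ one has $\Delta_m^+\cap\Sigma\subseteq\Theta_{n+1}^+\cap\Sigma$; if equality holds, then $\Theta_{n+1}$ itself already witnesses $\Gamma_{n+1}\rel^0_\Sigma\Delta_m$ and one keeps $\Delta'=\Delta$, and otherwise one appends $\Theta_{n+1}$ itself as the new last element of the tuple (the $\Sigma$-label strictly increases, so the extended tuple lies in $W_\Sigma$, and $\Theta_{n+1}\seq^0_\Sigma\Theta_{n+1}$ is the trivial witness). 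This is the paper's argument; it uses only forward confluence of $\mathcal M^{\sf S4I}_c$ and requires no Lindenbaum-style construction.
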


\begin{proof}
As before, it suffices to show that $\rel^1_\Sigma $ is forward and downward confluent.
Suppose that $\Phi \peq_\Sigma \Phi' $ and $\Phi \rel^1_\Sigma \Delta$ and write $\Phi' = (\Phi_i)_{i\leq m'}$, so that $\Phi=(\Phi_i)_{i\leq m}$ for some $m\leq m'$, and $\Delta = (\Delta_i)_{i\leq n}$.
We construct $\Delta' = (\Delta_i)_{i\leq n'}$ for some $n'\geq n$ such that $\Phi' \rel_\Sigma \Delta'$ and $\Delta \peq_\Sigma \Delta' $.
Given $k \in [m,m']$, we assume inductively that $(\Delta_i)_{i\leq r}$ have been built so that $(\Phi_i)_{i\leq k} \rel_\Sigma (\Delta_i)_{i\leq r}  $.
The base case with $r=m$ is already given by the assumption that $\Phi \rel_\Sigma  \Delta $.
For the inductive step, assume that $(\Phi_i)_{i\leq k} \rel_\Sigma (\Delta_i)_{i\leq r}  $.
Then in particular, $\Phi_k \rel^0_\Sigma \Delta_r$.
By the definition of $\rel_\Sigma^0$, there is $\Theta \in W_c$ so that $\Phi_k \rel_c \Theta \seq^0_\Sigma \Delta_r$.
By forward confluence of $\mathcal M^{\sf S4I}_c$, there is $\Upsilon$ such that $\Phi_{k+1} \sqsubseteq_c \Upsilon \seq_c \Theta$.
If $ \Delta_r^+ \cap \Sigma =  \Upsilon^+ \cap \Sigma$, we observe that $\Delta_r \peq^0_\Sigma \Upsilon$.
Hence $\Phi_{k+1} \sqsubseteq^0_\Sigma \Delta_r$, so that $(\Phi_i)_{i\leq k + 1} \rel^1_\Sigma (\Delta_i)_{i\leq r}  $.
In this case, we may simply set $r'=r $.
Otherwise, we set $r'=r+1$  and $\Delta_{r+1} = \Upsilon$.
In this case, we also have that $(\Phi_i)_{i\leq k} \rel^1_\Sigma (\Delta_i)_{i\leq r}  $.
%\comment{D: I corrected some typos, hopefully it is more clear.}
\end{proof}

\begin{lem}\label{lemTruthDiamS4I}
Let $\Gamma \in W_\Sigma$.
\begin{enumerate}

\item If $\ps\varphi \in \Sigma$ then $\ps\varphi \in \ell^+( \Gamma ) $ if and only if there is $\Delta \ler _\Sigma\Gamma $ such that $ \varphi\in \ell^+( \Delta )$.

\item If $\nec\varphi \in \Sigma$ then $\nec\varphi \in \ell^+( \Gamma ) $ if and only if for every $\Delta \ler _\Sigma\Gamma $, $ \varphi\in  \ell^+(  \Delta )$.

\end{enumerate}

\end{lem}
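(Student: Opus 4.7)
The plan is to reduce everything to the truth lemma for $\mathcal M^{\sf S4I}_c$ (Lemma~\ref{lem:truth-lemma:S4I}) together with the forward and downward confluence of that model, so that by Lemma~\ref{lemClassical} the modalities may be evaluated classically at each $\Gamma_i$ and $\Delta_j$. Throughout I will read $\ell^+(\Gamma)=\Gamma_n^+\cap\Sigma$ for $\Gamma=(\Gamma_0,\ldots,\Gamma_n)$, and exploit that $\Sigma$ is closed under subformulas.

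For item~(1), forward direction: starting from $\ps\varphi\in\Gamma_n^+$, Lemma~\ref{lemClassical} supplies $\Theta$ with $\Gamma_n\rel_c\Theta$ and $\varphi\in\Theta^+$. I then iteratively apply downward confluence along $\Gamma_0\peq_c\cdots\peq_c\Gamma_n\rel_c\Theta$, producing $\Upsilon_0\peq_c\cdots\peq_c\Upsilon_n=\Theta$ with $\Gamma_i\rel_c\Upsilon_i$ for every $i\leq n$. To turn this chain into an element of $W_\Sigma$, I collapse it by selecting indices $k_0<\cdots<k_r=n$, where $k_j$ is the \emph{last} position carrying a given $\Sigma$-label of $\Upsilon$, and setting $\Delta_j:=\Upsilon_{k_j}$. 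Then $\Delta\in W_\Sigma$ and $\varphi\in\Delta_r^+$; letting $j_i$ be the smallest $j$ with $i\leq k_j$, one obtains $\Upsilon_i\peq^0_\Sigma\Delta_{j_i}$ and hence $\Gamma_i\rel^0_\Sigma\Delta_{j_i}$, so $\Gamma\rel^1_\Sigma\Delta$. For the reverse direction, I induct on the number of $\rel^1_\Sigma$-steps witnessing $\Gamma\rel_\Sigma\Delta$: the base case extracts $\Theta$ with $\Gamma_n\rel_c\Theta\peq^0_\Sigma\Delta_m$ from the definition of $\rel^0_\Sigma$, transfers $\varphi\in\Sigma\cap\Delta_m^+$ to $\Theta^+$, and applies Lemma~\ref{lemClassical}; the inductive step combines the hypothesis with $\ps\varphi\in\Sigma$ and one application of axiom~\ref{ax:trans:dia}.

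For item~(2), the forward direction is an induction on chain length: $\nec\varphi$ is preserved by every $\rel_c$-step (via~\ref{ax:trans:box}) and by every $\peq^0_\Sigma$-step (since $\nec\varphi\in\Sigma$), so it reaches $\Delta_m$, whence~\ref{ax:ref:box} gives $\varphi\in\Delta_m^+$. The reverse direction mirrors the forward direction of~(1): if $\nec\varphi\notin\Gamma_n^+$, downward confluence and Lemma~\ref{lemClassical} supply $\Theta$ with $\Gamma_n\rel_c\Theta$ and $\varphi\notin\Theta^+$, and exactly the same collapsing construction yields $\Delta\in W_\Sigma$ with $\Gamma\rel_\Sigma\Delta$ and $\Delta_r^+\cap\Sigma=\Theta^+\cap\Sigma$; since $\varphi\in\Sigma$, we conclude $\varphi\notin\Delta_r^+$.

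The main obstacle lies in the collapsing step that underpins both (1)~forward and (2)~reverse. Two independent constraints must be met simultaneously: the resulting tuple $\Delta$ must belong to $W_\Sigma$ (strict $\Sigma$-increments between consecutive entries), \emph{and} the $\peq^0_\Sigma$-witness relating $\Upsilon_i$ to $\Delta_{j_i}$ must point in the correct direction for the definition of $\rel^0_\Sigma$ (from the $\rel_c$-successor $\Upsilon_i$ to the $\Delta$-entry, not conversely). Choosing $k_j$ as the last index sharing each $\Sigma$-label accomplishes both, but verifying it is where the subtler bookkeeping of the proof lives.
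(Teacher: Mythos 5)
Your overall strategy is the paper's: obtain a $\rel_c$-successor of $\Gamma_n$ containing $\varphi$ (resp.\ omitting it), pull it back along the tuple by iterated downward confluence of $\mathcal M^{\sf S4I}_c$, and then collapse the resulting chain $\Upsilon_0\peq_c\cdots\peq_c\Upsilon_n$ into an element of $W_\Sigma$; your right-to-left argument for (1) and the preservation argument for (2) likewise match the paper. However, the collapsing step---which you yourself single out as the crux---is carried out in the wrong direction. The paper defines $\Gamma' \rel^0_\Sigma \Delta'$ by the existence of $\Theta$ with $\Gamma' \rel_c \Theta \seq^0_\Sigma \Delta'$, i.e.\ the witness $\Theta$ must lie $\peq_c$-\emph{above} $\Delta'$ with the same $\Sigma$-content (this is also how $\rel^0_\Sigma$ is used in the proof of Lemma~\ref{lemIsS4IModel}). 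You instead read it as $\Gamma' \rel_c \Theta \peq^0_\Sigma \Delta'$ (witness below), and accordingly you select the \emph{last} index of each $\Sigma$-label block, so that $\Upsilon_i \peq^0_\Sigma \Delta_{j_i}$. From $\Gamma_i \rel_c \Upsilon_i \peq^0_\Sigma \Delta_{j_i}$ one cannot conclude $\Gamma_i \rel^0_\Sigma \Delta_{j_i}$: one would need some $\Theta'$ with $\Gamma_i \rel_c \Theta'$ and $\Delta_{j_i} \peq_c \Theta'$ of the same $\Sigma$-content, and nothing supplies it---the natural candidate $\Upsilon_i$ lies below $\Delta_{j_i}$, and no confluence property of $\mathcal M^{\sf S4I}_c$ lets you move a $\rel_c$-successor of $\Gamma_i$ up above $\Delta_{j_i}$. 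So for every $i$ that is not the last index of its block, $\Gamma_i \rel^0_\Sigma \Delta_{j_i}$ is unjustified, and $\Gamma\rel^1_\Sigma\Delta$ is not established; this leaves a hole in the left-to-right direction of (1) and in the contrapositive step of (2).

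The repair is exactly the paper's choice: select the \emph{first} index at which each new $\Sigma$-label appears. Then every $\Upsilon_i$ lies $\peq_c$-above the selected entry of its block while sharing its $\Sigma$-label, so $\Gamma_i \rel_c \Upsilon_i \seq^0_\Sigma \Delta_{j_i}$ holds as the definition requires, and the conclusion about $\varphi$ survives: the last selected entry has the same $\Sigma$-content as $\Upsilon_n=\Theta$, and $\varphi\in\Sigma$, so $\varphi\in\ell^+(\Delta)$ in (1) (resp.\ $\varphi\notin\ell^+(\Delta)$ in (2)). The same misreading appears in your base case (``$\Gamma_n\rel_c\Theta\peq^0_\Sigma\Delta_m$''), but there it is harmless since only the equality of $\Sigma$-contents is used; the remainder of your argument goes through.
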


\begin{proof}
Let $\Gamma= (\Gamma_i)_{i\leq n}$.
For the first claim, we first prove the easier right-to-left direction.
Suppose that $\Delta = (\Delta_i)_{i\leq m} \ler^1 _\Sigma\Gamma $ is such that $ \varphi\in \ell^+( \Delta )$, meaning that $\varphi \in \Delta_m^+$.
Then, $\Gamma_n \rel^0_\Sigma \Delta_m $, which means that for some $\Theta$, $\Gamma_n \rel_c \Theta \seq^0_\Sigma \Delta_m $.
Then $\varphi \in \Theta^+$, hence $\ps\varphi \in \Gamma_n^+ = \ell^+(\Gamma)$ by the truth lemma.
If instead $\Delta\ler_\Sigma \Gamma$, then there is a sequence
\[\Gamma = \Upsilon_0 \rel^1_\Sigma \Upsilon_1 \rel^1_\Sigma \ldots \rel^1_\Sigma \Upsilon_n = \Delta,\]
and backward induction on $i$ shows that $\ps\varphi\in \ell^+(\Upsilon_i)$, so that in particular $\ps\varphi\in \ell^+(\Gamma)$.

For the left-to-right direction, suppose that $\ps\varphi \in  \ell^+( \Gamma ) $.
By backward induction on $k\leq n$ we prove that there exists some sequence $ (\Theta_i)_{i = k}^n$ such that $ \Gamma_i \rel _c \Theta_i $ for each $i\in [k,n]$ and $\Theta_i\peq_c\Theta_{i+1}$ if $i\in[k,n)$ (interval notation should be interpreted over the natural numbers).
In the base case $k=n$ and, by the truth lemma, there is $\Theta_n \ler_c \Gamma_n$ such that $\varphi\in \Theta_n$.
So, suppose that $ (\Theta_i)_{i = k+1}^n$ has been constructed with the desired properties.
Since the canonical model is downward confluent, there is some $\Theta_k$ such that $\Gamma_k \rel_c\Theta_k\peq_c \Theta_{k+1} $, yielding the desired $\Theta_k$.

The problem is that the sequence $\Theta=(\Theta_i)_{i\leq n}$ may not be an element of $W_\Sigma$.
We instead choose a suitable subsequence $\Delta = (\Theta_{j_i})_{i\leq m}$, where $j_k$ is defined by (forward) induction on $k$.
For the base case, we set $j_0 = 0$, and $m_0 = 0$ (meaning that $\Delta$ currently has $m_0+1$ elements).
Now, suppose that $m_k$ has been defined as has been $j_i$ for $i\leq m_k$, in such a way that $(\Theta_{j_i})_{i \leq m_k} \in  W_\Sigma$ and $(\Gamma_i)_{i\leq k} \rel_\Sigma (\Theta_{j_i})_{i \leq m_k}$.
To define $m_{k+1} \geq m_k$ and $(\Theta_{j_i})_{i \leq m_{k+1}}$, we consider two cases.
First assume that there is $\psi\in \Sigma$ such that $ \psi \in  \Theta_{k+1} \setminus \Theta_{j_k}  $.
In this case, setting $m_{k+1} = m_k + 1$ and $j_{m_{k+1}} = k+1 $ we see that $(\Theta_{j_i})_{i \leq m_{k+1}}$ has all desired properties.
In particular, the existence of the formula $\psi$ guarantees that $(\Theta_{j_i})_{i \leq m_{k+1}} \in W_\Sigma$.
Otherwise, set $m_{k+1} = m_k$.
In this case we see that $\Gamma_{k+1} \rel_c \Theta_{k+1} \seq^0_\Sigma \Theta_{j_{m_k}}$, so that $\Gamma_{k+1} \rel^0_\Sigma \Theta_{j_{m_k}}$ and thus $(\Gamma_i)_{i\leq k+1} \rel_\Sigma (\Theta_{j_i})_{i \leq m_{k+1}}$.
That $(\Theta_{j_i})_{i \leq m_{k+1}} \in W_\Sigma$ follows from the fact that $m_{k+1} = m_k$ and the induction hypothesis.
The claim then follows by setting $m = m_n$ and $\Delta = (\Theta_{j_i})_{i \leq m  }$.

The second claim is proven similarly, but by contrapositive. We leave the details to the reader.
\end{proof}

From Lemmas \ref{lemTruthImpS4I} and \ref{lemTruthDiamS4I} we immediately obtain the following.

\begin{lem}\label{lemTruthS4I}
For $\Gamma \in W_\Sigma$ and $\varphi\in \Sigma$, $(\mathcal M^{\sf S4I}_\Sigma,\Gamma) \models\varphi $ iff $\varphi\in \ell^+(\Gamma)$.
\end{lem}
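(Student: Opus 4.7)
The plan is to proceed by induction on the complexity of $\varphi \in \Sigma$, with the bulk of the work offloaded to Lemmas~\ref{lemTruthImpS4I} and~\ref{lemTruthDiamS4I}, together with the classical semantic clauses for the modalities afforded by forward/downward confluence.

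For the base case $\varphi = p$, I would simply unfold the definition of $V_\Sigma$: writing $\Gamma = (\Gamma_0,\ldots,\Gamma_n)$, we have $(\mathcal M^{\sf S4I}_\Sigma,\Gamma) \models p$ iff $p \in \Gamma_n^+$, and since $p \in \Sigma$ by hypothesis this is equivalent to $p \in \Gamma_n^+ \cap \Sigma = \ell^+(\Gamma)$. The case $\varphi = \bot$ is vacuous since the model is infallible (all $\Gamma_i$ are proper theories). The inductive cases for $\wedge$ and $\vee$ are handled by the usual arguments using that each $\Gamma_n^+$ is prime and deductively closed.

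For $\varphi = \psi \to \theta$, I would use the semantic clause to write $(\mathcal M^{\sf S4I}_\Sigma,\Gamma) \models \psi \to \theta$ iff for every $\Delta \seq_\Sigma \Gamma$, $(\mathcal M^{\sf S4I}_\Sigma,\Delta) \models \psi$ implies $(\mathcal M^{\sf S4I}_\Sigma,\Delta) \models \theta$. Since $\Sigma$ is closed under subformulas, both $\psi,\theta \in \Sigma$, so the inductive hypothesis rewrites this condition as: for every $\Delta \seq_\Sigma \Gamma$, $\psi \in \ell^+(\Delta)$ implies $\theta \in \ell^+(\Delta)$. By Lemma~\ref{lemTruthImpS4I} this is equivalent to $\psi \to \theta \in \ell^+(\Gamma)$, as required.

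For $\varphi \in \{\ps\psi, \nec\psi\}$, the key preliminary step is to invoke Lemma~\ref{lemIsS4IModel}, which tells us that $\mathcal M^{\sf S4I}_\Sigma$ is forward and downward confluent, so that by Lemma~\ref{lemClassical} the modal clauses simplify to their classical counterparts: $(\mathcal M^{\sf S4I}_\Sigma,\Gamma)\models\ps\psi$ iff there exists $\Delta \ler_\Sigma \Gamma$ with $(\mathcal M^{\sf S4I}_\Sigma,\Delta)\models\psi$, and dually for $\nec$. Combining this with the inductive hypothesis on $\psi \in \Sigma$, the right-hand side becomes `there is $\Delta \ler_\Sigma \Gamma$ with $\psi \in \ell^+(\Delta)$' (respectively `for every $\Delta \ler_\Sigma \Gamma$, $\psi \in \ell^+(\Delta)$'). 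Lemma~\ref{lemTruthDiamS4I} identifies these with $\ps\psi \in \ell^+(\Gamma)$ and $\nec\psi \in \ell^+(\Gamma)$ respectively, closing the induction. The only delicate point to watch is that every subformula we evaluate at intermediate worlds remains in $\Sigma$ so the inductive hypothesis applies; this is guaranteed by subformula-closure.
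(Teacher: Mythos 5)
Your proposal is correct and follows essentially the paper's intended argument: the paper derives Lemma~\ref{lemTruthS4I} "immediately" from Lemmas~\ref{lemTruthImpS4I} and~\ref{lemTruthDiamS4I}, meaning exactly the structural induction you spell out, with Lemma~\ref{lemIsS4IModel} together with Lemma~\ref{lemClassical} supplying the classical reading of the $\ps$ and $\nec$ clauses needed to match Lemma~\ref{lemTruthDiamS4I}. Your explicit treatment of the base cases and the subformula-closure caveat just fills in details the paper leaves implicit.
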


\begin{lem}\label{lemS4IShallowComp}
For $\varphi\in \Sigma$, $ {\sf S4I}  \vdash\varphi$ if and only if $\mathcal M^{\sf S4I}_\Sigma  \models\varphi$.
\end{lem}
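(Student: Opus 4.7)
The plan is to mirror the strategy used for Lemmas \ref{lemCS4shallowComp} and \ref{lemGS4hallowComp}, namely to combine the truth lemma for the shallow model (Lemma \ref{lemTruthS4I}) with the completeness of $\sf S4I$ with respect to its canonical model (Theorem \ref{thmComp}). The two directions correspond respectively to soundness/deductive closure and to a contrapositive argument based on canonical completeness.

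For the left-to-right direction, suppose ${\sf S4I}\vdash\varphi$ and let $\Gamma=(\Gamma_0,\ldots,\Gamma_n)\in W_\Sigma$ be arbitrary. Since each $\Gamma_i$ is a theory and in particular $\Gamma_n^+$ is deductively closed, we have $\varphi\in\Gamma_n^+$; and since $\varphi\in\Sigma$, we obtain $\varphi\in\Gamma_n^+\cap\Sigma=\ell^+(\Gamma)$. By Lemma \ref{lemTruthS4I}, $(\mathcal M^{\sf S4I}_\Sigma,\Gamma)\models\varphi$. As $\Gamma$ was arbitrary and $\mathcal M^{\sf S4I}_\Sigma$ has no fallible worlds, $\mathcal M^{\sf S4I}_\Sigma\models\varphi$. (Alternatively, one may appeal to soundness once it is noted that $\mathcal M^{\sf S4I}_\Sigma$ is a genuine $\sf S4I$-model: the forest-like partial order structure of $\peq_\Sigma$ is given by Lemma \ref{lemS4IShallow}, forward and downward confluence of $\rel_\Sigma$ by Lemma \ref{lemIsS4IModel}, and reflexivity and transitivity of $\rel_\Sigma$ are immediate from the definition of $\rel^0_\Sigma,\rel^1_\Sigma$ and transitive closure.)

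For the right-to-left direction, reason by contrapositive. If ${\sf S4I}\not\vdash\varphi$, then by Theorem \ref{thmComp} there exists a theory $\Gamma\in W_c$ such that $(\mathcal M^{\sf S4I}_c,\Gamma)\not\models\varphi$, which by the canonical truth lemma (Lemma \ref{lem:truth-lemma:S4I}) gives $\varphi\notin\Gamma^+$. Consider the length-one sequence $\bar\Gamma=(\Gamma)$; it trivially satisfies all the defining conditions of $W_\Sigma$, so $\bar\Gamma\in W_\Sigma$, and $\ell^+(\bar\Gamma)=\Gamma^+\cap\Sigma$ does not contain $\varphi$. By Lemma \ref{lemTruthS4I}, $(\mathcal M^{\sf S4I}_\Sigma,\bar\Gamma)\not\models\varphi$, whence $\mathcal M^{\sf S4I}_\Sigma\not\models\varphi$.

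There is no substantive obstacle here: the nontrivial content lies in the truth lemma (Lemma \ref{lemTruthS4I}) and in verifying that the shallow structure is an $\sf S4I$-model, both of which have already been established. The present lemma is purely a bookkeeping step packaging those facts together with canonical completeness, in direct analogy with the $\sf CS4$ and $\sf GS4$ cases.
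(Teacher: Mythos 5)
Your proposal is correct and follows essentially the same route as the paper: the forward direction uses deductive closure of the final theory in the sequence together with the truth lemma (Lemma \ref{lemTruthS4I}), and the converse uses canonical completeness to obtain a theory omitting $\varphi$ and packages it as a singleton sequence in $W_\Sigma$. No issues to report.
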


\begin{proof}
We know that $ {\sf S4I}  \vdash\varphi$ if and only if $ \mathcal M^{\sf S4I}   \models\varphi$.
Now, given $\Gamma \in W_\Sigma$, $\ell^+(\Gamma)=\Theta \cap \Sigma$, for some prime set $\Theta$ containing all derivable formulas, so $\varphi \in \ell^+(\Gamma)$. Lemma \ref{lemTruthS4I} yields that $ {\sf S4I}  \vdash\varphi$ implies $(\mathcal M^{\sf S4I}_\Sigma,\Gamma)   \models\varphi$.
Conversely, if ${\sf S4I} \not \vdash\varphi$ then there is $\Gamma \in W_c$ such that $\varphi\not\in \Gamma $, which setting $\Gamma'\! =\!(\Gamma)$ to be a singleton sequence yields $(\mathcal M^{\sf S4I}_\Sigma, \Gamma')   \not \models\varphi$.
\end{proof}

\begin{thm}
$\sf S4I$ has the finite model property.
\end{thm}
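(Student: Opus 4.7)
The plan is to imitate the proof of Theorem \ref{thmCS4fin} (and the corresponding finite model property proof for $\sf GS4$), reducing finite model property to the shallow model property and then invoking the machinery from Section \ref{sShallow}. All the technical work has essentially been carried out already in the preceding lemmas of this section, so the argument will just be a matter of assembling those ingredients.

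Concretely, I would proceed as follows. Suppose that ${\sf S4I}\not\vdash\varphi$, and let $\Sigma:={\rm sub}(\varphi)$ be the (finite) set of subformulas of $\varphi$, which is closed under subformulas. Form the model $\mathcal M^{\sf S4I}_\Sigma$ associated with $\Sigma$. By Lemma \ref{lemS4IShallow}, $\peq_\Sigma$ is a forest-like partial order on $W_\Sigma$ and every strict $\peq_\Sigma$-chain has length at most $|\Sigma|+1$, so $\mathcal M^{\sf S4I}_\Sigma$ is both shallow and forest-like. By Lemma \ref{lemIsS4IModel}, $\rel_\Sigma$ is forward and downward confluent, and together with the monotonicity of $V_\Sigma$ already noted in Lemma \ref{lemS4IShallow}, this certifies that $\mathcal M^{\sf S4I}_\Sigma$ is an $\sf S4I$-model. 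Finally, Lemma \ref{lemS4IShallowComp} applied to our assumption ${\sf S4I}\not\vdash\varphi$ gives $\mathcal M^{\sf S4I}_\Sigma\not\models\varphi$.

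At this stage, $\varphi$ is falsifiable in a shallow, forest-like $\sf S4I$-model, which is precisely the hypothesis of the second clause of Theorem \ref{thmShallowtoFin}. Invoking that theorem yields a finite $\sf S4I$-model falsifying $\varphi$, completing the proof of the finite model property.

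There is no real obstacle here; the only subtle point is remembering that for $\sf S4I$ one must collapse by strong $\Sigma$-bisimulation rather than ordinary $\Sigma$-bisimulation (since downward confluence is not preserved by the latter, cf.\ Lemma \ref{lemStrongQuotPreserve} and the remark after it), and that this in turn requires the intermediate shallow model to be forest-like so that Lemma \ref{lemStrongBisBound} applies. Both of these obligations were already discharged in the construction of $\mathcal M^{\sf S4I}_\Sigma$ and in the statement of Theorem \ref{thmShallowtoFin}, so the final theorem falls out as a direct corollary of the preceding development.
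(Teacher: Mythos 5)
Your proposal is correct and follows essentially the same route as the paper: reduce to the shallow, forest-like model property via Theorem \ref{thmShallowtoFin}, and certify that $\mathcal M^{\sf S4I}_\Sigma$ has the required properties using Lemmas \ref{lemS4IShallow}, \ref{lemIsS4IModel} and \ref{lemS4IShallowComp}. The additional remarks about strong $\Sigma$-bisimulation and the forest-like requirement correctly identify why the second clause of Theorem \ref{thmShallowtoFin} is the one needed here, matching the paper's treatment.
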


\begin{proof}
In view of Theorem \ref{thmShallowtoFin}, it suffices to show that $\sf S4I$ has the shallow, forest-like model property.
Fix a formula $\varphi$ and let $\Sigma$ be the set of subformulas of $\varphi$.
By Lemmas \ref{lemS4IShallow} and \ref{lemIsS4IModel}, $\mathcal M^{\sf S4I}_\Sigma$ is a shallow, forest-like $\sf S4I$-model, and by Lemma \ref{lemS4IShallowComp}, $\varphi$ is valid on $\mathcal M^{\sf S4I}_\Sigma$ iff ${\sf S4I}\vdash \varphi$, as needed.
\end{proof}

\section{Concluding remarks}

We have settled the long-standing problem of the finite model property for $\sf CS4$ and introduced two logics closely related to $\sf IS4$ which also enjoy the finite model property.
The logics we considered correspond to classes of models with a combination of the forward, backward and downward confluence properties.
There are a handful of other logics which may be defined in this fashion.
For example, consider the class of downward confluent frames.
While the axiom \ref{ax:cd} holds on frames which are forward {\em and} downward confluent, it is unclear how the class of downward confluence frames may be axiomatized.
On the other hand, it seems that the shallow model construction we have provided for $\sf S4I$ should readily adapt to frames satisfying only downward confluence.
In total, there are 8 logics that could be obtained by combining these conditions, and 8 more for their locally linear variants.

However, we have focused specifically on $\sf GS4$ and $\sf S4I$ because of their similarities with $\sf IS4$.
We expect that this work will shed some light on the challenges and possible strategies towards establishing the finite model property for $\sf IS4$.

The finite model property for $\sf GS4$ suggests that a proof that $\sf IS4$ does {\em not} have the finite model property could not be obtained from a straightforward adaptation of the proof that classical expanding products do not enjoy the finite {\em expanding model} property, i.e.~the finite model property with respect to their `intended' class of frames.
$\sf GS4$ frames are closely related to the expanding product ${\sf S4.3} \times^e {\sf S4}$ and $\sf IS4$ to ${\sf S4} \times^e {\sf S4}$.
Both of these classes of expanding products lack the finite expanding model property, as does ${\sf S4.3} \times^e {\sf S4.3}$; the same proof is used to establish these three facts, hence it does not involve branching of either accessibility relation \cite{pml}.
This suggests that a proof that $\sf IS4$ lacks the FMP would need a new construction which uses the non-linearity of $\peq$ in an essential way.

On the other hand, the finite model property for $\sf S4I$ might suggest that a similar approach may lead to the FMP for $\sf IS4$.
However, one should be careful, as the role of the accessibility relations $\peq$, $\sqsubseteq$ in the proof are not at all symmetric.
We have reduced the finite model property to the shallow model property, but it is not clear how one should define a shallow $\sf IS4$ model from an arbitrary one; in the absence of linearity, we have provided one construction that preserves forward confluence and one that preserves backward confluence.
Preserving both conditions seems to be much more difficult, if at all possible.

Finally, we mention that G\"odel modal logics are of independent interest (see e.g.~\cite{Caicedo2010,metcalfe}), with the G\"odel variants of the modal logics ${\sf K}$ and ${\sf S5}$ enjoying the finite model property~\cite{CaicedoMRR13}.
The techniques we have developed here may be adapted to treating other G\"odel modal logics.
One particularly interesting case study may be G\"odel linear temporal logic, in the spirit of intuitionistic temporal logic \cite{BalbianiToCL}.
\section*{Acknowledgments}
David Fern\'andez-Duque's research is partially funded by the SNSF-FWO Lead Agency Grant 200021L\_196176 (SNSF)\slash G0E2121N (FWO). Martin Di\'eguez's research is partially funded by the research project Amorcage/EL4HC of the program RFI Atlanstic 2020.
%\bibliographystyle{IEEEtran.bst}
%\bibliography{biblio}

\end{document}